\theoremstyle{definition}
\newtheorem{Theorem}{Theorem}
\newtheorem{Definition}{Definition}
\newtheorem{Proposition}[Theorem]{Proposition}
\newtheorem{Corollary}[Theorem]{Corollary}
\newtheorem{Example}{Example}
\newtheorem{appTheorem}{Theorem}[section]
\newtheorem{appDefinition}[appTheorem]{Definition}
\newtheorem{appLemma}[appTheorem]{Lemma}
\newtheorem{appProposition}[appTheorem]{Proposition}
\newtheorem{appCorollary}[appTheorem]{Corollary}
\newtheorem{appRemark}[appTheorem]{Remark}
\newcommand{\R}{\mathbb{R}}
\newcommand{\T}{\theta}
\newcommand{\bT}{\bar{\T}}
\newcommand{\hT}{\hat{\T}}
\newcommand{\la}{\lambda}
\newcommand{\ep}{\epsilon}
\newcommand{\al}{\alpha}
\newcommand{\ga}{\gamma}
\newcommand{\Del}{\Delta}
\newcommand{\irchi}[2]{\raisebox{\depth}{$#1\chi$}}
\DeclareRobustCommand{\rchi}{{\mathpalette\irchi\relax}}
\newcommand{\iirchi}[2]{\raisebox{0.8\depth}{$#1\chi$}}
\DeclareRobustCommand{\lchi}{{\mathpalette\iirchi\relax}}
\newcommand{\N}{\mathbb{N}}
\newcommand{\bxi}{\xi}
\newcommand{\bx}{\bar{x}}
\newcommand{\LO}{-\al \lchi}
\newcommand{\LOLA}{\textsc{lola}}
\newcommand{\LA}{\textsc{la}}
\DeclareMathOperator{\Rea}{Re}
\DeclareMathOperator{\diag}{diag}
\newcommand*{\tr}{%
  {\mathpalette\@tr{}}%
}
\newcommand*{\@tr}[2]{%
  \raisebox{\depth}{$\m@th#1\intercal$}%
}
\title{Stable Opponent Shaping \\ in Differentiable Games}
\author{Alistair Letcher$^1$ \ Jakob Foerster$^1$ \ David Balduzzi$^2$ \ Tim Rockt\"aschel$^3$ \ Shimon Whiteson$^1$ \\ $^1$University of Oxford \quad $^2$DeepMind \quad $^3$University College London}
\begin{document}

\maketitle

\begin{abstract}
A growing number of learning methods are actually \textit{differentiable games} whose players optimise multiple, interdependent objectives in parallel -- from GANs and intrinsic curiosity to multi-agent RL. Opponent shaping is a powerful approach to improve learning dynamics in these games, accounting for player influence on others' updates. Learning with Opponent-Learning Awareness (LOLA) is a recent algorithm that exploits this response and leads to cooperation in settings like the Iterated Prisoner's Dilemma. Although experimentally successful, we show that LOLA agents can exhibit `arrogant' behaviour directly at odds with convergence. In fact, remarkably few algorithms have theoretical guarantees applying across all ($n$-player, non-convex) games. In this paper we present Stable Opponent Shaping (SOS), a new method that interpolates between LOLA and a stable variant named LookAhead. We prove that LookAhead converges locally to equilibria and avoids strict saddles in \textit{all differentiable games}. SOS inherits these essential guarantees, while also shaping the learning of opponents and consistently either matching or outperforming LOLA experimentally.
\end{abstract}

\section{Introduction}\label{sec1}

\paragraph{Problem Setting.}
While machine learning has traditionally focused on optimising single objectives, generative adversarial nets (GANs) \citep{Goo} have showcased the potential of architectures dealing with \textit{multiple} interacting goals. They have since then proliferated substantially, including intrinsic curiosity \citep{Pat}, imaginative agents \citep{Rac}, synthetic gradients \citep{Jad}, hierarchical reinforcement learning (RL) \citep{Way, Vez} and multi-agent RL in general \citep{Bu}.

These can effectively be viewed as \textit{differentiable games} played by cooperating and competing \textit{agents} -- which may simply be different internal components of a single system, like the generator and discriminator in GANs. The difficulty is that each loss depends on \textit{all} parameters, including those of other agents. While gradient descent on single functions has been widely successful, converging to local minima under rather mild conditions \citep{Lee2}, its simultaneous generalisation can fail even in simple two-player, two-parameter zero-sum games. No algorithm has yet been shown to converge, even locally, in all differentiable games.

\paragraph{Related Work.}
Convergence has widely been studied in \textit{convex} $n$-player games, see especially \cite{Ros, Fac}. However, the recent success of \textit{non-convex} games exemplified by GANs calls for a better understanding of this general class where comparatively little is known. \cite{Mer} recently prove local convergence of no-regreat learning to \textit{variationally stable} equilibria, though under a number of regularity assumptions.

Conversely, a number of algorithms have been successful in the non-convex setting for \textit{restricted classes} of games. These include policy prediction in two-player two-action bimatrix games \citep{Zha}; WoLF in two-player two-action games \citep{Bow}; AWESOME in repeated games \citep{Con}; Optimistic Mirror Descent in two-player bilinear zero-sum games \citep{Das} and Consensus Optimisation (CO) in two-player zero-sum games \citep{Mes}. An important body of work including \cite{Heu, Nag} has also appeared for the specific case of GANs.

Working towards bridging this gap, some of the authors recently proposed Symplectic Gradient Adjustment (SGA), see \cite{Bal}. This algorithm is provably `attracted' to stable fixed points while `repelled' from unstable ones in \textit{all} differentiable games ($n$-player, non-convex). Nonetheless, these results are weaker than strict convergence guarantees. Moreover, SGA agents may act against their own self-interest by prioritising stability over individual loss. SGA was also discovered independently by \cite{Gem}, drawing on variational inequalities.

In a different direction, Learning with Opponent-Learning Awareness (LOLA) \citep{Foe} modifies the learning objective by predicting and differentiating through opponent learning steps. This is intuitively appealing and experimentally successful, encouraging cooperation in settings like the Iterated Prisoner's Dilemma (IPD) where more stable algorithms like SGA defect. However, LOLA has no guarantees of converging or even preserving fixed points of the game.

\paragraph{Contribution.}
We begin by constructing the first explicit \textit{tandem} game where LOLA agents adopt `arrogant' behaviour and converge to non-fixed points. We pinpoint the cause of failure and show that a natural variant named LookAhead (LA), discovered before LOLA by \citet{Zha}, successfully preserves fixed points. We then prove that LookAhead locally converges and avoids strict saddles in all differentiable games, filling a theoretical gap in multi-agent learning. This is enabled through a unified approach based on fixed-point iterations and dynamical systems. These techniques apply equally well to algorithms like CO and SGA, though this is not our present focus.

While LookAhead is theoretically robust, the shaping component endowing LOLA with a capacity to exploit opponent dynamics is lost. We solve this dilemma with an algorithm named Stable Opponent Shaping (SOS), trading between stability and exploitation by interpolating between LookAhead and LOLA. Using an intuitive and theoretically grounded criterion for this interpolation parameter, SOS inherits both strong convergence guarantees from LA and opponent shaping from LOLA.

On the experimental side, we show that SOS plays tit-for-tat in the IPD on par with LOLA, while all other methods mostly defect.  We display the practical consequences of our theoretical guarantees in the tandem game, where SOS always outperforms LOLA. Finally we implement a more involved GAN setup, testing for mode collapse and mode hopping when learning Gaussian mixture distributions. SOS successfully spreads mass across all Gaussians, at least matching dedicated algorithms like CO, while LA is significantly slower and simultaneous gradient descent fails entirely.

\section{Background}\label{sec2}

\subsection{Differentiable games}

We frame the problem of multi-agent learning as a game. Adapted from \cite{Bal}, the following definition insists only on differentiability for gradient-based methods to apply. This concept is strictly more general than stochastic games, whose parameters are usually restricted to action-state transition probabilities or functional approximations thereof.

\begin{Definition}
A \textit{differentiable game} is a set of $n$ players with parameters $\T = (\T^1, \ldots, \T^n) \in \R^{d}$ and twice continuously differentiable losses $L^i : \R^d \rightarrow \R$, where $\T^i \in \R^{d_i}$ for each $i$ and $\sum_i d_i = d$.
\end{Definition}

Crucially, note that each loss is a function of \textit{all} parameters. From the viewpoint of player $i$, parameters can be written as $\T = (\T^i, \T^{-i})$ where $\T^{-i}$ contains all other players' parameters. We do not make the common assumption that each $L^i$ is convex as a function of $\T^i$ alone, for any fixed opponent parameters $\T^{-i}$, nor do we restrict $\theta$ to the probability simplex -- though this restriction can be recovered via projection or sigmoid functions $\sigma : \R \rightarrow [0,1]$. If $n=1$, the `game' is simply to minimise a given loss function. In this case one can reach \textit{local minima} by (possibly stochastic) gradient descent (GD). For arbitrary $n$, the standard solution concept is that of \textit{Nash equilibria}.

\begin{Definition}
A point $\bar{\T} \in \R^d$ is a (local) Nash equilibrium if for each $i$, there are neighbourhoods $U_i$ of $\bT^i$ such that $L^i(\T^i, \bar{\T}^{-i}) \geq L^i(\bar{\T})$ for all $\T^i \in U_i$. In other words, each player's strategy is a local \textit{best response} to current opponent strategies.
\end{Definition}

We write $\nabla_{i}L^k = \nabla_{\T^i} L^k$ and $\nabla_{ij} L^k = \nabla_{\T^j} \nabla_{\T^i} L^k$ for any $i,j,k$. Define the \textit{simultaneous gradient} of the game as the concatenation of each player's gradient,
\[ \xi = \left( \nabla_{1} L^1, \ldots, \nabla_{n} L^n \right)^\tr \in \R^d \,. \]
The $i$th component of $\xi$ is the direction of greatest increase in $L^i$ with respect to $\T^i$. If each agent minimises their loss independently from others, they perform GD on their component $\nabla_i L^i$ with learning rate $\al_i$. Hence, the parameter update for all agents is given by $\T \leftarrow \T - \al \odot \xi$, where $\al = (\al_1, \ldots, \al_n)^\tr$ and $\odot$ is element-wise multiplication. This is also called naive learning (NL), reducing to $\T \leftarrow \T - \al \xi$ if agents have the same learning rate. This is assumed for notational simplicity, though irrelevant to our results. The following example shows that NL can fail to converge.

\begin{Example}
Consider $L^{1/2} = \pm xy$, where players control the $x$ and $y$ parameters respectively. The origin is a (global and unique) Nash equilibrium. The simultaneous gradient is $\xi = (y, -x)$ and cycles around the origin. Explicitly, a gradient step from $(x,y)$ yields
\[ (x,y) \leftarrow (x,y) - \al(y, -x) = (x-\al y, y+\al x) \]
which has distance from the origin $(1+\al^2)(x^2 + y^2) > (x^2 + y^2)$ for any $\al > 0$ and $(x,y) \neq 0$. It follows that agents diverge away from the origin for any $\al > 0$. The cause of failure is that $\xi$ is not the gradient of a single function, implying that each agent's loss is inherently dependent on others. This results in a contradiction between the non-stationarity of each agent, and the optimisation of each loss independently from others. Failure of convergence in this simple two-player zero-sum game shows that gradient descent does not generalise well to differentiable games. We consider an alternative solution concept to Nash equilibria before introducing LOLA.
\end{Example}

\subsection{Stable fixed points}

Consider the game given by $L^1 = L^2 = xy$ where players control the $x$ and $y$ parameters respectively. The optimal solution is $(x,y) \to \pm(\infty, -\infty)$, since then $L^1 = L^2 \to -\infty$. However the origin is a \textit{global Nash equilibrium}, while also a \textit{saddle point} of $xy$. It is highly undesirable to converge to the origin in this game, since infinitely better losses can be reached in the anti-diagonal direction. In this light, Nash equilibria cannot be the right solution concept to aim for in multi-agent learning. To define stable fixed points, first introduce the `Hessian' of the game as the block matrix

\[ H = \nabla \xi = \begin{pmatrix}
\nabla_{11} L^1 & \cdots & \nabla_{1n} L^1 \\
\vdots & \ddots & \vdots \\
\nabla_{n1} L^n & \cdots & \nabla_{nn} L^n
\end{pmatrix} \in \R^{d\times d} \,. \]

This can equivalently be viewed as the Jacobian of the vector field $\xi$. Importantly, note that $H$ is not symmetric in general unless $n = 1$, in which case we recover the usual Hessian $H = \nabla^2 L$.

\begin{Definition}
A point $\bT$ is a \textit{fixed point} if $\xi(\bT) = 0$. It is \textit{stable} if $H(\bT) \succeq 0$, \textit{unstable} if $H(\bT) \prec 0$ and a \textit{strict saddle} if $H(\bT)$ has an eigenvalue with negative real part.
\end{Definition}

The name `fixed point' is coherent with GD, since $\xi(\bT) = 0$ implies a fixed update $\bT \leftarrow \bT - \al \xi(\bT) = \bT$. Though Nash equilibria were shown to be inadequate above, it is not obvious that stable fixed points (SFPs) are a better solution concept. In \cref{appa} we provide intuition for why SFPs are both closer to local minima in the context of multi-loss optimisation, and more tractable for convergence proofs. Moreover, this definition is an improved variant on that in \cite{Bal}, assuming positive semi-definiteness only at $\bT$ instead of holding in a neighbourhood. This makes the class of SFPs as large as possible, while sufficient for all our theoretical results.

Assuming invertibility of $H(\bT)$ at SFPs is crucial to all convergence results in this paper. The same assumption is present in related work including \cite{Mes}, and cannot be avoided. Even for single losses, a fixed point with singular Hessian can be a local minimum, maximum, or saddle point. Invertibility is thus necessary to ensure that SFPs really are `local minima'. This is omitted from now on. Finally note that unstable fixed points are a \textit{subset} of strict saddles, making \cref{prop5.3} both stronger and more general than results for SGA by \cite{Bal}.


\subsection{Learning with opponent-learning awareness (LOLA)}

Accounting for nonstationarity, Learning with Opponent-Learning Awareness (LOLA) modifies the learning objective by predicting and differentiating through opponent learning steps \citep{Foe}. For simplicity, if $n=2$ then agent 1 optimises $L^1(\T^1, \T^2 + \Delta \T^2)$ with respect to $\T^1$, where $\Delta \T^2$ is the predicted learning step for agent 2. \cite{Foe} assume that opponents are naive learners, namely $\Delta \T^2 = - \al_2 \nabla_2 L^2$. After first-order Taylor expansion, the loss is approximately given by $L^1 + \nabla_{2} L^1 \cdot \Del \T^2$. By minimising this quantity, agent 1 learns parameters that align the opponent learning step $\Del \T^2$ with the direction of greatest decrease in $L^1$, exploiting opponent dynamics to further reduce one's losses. Differentiating with respect to $\T^1$, the adjustment is
\[ \nabla_{1} L^1 + \left(\nabla_{21} L^1\right)^\tr \Del \T^2 + \left( \nabla_{1}\Del \T^2 \right)^\tr \nabla_{2} L^1 \,. \]
By explicitly differentiating through $\Del \T^2$ in the rightmost term, LOLA agents actively shape opponent learning. This has proven effective in reaching cooperative equilibria in multi-agent learning, finding success in a number of games including tit-for-tat in the IPD. The middle term above was originally dropped by the authors because ``LOLA focuses on this shaping of the learning direction of the opponent''. We choose \textit{not} to eliminate this term, as also inherent in LOLA-DiCE \citep{Foe2}. Preserving both terms will in fact be key to developing \textit{stable} opponent shaping.

First we formulate $n$-player LOLA in vectorial form. Let $H_d$ and $H_o$ be the matrices of diagonal and anti-diagonal blocks of $H$, so that $H = H_d + H_o$. Also define $L = (L^1, \ldots, L^n)$ and the operator $\diag : \R^{d \times n} \rightarrow \R^d$ constructing a vector from the block matrix diagonal, namely $\diag(M)_i = M_{ii}$.

\begin{Proposition}[\cref{appb}]
Writing $\rchi = \diag(H_o^\tr \nabla L)$, the LOLA gradient adjustment is
\[ \textsc{lola} = (I - \al H_o)\xi - \al \lchi \,. \]
\end{Proposition}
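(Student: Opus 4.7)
The plan is a direct computation in the $n$-player setting, verifying block-by-block that the per-player LOLA gradient matches $(I-\alpha H_o)\xi - \alpha\chi$. Assuming all opponents are naive, $\Delta\theta^j = -\alpha\nabla_j L^j$ for every $j\neq i$, and a first-order Taylor expansion of $L^i(\theta^i,\theta^{-i}+\Delta\theta^{-i})$ around $\theta$ gives $L^i(\theta)+\sum_{j\neq i}(\nabla_j L^i)^\top\Delta\theta^j$; the LOLA update of player $i$ is the $\theta^i$-gradient of this scalar.

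Applying the product rule for inner products, and noting that both $\nabla_j L^i$ and $\Delta\theta^j$ depend on $\theta^i$,
\[
\nabla_i\bigl[(\nabla_j L^i)^\top\Delta\theta^j\bigr] = (\nabla_{ij} L^i)\,\Delta\theta^j + (\nabla_i\Delta\theta^j)^\top\nabla_j L^i.
\]
Substituting $\Delta\theta^j=-\alpha\nabla_j L^j$ and invoking Schwarz's theorem in the form $\nabla_i\nabla_j L^j=(\nabla_{ij} L^j)^\top$ converts the second summand into $-\alpha(\nabla_{ij} L^j)\nabla_j L^i$, yielding the $i$-th block of the LOLA update as
\[
\nabla_i L^i - \alpha\sum_{j\neq i}(\nabla_{ij} L^i)\nabla_j L^j - \alpha\sum_{j\neq i}(\nabla_{ij} L^j)\nabla_j L^i.
\]

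Read these as blocks of the target identity. The first is block $i$ of $\xi$. The middle sum is $(H_o\xi)_i$ because $H_o$ has off-diagonal blocks $(H_o)_{ij}=\nabla_{ij} L^i$ and zero diagonal blocks. For the last sum, the $i$-th block of $H_o^\top\nabla L^k$ equals $\sum_{j\neq i}(\nabla_{ji} L^j)^\top\nabla_j L^k=\sum_{j\neq i}(\nabla_{ij} L^j)\nabla_j L^k$; specialising $k=i$ via the operator $\diag:\R^{d\times n}\to\R^d$ recovers exactly $\chi_i$. Concatenating the three identities over all $i$ gives $\textsc{lola}=(I-\alpha H_o)\xi-\alpha\chi$.

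The main obstacle is notational bookkeeping: block shapes, transposes, and the action of $\diag$ on $\R^{d\times n}$ must be tracked carefully so that the compact symbolic identity $\chi=\diag(H_o^\top\nabla L)$ genuinely aligns with the per-player sum produced by differentiating through each opponent's predicted update. Everything else is elementary multivariate calculus.
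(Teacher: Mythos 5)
Your proposal is correct and follows essentially the same route as the paper's proof in Appendix \ref{appb}: first-order Taylor expansion of the look-ahead objective, product rule on $(\nabla_j L^i)^\tr \Delta\T^j$ keeping both the $\nabla_{ij}L^i\,\Delta\T^j$ and $(\nabla_i\Delta\T^j)^\tr\nabla_j L^i$ terms, and block-wise identification with $\xi$, $H_o\xi$ and $\diag(H_o^\tr\nabla L)$. The only cosmetic difference is that you invoke Schwarz's theorem explicitly to flip the transposes, which the paper does silently.
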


While experimentally successful, LOLA fails to preserve fixed points $\bT$ of the game since
\[ (I-\al H_o)\xi(\bT) - \al \lchi(\bT) = -\al \lchi(\bT) \neq 0 \]
in general. Even if $\bT$ is a Nash equilibrium, the update $\bT \leftarrow \bT - \al \textsc{lola} \neq \bT$ can push them away despite parameters being optimal. This may worsen the losses for all agents, as in the game below.

\begin{Example}[Tandem]
Imagine a tandem controlled by agents facing opposite directions, who feed $x$ and $y$ force into their pedals respectively. Negative numbers correspond to pedalling backwards. 

\begin{minipage}{0.75\textwidth}
Moving coherently requires $x \approx -y$, embodied by a quadratic loss $(x+y)^2$. However it is easier for agents to pedal forwards, translated by linear losses $-2x$ and $-2y$. The game is thus given by $L^1(x,y) = (x+y)^2-2x$ and $L^2(x,y) = (x+y)^2-2y$. These sub-goals are incompatible, so agents cannot simply accelerate forwards. The SFPs are given by $\{ x+y = 1 \}$. Computing $\lchi(x, 1-x) = (4, 4) \neq 0$, none of these are preserved by LOLA. Instead, we show in Appendix \ref{appc} that LOLA can only converge to sub-optimal scenarios with worse losses for \textit{both} agents, for \textit{any} $\al$.
\end{minipage} \hfill
\begin{minipage}{0.22\textwidth}
\captionsetup{justification=centering}
\includegraphics[width=0.95\textwidth]{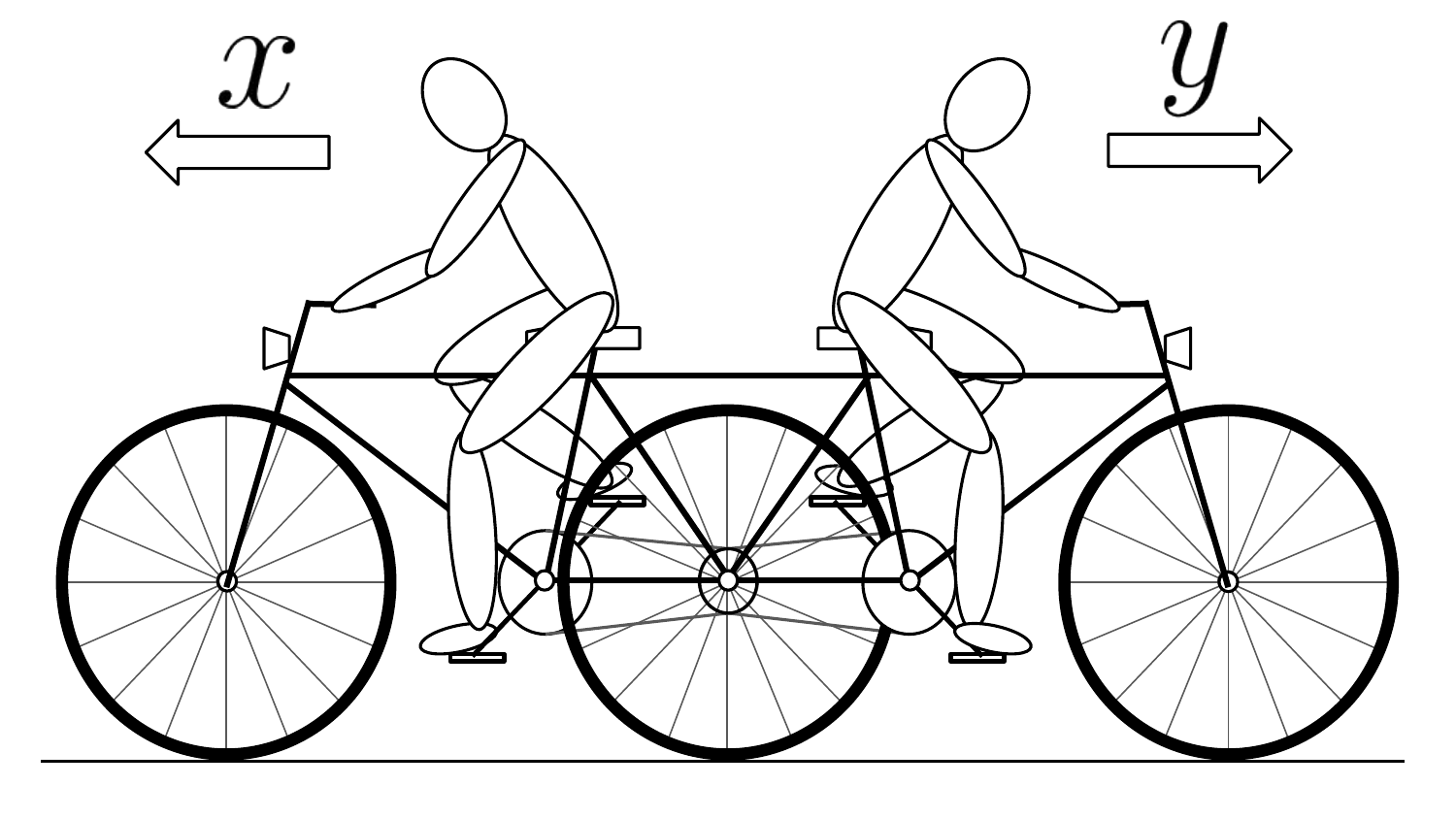}
\label{fig0}
\vspace{-2pt}
\captionof{figure}{Illustration of the tandem game.}
\end{minipage} \hfill

Intuitively, the root of failure is that LOLA agents try to shape opponent learning and enforce compliance by accelerating forwards, assuming a dynamic response from their opponent. The other agent does the same, so they become `arrogant' and suffer by pushing strongly in opposite directions.
\end{Example}

\section{Method}\label{sec4}

\subsection{LookAhead}

The shaping term $\rchi$ prevents LOLA from preserving fixed points. Consider removing this component entirely, giving $(I-\al H_o)\xi$. This variant preserves fixed points, but what does it mean from the perspective of each agent? Note that LOLA optimises $L^1(\T^1, \T^2 + \Delta \T^2)$ with respect to $\T^1$, while $\Delta \T^2$ is a function of $\T^1$. In other words, we assume that our opponent's learning step depends on our \textit{current} optimisation with respect to $\T^1$. This is inaccurate, since opponents cannot see our updated parameters until the \textit{next} step. Instead, assume we optimise $L^1(\T^1, \hT^2 + \Delta \T^2(\hT^1, \hT^2))$ where $\hT^1, \hT^2$ are the \textit{current} parameters. After Taylor expansion, the gradient with respect to $\T^1$ is given by
\[ \nabla_{1} L^1 + \left(\nabla_{21} L^1\right)^\tr \Del \T^2 \]
since $\Delta \T^2(\hT^1, \hT^2)$ does not depend on $\T^1$. In vectorial form, we recover the variant $(I-\al H_o)\xi$ since the shaping term corresponds precisely to differentiating through $\Delta \T^2$. We name this LookAhead, which was discovered \textit{before} LOLA by \citet{Zha} though not explicitly named. Using the stop-gradient operator $\bot$\footnote{This operator is implemented in TensorFlow as \texttt{stop\_gradient} and in PyTorch as \texttt{detach}.}, this can be reformulated as optimising $L^1(\T^1, \T^2 + \bot \Delta \T^2)$ where $\bot$ prevents gradient flowing from $\Delta \T^2$ upon differentiation.

The main result of \citet{Zha} is that LookAhead converges to Nash equilibria in the small class of two-player, two-action bimatrix games. We will prove local convergence to SFP and non-convergence to strict saddles in \textit{all differentiable games}. On the other hand, by discarding the problematic shaping term, we also eliminated LOLA's capacity to exploit opponent dynamics and encourage cooperation. This will be witnessed in the IPD, where LookAhead agents mostly defect.

\subsection{Stable opponent shaping (SOS)}

We propose Stable Opponent Shaping (SOS), an algorithm preserving both advantages at once. Define the \textit{partial stop-gradient} operator $\bot^p \coloneqq p \bot + (1-p) I$, where $I$ is the identity and $p$ stands for \textit{partial}. A $p$-LOLA agent optimises the modified objective
\[ L^1(\T^1, \T^2 + \bot^{1-p} \Delta \T^2, \ldots, \T^n + \bot^{1-p} \Delta \T^n) \,, \]
collapsing to LookAhead at $p=0$ and LOLA at $p=1$. The resulting gradient is given by
\[ \xi_p \coloneqq p\textsc{-lola} = (I-\al H_o)\xi - p \al \rchi \]
with $\xi_0 = \LA$. We obtain an algorithm trading between shaping and stability as a function of $p$. Note however that preservation of fixed points only holds if $p$ is infinitesimal, in which case $p$-LOLA is almost identical to LookAhead -- losing the very purpose of interpolation. Instead we propose a two-part criterion for $p$ at each learning step, through which all guarantees descend.

First choose $p$ such that $\xi_p$ points in the same direction as LookAhead. This will not be enough to prove convergence itself, but prevents arrogant behaviour by ensuring convergence \textit{only} to fixed points. Formally, the first criterion is given by $\langle \xi_p, \xi_0 \rangle \geq 0$. If $\langle -\al \lchi, \xi_0 \rangle \geq 0$ then $\langle \xi_p, \xi_0 \rangle \geq 0$ automatically, so we choose $p = 1$ for maximal shaping. Otherwise choose
\[ p = \min \left\{ 1, \frac{-a\lVert \xi_0 \rVert^2}{\langle -\al \lchi, \xi_0 \rangle} \right\} \]
with any hyperparameter $0 < a < 1$. This guarantees a positive inner product
\[ \langle \xi_p, \xi_0 \rangle = p \langle -\al \lchi, \xi_0 \rangle + \lVert \xi_0 \rVert^2 \geq -a\lVert \xi_0 \rVert^2 + \lVert \xi_0 \rVert^2 = \lVert \xi_0 \rVert^2(1-a) > 0 \,. \]
We complement this with a second criterion ensuring local convergence. The idea is to scale $p$ by a function of $\lVert \xi \rVert$ if $\lVert \xi \rVert$ is small enough, which certainly holds in neighbourhoods of fixed points. Let $0 < b < 1$ be a hyperparameter and take $p = \lVert \xi \rVert^2$ if $\lVert \xi \rVert < b$, otherwise $p = 1$. Choosing $p_1$ and $p_2$ according to these criteria, the two-part criterion is $p = \min\{p_1, p_2\}$. SOS is obtained by combining $p$-LOLA with this criterion, as summarised in Algorithm \ref{alg1}. Crucially, all theoretical results in the next section are independent from the choice of hyperparameters $a$ and $b$.

\IncMargin{1.5em}
\begin{algorithm}[h]
\SetKwIF{If}{ElseIf}{Else}{if}{then}{else if}{else}{}
\SetKwInOut{Input}{input}\SetKwInOut{Output}{output}
Initialise $\T$ randomly and fix hyperparameters $a,b \in (0,1)$. \\
\While{not done}{
Compute $\xi_0 = (I - \al H_o)\xi$ and $\lchi = \diag(H_o^\tr \nabla L)$ at $\T$. \\
\algorithmicif \quad $\langle -\al \lchi, \xi_0 \rangle > 0$ \quad \algorithmicthen \quad $p_1 = 1$ \quad \algorithmicelse \quad $p_1 = \min\left\{1, \frac{-a\lVert \xi_0 \rVert^2}{\langle -\al \lchi, \xi_0 \rangle}\right\}$ \\
\algorithmicif \quad $\lVert \xi \rVert < b$ \quad \algorithmicthen \quad $p_2 = \lVert \xi \rVert^2$ \quad \algorithmicelse \quad $p_2 = 1$ \\
Let $p = \min\{ p_1, p_2 \}$, compute $\xi_p = \xi_0 - p\al \lchi$ and assign $\T \leftarrow \T - \al \xi_p$.}
\caption{Stable Opponent Shaping} \label{alg1}
\end{algorithm}
\DecMargin{1.5em}

\section{Theoretical Results}\label{sec5}

Our central theoretical contribution is that LookAhead and SOS converge locally to SFP and avoid strict saddles in \textit{all differentiable games}. Since the learning gradients involve second-order Hessian terms, our results assume thrice continuously differentiable losses (omitted hereafter). Losses which are $C^2$ but not $C^3$ are very degenerate, so this is a mild assumption. Statements made about SOS crucially hold for \textit{any} hyperparameters $a,b \in (0,1)$. See \cref{appd,,appe} for detailed proofs.

\subsection{Local convergence to stable fixed points}

Convergence is proved using Ostrowski's Theorem. This reduces convergence of a gradient adjustment $g$ to positive stability (eigenvalues with positive real part) of $\nabla g$ at stable fixed points.

\begin{Theorem}\label{prop5.0}
Let $H \succeq 0$ be invertible with symmetric diagonal blocks. Then there exists $\ep > 0$ such that $(I-\al H_o)H$ is positive stable for all $0 < \al < \ep$.
\end{Theorem}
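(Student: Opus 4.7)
The plan is a first-order eigenvalue perturbation argument around $\alpha = 0$, at which point $M(\alpha) := (I - \alpha H_o) H$ reduces to $H$. Since $H \succeq 0$ means $H_s := \tfrac{1}{2}(H + H^T) \succeq 0$, every eigenvalue of $H$ has nonnegative real part: if $Hx = \lambda x$ with $x \neq 0$, then $\Re(\lambda)\,\|x\|^2 = \Re(x^* H x) = x^* H_s x \geq 0$. Invertibility of $H$ rules out $\lambda = 0$. Eigenvalues with $\Re(\lambda) > 0$ automatically remain in the right half-plane for small $\alpha$ by continuity of the spectrum, so the entire difficulty lies in the purely imaginary eigenvalues $\lambda = i\mu$ of $H$ with $\mu \in \R \setminus \{0\}$.

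For such $\lambda = i\mu$, the identity $x^* H_s x = \Re(x^* H x) = 0$ combined with $H_s \succeq 0$ forces $H_s x = 0$. Writing $x = u + i v$ with $u, v \in \R^d$ and taking real and imaginary parts yields $H_s u = H_s v = 0$ together with $H u = -\mu v$ and $H v = \mu u$. Since $H$ is real, $H^T x = 2 H_s x - H x = -i\mu x$, so $\bar{x}$ is a left eigenvector of $H$ for the eigenvalue $i\mu$, with $\bar{x}^T x = \|x\|^2$. Assuming for now that $i\mu$ is a simple eigenvalue, standard first-order perturbation theory applied to $M(\alpha) = H + \alpha(-H_o H)$ gives
\[ \lambda'(0) \;=\; -\frac{\bar{x}^T H_o H x}{\|x\|^2} \;=\; -\frac{i\mu \, \bar{x}^T H_o x}{\|x\|^2}. \]

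The hypothesis that the diagonal blocks of $H$ are symmetric (hence $H_d^T = H_d$) now enters decisively. Writing $H_o = H - H_d$ and using $Hx = i\mu x$ gives $\bar{x}^T H_o x = i\mu\|x\|^2 - \bar{x}^T H_d x$. Expanding $\bar{x}^T H_d x$ in $u, v$, the symmetry of $H_d$ forces $u^T H_d v = v^T H_d u$, so the imaginary cross terms cancel and $\bar{x}^T H_d x = u^T H_d u + v^T H_d v \in \R$. Substituting back yields $\Re(\lambda'(0)) = \mu^2 > 0$, so every purely imaginary eigenvalue of $H$ is pushed strictly into the right half-plane by the perturbation. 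A compactness/continuity argument (suppose for contradiction that $\alpha_k \to 0^+$ and $M(\alpha_k)$ has an eigenvalue with nonpositive real part for each $k$, then extract a convergent subsequence whose limit is an eigenvalue of $H$) then produces the desired $\epsilon > 0$. The main obstacle is exactly the computation $\Re(\lambda'(0)) = \mu^2$, which is where the block-symmetry hypothesis is indispensable; a secondary technical point is that a purely imaginary eigenvalue of $H$ need not be simple, which is handled either by Kato-style holomorphic perturbation theory or by applying the same pairing argument block by block in the real Jordan form.
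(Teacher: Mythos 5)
Your route is genuinely different from the paper's: you perturb the spectrum of $H$ directly at $\alpha=0$, whereas the paper writes $G=(I-\alpha H_o)H=(I+\alpha H_d)H-\alpha H^2$, conjugates by $M=(I+\alpha H_d)^{1/2}$, shows the conjugate is positive \emph{definite} as a quadratic form for small $\alpha$ (uniformly over the unit sphere via a compactness lemma), and gets positive stability by similarity. Your core computation is correct: for a simple purely imaginary eigenvalue $i\mu$ one indeed finds $\Rea \lambda'(0)=\mu^2>0$, and both arguments use the block-symmetry of $H_d$ at the analogous place (you to make $\bar{x}^\tr H_d x$ real, the paper to extract a symmetric square root of $I+\alpha H_d$). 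Your approach, once completed, even yields quantitative information (the rate $\mu^2$ at which critical eigenvalues move right) that the paper's quadratic-form argument does not.

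There is, however, a genuine gap at the multiplicity step, and the remedies you gesture at do not close it. The formula $\lambda'(0)=y^\tr M'(0)x/(y^\tr x)$ is valid only for simple (or, suitably reinterpreted, semisimple) eigenvalues. If $i\mu$ were a \emph{defective} eigenvalue of $H$ with a Jordan block of size $m\geq 2$, Kato-style theory gives a Puiseux expansion $\lambda(\alpha)=i\mu+c\,\alpha^{1/m}\omega+\cdots$ over the $m$-th roots of unity $\omega$, and some branches would then enter the left half-plane at rate $\alpha^{1/m}\gg\alpha$; so the theorem could genuinely fail through this mechanism, and ``block by block in the real Jordan form'' is not an argument. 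What rescues you is a lemma you must actually state and prove: a purely imaginary eigenvalue of a real matrix with positive semidefinite symmetric part is semisimple. (Sketch: if $(H-i\mu)y=x\neq 0$ with $Hx=i\mu x$, then $H_sx=0$ gives $x^*Hy=x^*H_ay$; from $y^*Hx=i\mu y^*x$ and antisymmetry of $H_a$ one gets $x^*H_ay=i\mu x^*y$, while $x^*Hy=i\mu x^*y+\lVert x\rVert^2$, forcing $x=0$.) Granting semisimplicity, the first-order corrections are the eigenvalues of the compression of $-H_oH$ to the eigenspace, not a single Rayleigh quotient; your computation still suffices because it shows $\Rea\bigl(\bar{x}^\tr(-H_oH)x\bigr)=\mu^2\lVert x\rVert^2$ for \emph{every} $x$ in the eigenspace, so the Hermitian part of that compression is $\mu^2I$ and all its eigenvalues have real part $\mu^2>0$. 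With these two repairs your closing subsequence/compactness argument goes through.
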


This type of result would usually be proved either by analytical means showing positive definiteness and hence positive stability, or direct eigenvalue analysis. We show in \cref{appd} that $(I-\al H_o)H$ is not necessarily positive definite, while there is no necessary relationship between eigenpairs of $H$ and $H_o$. This makes our theorem all the more interesting and non-trivial. We use a similarity transformation trick to circumvent the dual obstacle, allowing for analysis of positive definiteness with respect to a new inner product. We obtain positive stability by invariance under change of basis.

\begin{Corollary}
LookAhead converges locally to stable fixed points for $\al > 0$ sufficiently small.
\end{Corollary}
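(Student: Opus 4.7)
The plan is to apply Ostrowski's theorem to the LookAhead update map $F(\T) = \T - \al \xi_0(\T)$, where $\xi_0 = (I - \al H_o)\xi$. Ostrowski's theorem states that if $\bT$ is a fixed point of a $C^1$ map $F$ and the spectral radius $\rho(\nabla F(\bT)) < 1$, then iterates starting in a sufficiently small neighbourhood of $\bT$ converge to $\bT$. Since any stable fixed point of the game is a fixed point of $F$ (because $\xi(\bT) = 0$ implies $\xi_0(\bT) = 0$), the task reduces to bounding $\rho(I - \al \nabla \xi_0(\bT))$ at an arbitrary SFP.

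The first step is to compute $\nabla \xi_0$ at $\bT$. By the product rule, $\nabla \xi_0 = (I - \al H_o) H + D$, where $D$ collects the terms arising from differentiating the entries of $H_o$ and is therefore a linear combination of products of third-order derivatives of the $L^i$ with components of $\xi$. Because $\xi(\bT) = 0$, the tensor $D$ vanishes at $\bT$, leaving the clean identity $\nabla \xi_0(\bT) = (I - \al H_o(\bT)) H(\bT)$. The diagonal blocks of $H$ are the ordinary Hessians $\nabla_{ii} L^i$, which are symmetric by $C^2$ regularity, and by hypothesis $H(\bT) \succeq 0$ and invertible. Thus \cref{prop5.0} applies and furnishes some $\ep_1 > 0$ such that $\nabla \xi_0(\bT)$ is positive stable for every $0 < \al < \ep_1$.

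The remaining step is to translate positive stability of $\nabla \xi_0(\bT)$ into a spectral radius bound on $\nabla F(\bT) = I - \al \nabla \xi_0(\bT)$. If $\la = a + ib$ is any eigenvalue of $\nabla \xi_0(\bT)$ with $a > 0$, then $1 - \al \la$ is the corresponding eigenvalue of $\nabla F(\bT)$, with modulus squared $(1 - \al a)^2 + \al^2 b^2 = 1 - 2\al a + \al^2(a^2 + b^2)$, which is strictly less than $1$ whenever $\al < 2a/(a^2 + b^2)$. The spectrum is finite, so a single $\ep_2 > 0$ makes this hold uniformly across all eigenvalues. Taking $\al < \min\{\ep_1, \ep_2\}$ gives $\rho(\nabla F(\bT)) < 1$, and Ostrowski's theorem concludes the proof.

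The conceptual work has already been absorbed by \cref{prop5.0}; the remainder is a routine two-stage smallness argument in $\al$, together with the small but essential observation that the $D$ term vanishes at $\bT$ because it is linear in $\xi(\bT) = 0$. The only minor subtlety is the double role of $\al$ --- appearing both inside $\xi_0$ and in the outer gradient step --- which is handled by picking $\al$ smaller than both thresholds simultaneously; I do not anticipate any genuine obstacle beyond this bookkeeping.
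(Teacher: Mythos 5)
Your proposal is essentially the paper's own argument: the paper factors the Ostrowski step, the product-rule identity $\nabla[(I-\al H_o)\xi](\bT) = (I-\al H_o)H(\bT)$ (valid because $\xi(\bT)=0$), and the eigenvalue map $\la \mapsto 1-\al\la$ into a separate corollary (\cref{prop2.0}) and then concludes by citing that together with \cref{prop5.0}, exactly as you do inline. The one point you flag but wave off --- that your threshold $\ep_2$ depends on $\al$ because the eigenvalues of $(I-\al H_o)H$ do, so ``$\al < \min\{\ep_1,\ep_2\}$'' is not literally a two-threshold argument --- is a genuine wrinkle, but the paper's proof elides it in the same way by applying \cref{prop2.0} with an $\al$-dependent $X$, so your write-up matches the paper's in both structure and rigour.
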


%

Using the second criterion for $p$, we prove local convergence of SOS in all differentiable games despite the presence of a shaping term (unlike LOLA).

\begin{Theorem}\label{prop5.1}
SOS converges locally to stable fixed points for $\al > 0$ sufficiently small.
\end{Theorem}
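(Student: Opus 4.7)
The plan is to apply Ostrowski's theorem to the SOS iteration $F(\T) = \T - \al \xi_p(\T)$, reducing the argument to the LookAhead analysis by showing that the shaping contribution $p\al\lchi$ is second-order at every stable fixed point $\bT$. The lever is the second criterion in the definition of $p$, which forces $p$ to vanish quadratically with $\T - \bT$.

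First I would check that $\bT$ remains a fixed point of SOS: since $\xi(\bT) = 0$ we have $\|\xi(\bT)\| = 0 < b$, so the second criterion activates and $p(\bT) \leq p_2(\bT) = 0$, hence $\xi_p(\bT) = \xi_0(\bT) = (I - \al H_o)\xi(\bT) = 0$. Next I would argue that $p = \|\xi\|^2$ throughout a punctured neighbourhood of $\bT$. Invertibility of $H(\bT)$ combined with $\nabla\xi(\bT) = H(\bT)$ yields $\|\xi\| \asymp \|\T - \bT\| =: r$, and similarly $\nabla\xi_0(\bT) = (I - \al H_o)H$ (which is positive stable, hence invertible, for small $\al$ by \cref{prop5.0}) gives $\|\xi_0\| \asymp r$. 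Meanwhile $\lchi$ is bounded by continuity. Thus $p_2 = \|\xi\|^2 = O(r^2)$, while either $p_1 = 1$ or by Cauchy--Schwarz $p_1 \geq a\|\xi_0\|^2 / (\al \|\lchi\| \|\xi_0\|) = \Omega(r)$. In both cases $p_2 \leq p_1$ for $r$ small enough, so $p = p_2 = \|\xi\|^2$ near $\bT$.

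With this reduction, writing $\xi_p = \xi_0 - \al\|\xi\|^2 \lchi$ near $\bT$, every term in the Jacobian of the shaping contribution contains either $\nabla(\|\xi\|^2) = 2(\nabla\xi)^\tr \xi$ or the scalar $\|\xi\|^2$, both of which vanish at $\bT$. Hence $\nabla \xi_p(\bT) = \nabla\xi_0(\bT) = (I - \al H_o)H$, which is positive stable for $0 < \al < \ep$ sufficiently small by \cref{prop5.0}. The Jacobian of $F$ at $\bT$ is therefore $I - \al \nabla \xi_p(\bT)$, whose spectral radius is strictly less than one for $\al$ small. Ostrowski's theorem then delivers local convergence of SOS to $\bT$.

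The main obstacle is the second step: $p$ is defined through a nested $\min$ and the inner branch of $p_1$ splits on the sign of $\langle -\al\lchi, \xi_0\rangle$, so the argument must handle all cases uniformly in $\T$. The order comparison $p_1 = \Omega(r)$ versus $p_2 = O(r^2)$ resolves this cleanly, but crucially uses the SFP invertibility of $H$ to establish $\|\xi_0\| = \Theta(r)$. Independence from the hyperparameters $a, b \in (0,1)$ is manifest, since neither affects the asymptotic order comparison. Once this reduction is in place, the rest reduces to the LookAhead argument already licensed by \cref{prop5.0}.
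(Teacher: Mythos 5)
Your proposal is correct and follows essentially the same route as the paper: establish that the second criterion forces $p = \lVert \xi \rVert^2$ in a neighbourhood of $\bT$ (your order comparison $p_1 = \Omega(r)$ versus $p_2 = O(r^2)$ is a rephrasing of the paper's \cref{lem}), observe that $p(\bT) = 0$ and $\nabla p(\bT) = 2H^\tr\xi(\bT) = 0$ so that $\nabla \xi_p(\bT) = (I-\al H_o)H(\bT)$, and conclude via \cref{prop5.0} and Ostrowski exactly as for LookAhead. No gaps.
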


\subsection{Avoiding strict saddles}\label{sec5.2}

Using the first criterion for $p$, we prove that SOS only converges to fixed points (unlike LOLA).

\begin{Proposition}\label{prop5.2}
If SOS converges to $\bT$ and $\al > 0$ is small then $\bT$ is a fixed point of the game.
\end{Proposition}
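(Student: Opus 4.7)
The plan is to exploit the first criterion of SOS to force any limit $\bT$ to satisfy $\xi_0(\bT)=0$, then invert $I-\al H_o(\bT)$ at small $\al$ to conclude $\xi(\bT)=0$. Suppose the SOS iterates $\T_{k+1}=\T_k-\al\,\xi_{p_k}(\T_k)$ converge to $\bT$. Then $\T_{k+1}-\T_k\to 0$ and, with $\al>0$ held fixed, $\xi_{p_k}(\T_k)\to 0$ in $\R^d$.

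A preliminary step is to verify that the first-criterion bound $\langle \xi_{p_k}(\T_k), \xi_0(\T_k)\rangle \geq (1-a)\|\xi_0(\T_k)\|^2$ survives the cap $p_k=\min\{p_{1,k},p_{2,k}\}\leq p_{1,k}$. Writing $\langle \xi_p,\xi_0\rangle = \|\xi_0\|^2 + p\langle-\al\lchi,\xi_0\rangle$: if $\langle-\al\lchi,\xi_0\rangle\geq 0$ then any $p\in[0,1]$ already yields $\langle\xi_p,\xi_0\rangle\geq\|\xi_0\|^2$, while in the opposite case shrinking $p$ below $p_{1,k}$ only makes $p\langle-\al\lchi,\xi_0\rangle$ less negative, so the $(1-a)\|\xi_0\|^2$ lower bound attained at $p_{1,k}$ is preserved.

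Next, since $\{p_k\}\subset[0,1]$ is bounded, Bolzano--Weierstrass extracts a subsequence $p_{k_j}\to p^\ast\in[0,1]$. Joint continuity of $(\T,p)\mapsto \xi_p(\T)=\xi_0(\T)-p\al\lchi(\T)$ gives $\xi_{p^\ast}(\bT) = \lim_j \xi_{p_{k_j}}(\T_{k_j}) = 0$. Passing the inequality from the preliminary step to the limit along the same subsequence yields
\[ 0 = \langle \xi_{p^\ast}(\bT),\xi_0(\bT)\rangle \geq (1-a)\|\xi_0(\bT)\|^2 \,, \]
and since $a<1$ this forces $\xi_0(\bT)=(I-\al H_o(\bT))\xi(\bT)=0$. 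For $\al>0$ small enough that $\al\|H_o(\bT)\|<1$, the matrix $I-\al H_o(\bT)$ is invertible via its Neumann series, hence $\xi(\bT)=0$ and $\bT$ is a fixed point of the game.

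The main obstacle is that $\{p_k\}$ itself need not converge, which precludes a naive pass to the limit in $\xi_{p_k}(\T_k)$; the subsequence argument combined with joint continuity in $(\T,p)$ and the observation that the first-criterion bound is not destroyed by the minimum with $p_{2,k}$ is the cleanest remedy.
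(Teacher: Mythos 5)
Your proof is correct, and it reaches the paper's conclusion by the same overall route: use the first criterion to force $\xi_0(\bT)=0$, then invert $I-\al H_o(\bT)$ for small $\al$ to get $\xi(\bT)=0$. The difference is in how the limit of the adjustment is handled. The paper argues by contradiction at the limit point: it splits into cases according to the sign of $\langle -\al\lchi,\xi_0\rangle(\bT)$, works with $\lim_k p(\T_k)$ directly, and invokes a separate lemma (\cref{app1.5}) on limits of minima to bound that limit by $-a\lVert\xi_0(\bT)\rVert^2/\langle-\al\lchi,\xi_0\rangle(\bT)$. You instead observe that the pointwise inequality $\langle \xi_{p_k}(\T_k),\xi_0(\T_k)\rangle \geq (1-a)\lVert\xi_0(\T_k)\rVert^2$ already holds at \emph{every} iterate for \emph{any} $p_k \leq p_{1,k}$ (so the cap by $p_{2,k}$ is harmless), then extract a convergent subsequence of $\{p_k\}\subset[0,1]$ and pass to the limit using joint continuity of $(\T,p)\mapsto\xi_p(\T)$. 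This buys two things: it dispenses with the auxiliary min-limit lemma, and it sidesteps the question of whether $\lim_k p(\T_k)$ exists at all, which the paper's proof treats somewhat implicitly (it is salvageable there because either $\lchi(\bT)\neq 0$, forcing the limit to exist, or $\lchi(\bT)=0$, in which case $\xi_0(\bT)=0$ immediately). Your version is a slightly cleaner execution of the same idea.
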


Now assume that $\T$ is initialised randomly (or with arbitrarily small noise), as is standard in ML. Let $F(\T) = \T - \al \xi_p(\T)$ be the SOS iteration. Using both the second criterion and the Stable Manifold Theorem from dynamical systems, we can prove that every strict saddle $\bT$ has a neighbourhood $U$ such that $\{ \T \in U \mid F^n(\T) \to \bT \text{ as } n \to \infty \}$ has measure zero for $\al > 0$ sufficiently small. Since $\T$ is initialised randomly, we obtain the following result.

\begin{Theorem}\label{prop5.3}
SOS locally avoids strict saddles almost surely, for $\al > 0$ sufficiently small.
\end{Theorem}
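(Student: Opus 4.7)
The plan is to invoke the Stable Manifold Theorem for the SOS iteration $F(\T) = \T - \al \xi_p(\T)$. I first verify that $F$ is a $C^1$ local diffeomorphism near any strict saddle $\bT$, and then show that $DF(\bT)$ possesses at least one eigenvalue strictly outside the closed unit disc; these two facts place the local stable set of $\bT$ inside a $C^1$ submanifold of positive codimension, hence of Lebesgue measure zero, so random (absolutely continuous) initialisation of $\T$ finishes the job.

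The first task is to make the state-dependent parameter $p$ tractable near $\bT$. Because $\bT$ is a fixed point, $\|\xi\| < b$ on a neighbourhood $U$, so $p_2 = \|\xi\|^2$ on $U$; and because $\|\xi_0\| = \|(I - \al H_o)\xi\|$ is comparable to $\|\xi\|$ for small $\al$, a short case analysis shows $p_2 \leq p_1$ throughout $U$ (possibly shrinking $U$). Indeed, in the branch $p_1 = 1$ this is immediate once $\|\xi\| < 1$, and in the other branch Cauchy--Schwarz bounds the denominator of the defining ratio by $\al \|\lchi\|\|\xi_0\|$, so $p_1$ is either $1$ or at least of order $\|\xi_0\| \sim \|\xi\|$, which dominates the quadratic $\|\xi\|^2$ for $\T$ close to $\bT$. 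Consequently $p = \|\xi\|^2$ and $\xi_p = (I - \al H_o)\xi - \al \|\xi\|^2 \lchi$ is smooth on $U$.

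Next I would compute $DF(\bT)$. At the saddle, $\xi = 0$ gives $p = 0$ and $\nabla p = 2 H^\tr \xi = 0$, killing the product-rule contribution of $p\al \lchi$ to $\nabla \xi_p$, and the only surviving piece is $\nabla[(I - \al H_o)\xi]|_{\bT} = (I - \al H_o) H$ (the derivative of the coefficient is multiplied by $\xi = 0$). Hence $DF(\bT) = I - \al(I - \al H_o) H$. By hypothesis $H(\bT)$ has an eigenvalue $\la_0$ with $\Rea(\la_0) < 0$; continuity of roots of a polynomial in its coefficients then yields an eigenvalue $\la(\al)$ of $(I - \al H_o) H$ with $\Rea\la(\al) < 0$ for small $\al$, and
\[ |1 - \al\la(\al)|^2 = 1 - 2\al \Rea\la(\al) + \al^2 |\la(\al)|^2 > 1 \]
for $\al > 0$ sufficiently small. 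For such $\al$, $DF$ is close to $I$ throughout $U$ so $F$ is a local $C^1$ diffeomorphism, and the Center-Stable Manifold Theorem (Shub, 1987) produces an $F$-invariant $C^1$ manifold $W^{cs}$ through $\bT$ of dimension at most $d-1$ that contains every forward orbit which stays in $U$ and accumulates at $\bT$. Pulling back by $F$ shows the local stable set $\{\T \in U : F^n(\T) \to \bT\}$ is a countable union of $C^1$ images of null sets, hence Lebesgue-null; an absolutely continuous initial distribution then assigns zero probability to this set.

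The main obstacle is the first task: the $\min$ and piecewise definition of $p$ could in principle destroy smoothness of $\xi_p$, so the non-routine content lies in checking that the quadratic branch $p_2 = \|\xi\|^2$ is active in a full neighbourhood of $\bT$, with care in the degenerate case $\lchi(\bT) = 0$ where one must control the $0/0$ limit of $p_1$ via the relative decay rates of $\lchi$ and $\xi_0$. Once smoothness and the unstable-eigenvalue computation are in hand, the remainder is a direct transport of the Lee et al.\ template for gradient descent to the SOS map.
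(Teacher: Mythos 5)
Your proposal is correct and follows essentially the same route as the paper: you first show the quadratic branch $p = \lVert \xi \rVert^2$ is active near the saddle (this is exactly the paper's Lemma \ref{lem}), compute $\nabla F(\bT) = I - \al(I-\al H_o)H(\bT)$ using $\xi(\bT)=0$ and $\nabla p(\bT)=0$, extract an eigenvalue of modulus greater than one from the negative-real-part eigenvalue of $H(\bT)$, and conclude via the Stable Manifold Theorem that the local stable set is a countable union of preimages of a measure-zero manifold. The degenerate case $\lchi(\bT)=0$ you flag is in fact harmless, since it falls into the branch $\langle -\al\lchi, \xi_0\rangle \geq 0$ where $p_1 = 1$.
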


This also holds for LookAhead, and could be strenghtened to \textit{global} initialisations provided a strong boundedness assumption on $\lVert H \rVert_2$. This is trickier for SOS since $p(\T)$ is not globally continuous. Altogether, our results for LookAhead and the correct criterion for $p$-LOLA lead to some of the strongest theoretical guarantees in multi-agent learning. Furthermore, SOS retains all of LOLA's opponent shaping capacity while LookAhead does not, as shown experimentally in the next section.


\section{Experiments and Discussion}\label{sec6}

We evaluate the performance of SOS in three differentiable games. We first showcase opponent shaping and superiority over LA/CO/SGA/NL in the Iterated Prisoner's Dilemma (IPD). This leaves SOS and LOLA, which have differed only in theory up to now. We bridge this gap by showing that SOS always outperforms LOLA in the tandem game, avoiding arrogant behaviour by decaying $p$ while LOLA overshoots. Finally we test SOS on a more involved GAN learning task, with results similar to dedicated methods like Consensus Optimisation.

\subsection{Experimental setup}

%

\paragraph{IPD:} This game is an infinite sequence of the well-known Prisoner's Dilemma, where the payoff is discounted by a factor $\ga \in [0,1)$ at each iteration. Agents are endowed with a memory of actions at the previous state. Hence there are $5$ parameters for each agent $i$: the probability $P^i(C \mid state)$ of cooperating at start state $s_0 = \varnothing$ or state $s_t = (a_{t-1}^1, a_{t-1}^2)$ for $t > 0$. One Nash equilibrium is to always defect (DD), with a normalised loss of $2$. A better equilibrium with loss $1$ is named \textit{tit-for-tat} (TFT), where each player begins by cooperating and then mimicks the opponent's previous action.

We run 300 training episodes for SOS, LA, CO, SGA and NL. The parameters are initialised following a normal distribution around $1/2$ probability of cooperation, with unit variance. We fix $\al = 1$ and $\ga = 0.96$, following \cite{Foe}. We choose $a = 0.5$ and $b = 0.1$ for SOS. The first is a robust and arbitrary middle ground, while the latter is intentionally small to avoid poor SFP.

\paragraph{Tandem:} Though \textit{local} convergence is guaranteed for SOS, it is possible that SOS diverges from poor initialisations. This turns out to be impossible in the tandem game since the Hessian is \textit{globally} positive semi-definite. We show this explicitly by running 300 training episodes for SOS and LOLA. Parameters are initialised following a normal distribution around the origin. We found performance to be robust to hyperparameters $a,b$. Here we fix $a = b = 0.5$ and $\al = 0.1$.

\paragraph{Gaussian mixtures:} We reproduce a setup from \citet{Bal}. The game is to learn a Gaussian mixture distribution using GANs. Data is sampled from a highly multimodal distribution designed to probe the tendency to collapse onto a subset of modes during training -- see ground truth in \cref{appf}. The generator and discriminator networks each have 6 ReLU layers of 384 neurons, with 2 and 1 output neurons respectively. Learning rates are chosen by grid search at iteration 8k, with $a = 0.5$ and $b = 0.1$ for SOS, following the same reasoning as the IPD.

\subsection{Results and discussion}

\paragraph{IPD:} Results are given in Figure \ref{fig1}. Parameters in part (A) are the end-run probabilities of cooperating for each memory state, encoded in different colours. Only 50 runs are shown for visibility. Losses at each step are displayed in part (B), averaged across 300 episodes with shaded deviations.

SOS and LOLA mostly succeed in playing tit-for-tat, displayed by the accumulation of points in the correct corners of (A) plots. For instance, CC and CD points are mostly in the top right and left corners so agent 2 responds to cooperation with cooperation. Agents also cooperate at the start state, represented by $\varnothing$ points all hidden in the top right corner. Tit-for-tat strategy is further indicated by the losses close to $1$ in part (B). On the other hand, most points for LA/CO/SGA/NL are accumulated at the bottom left, so agents mostly defect. This results in poor losses, demonstrating the limited effectiveness of recent proposals like SGA and CO. Finally note that trained parameters and losses for SOS are almost identical to those for LOLA, displaying equal capacity in opponent shaping while also inheriting convergence guarantees and outperforming LOLA in the next experiment.

\begin{figure}[t]
\centering
        \centering
        \includegraphics[height = 3.8cm]{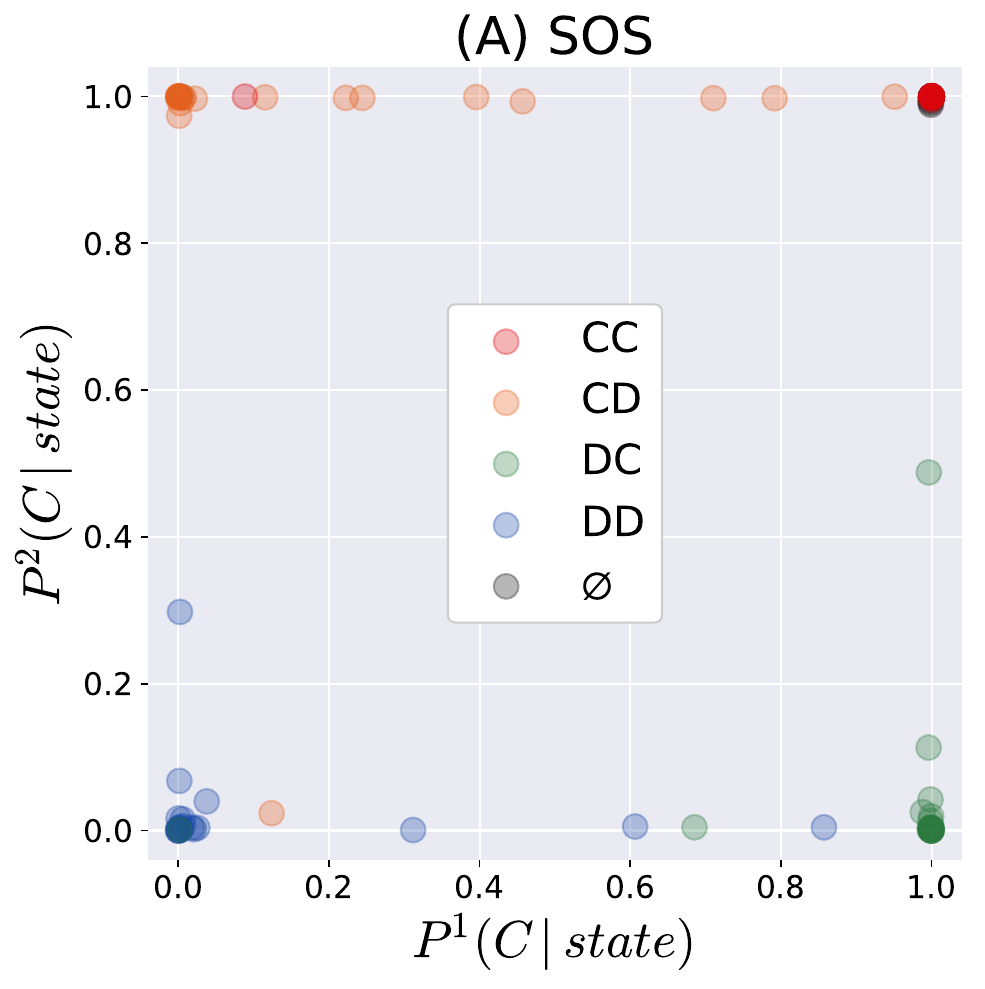} \hfill
        \includegraphics[height = 3.8cm]{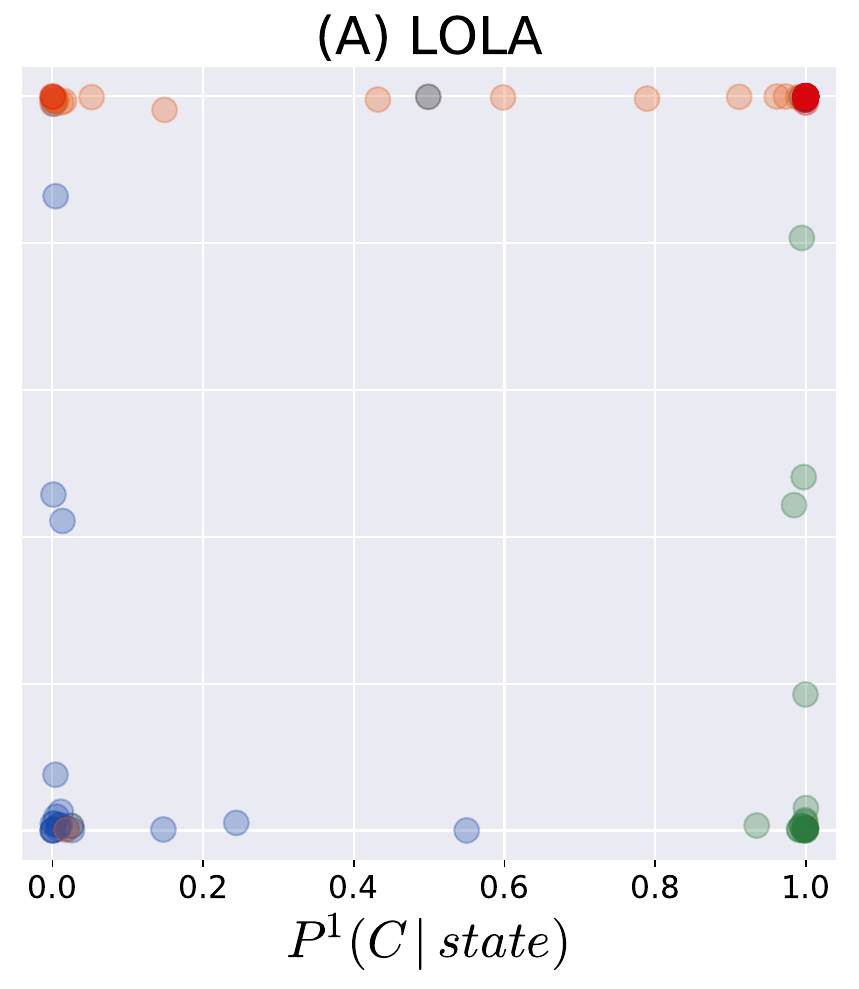} \hfill
        \includegraphics[height = 3.8cm]{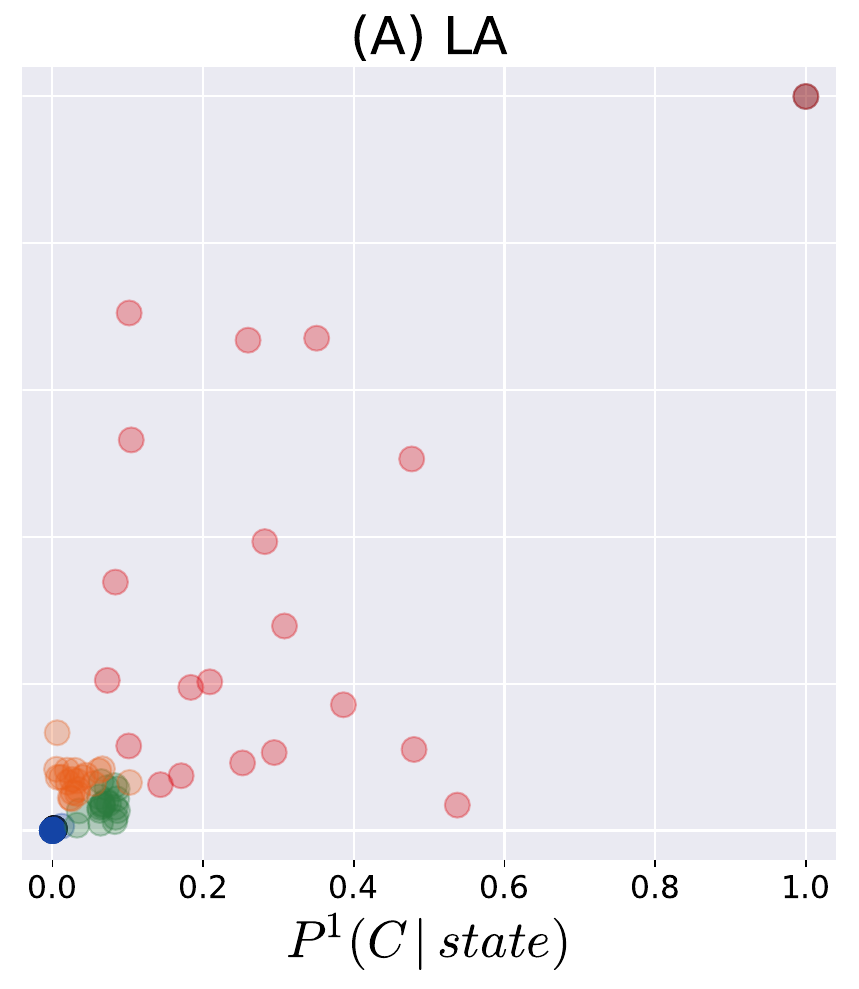} \hfill
        \includegraphics[height = 3.8cm]{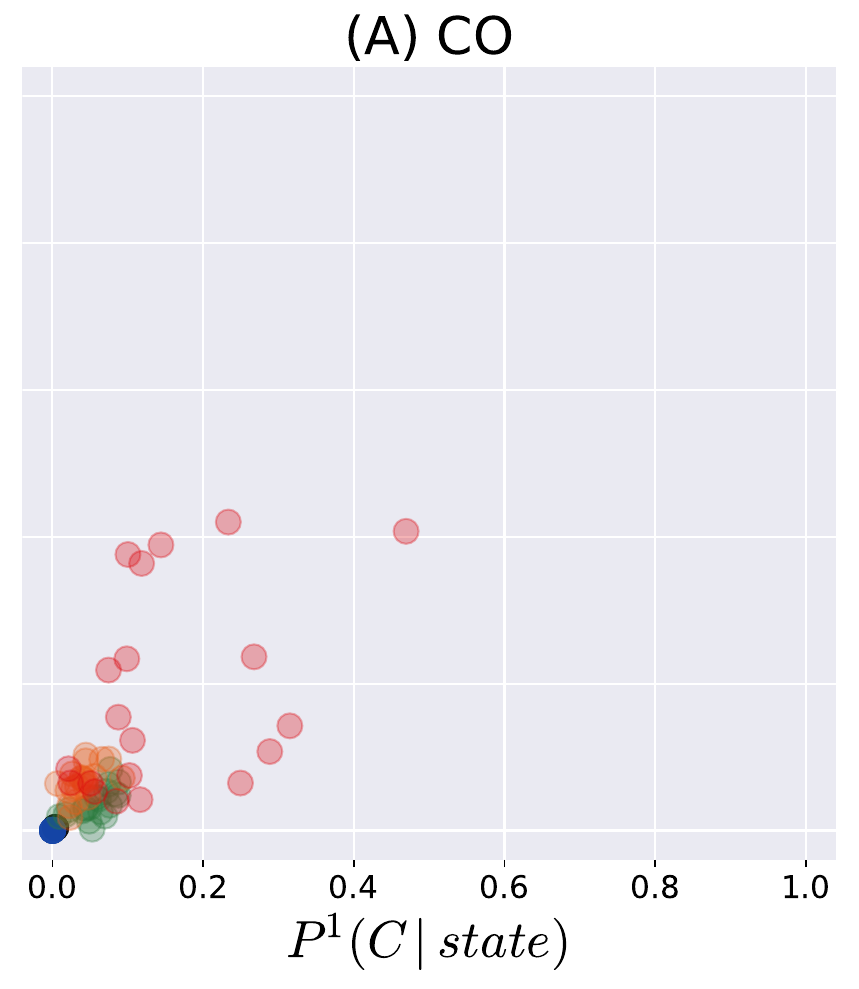} \hfill
        \centering
        \includegraphics[height = 3.8cm]{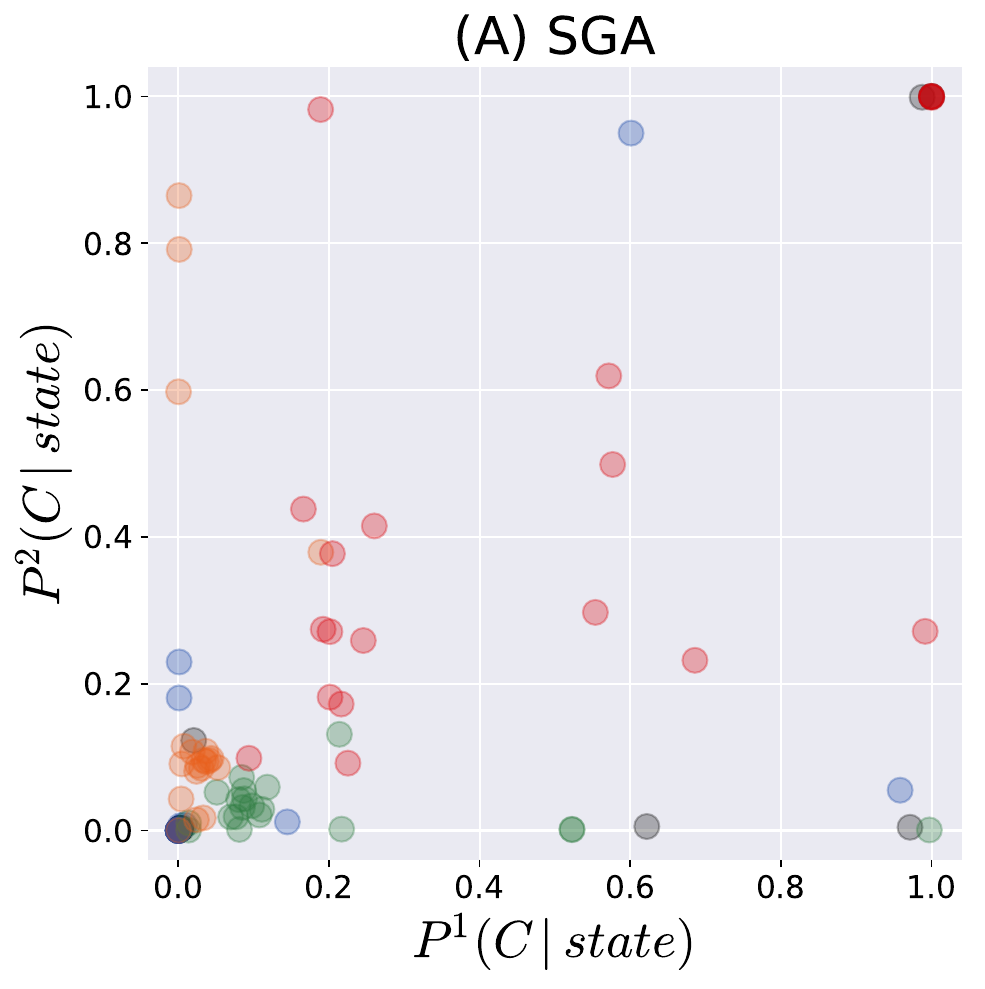} \hfill
        \includegraphics[height = 3.8cm]{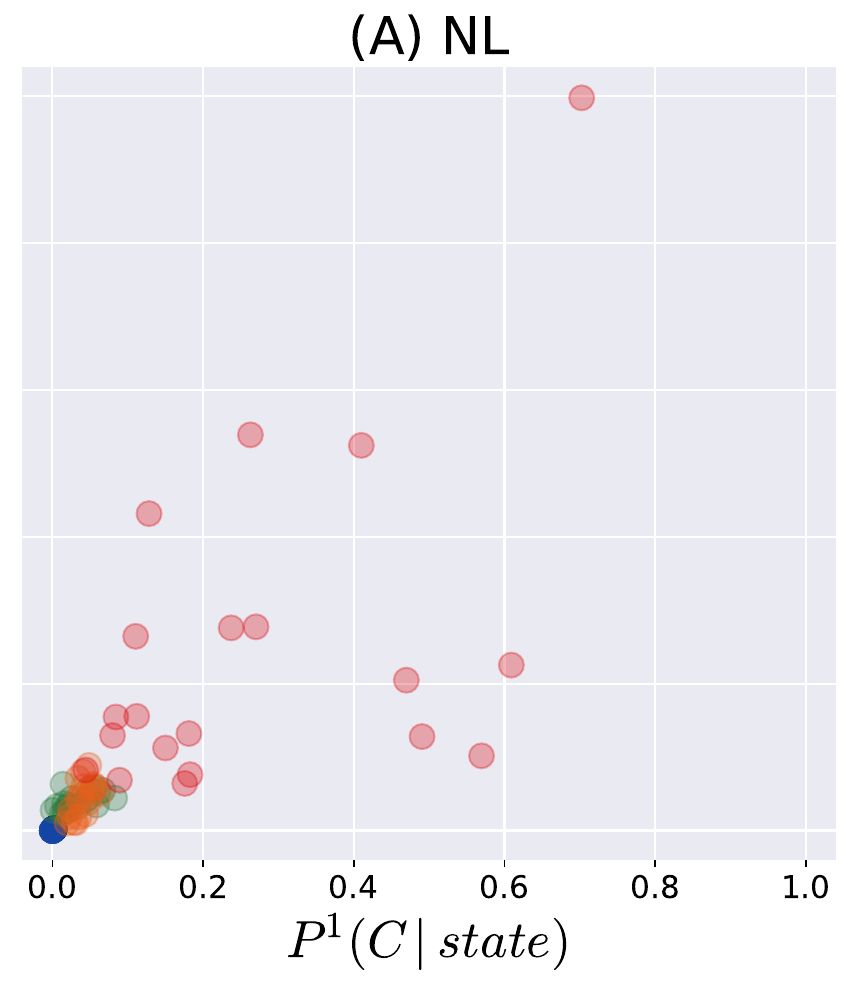} \hfill
        \raisebox{1.5pt}{\includegraphics[height = 3.76cm]{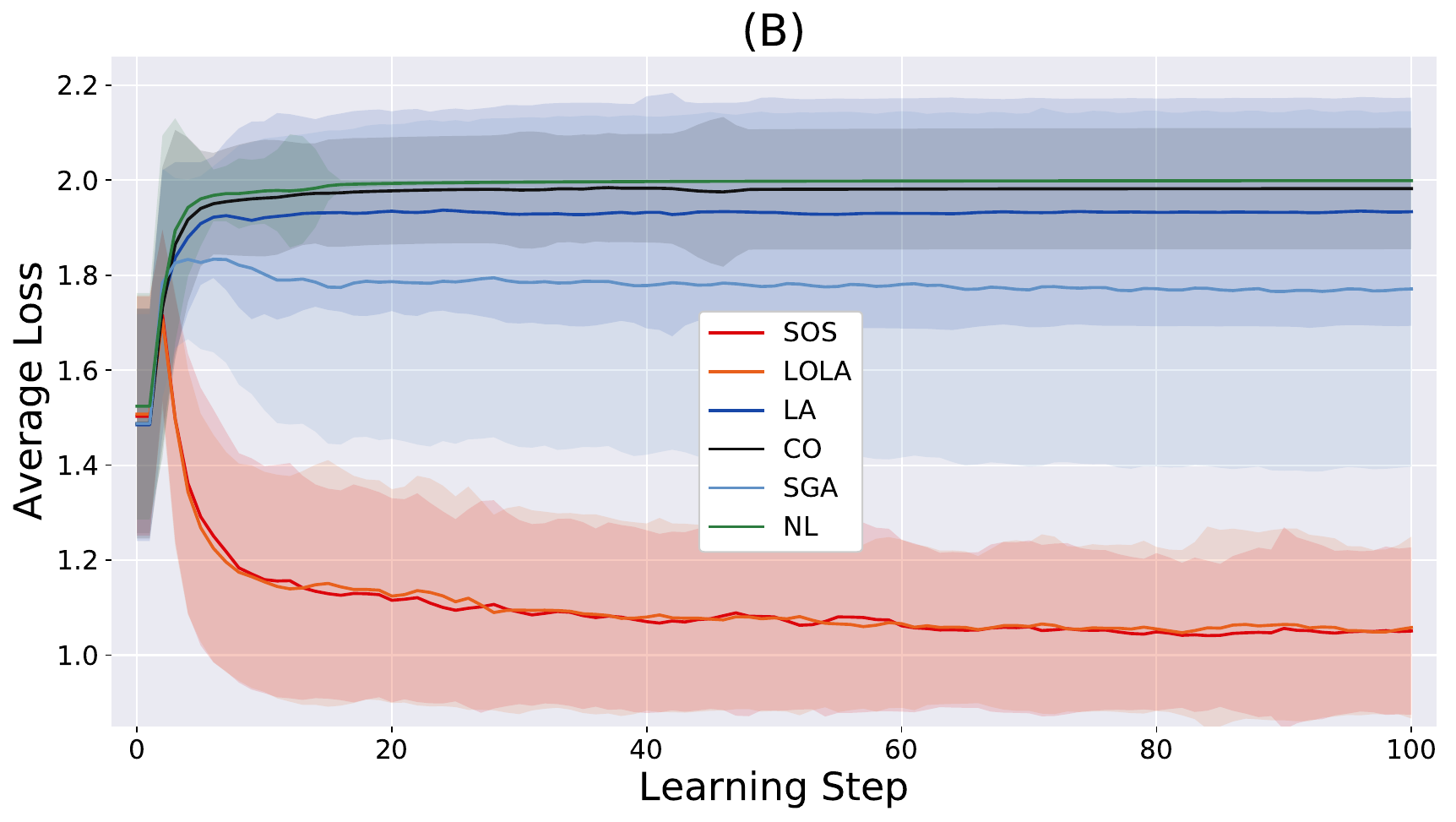}}
\caption{Results in the IPD. \textbf{(A)} Probability that agents cooperate, given memory state, at the end of 50 training runs. SOS and LOLA mostly play tit-for-tat, while others mostly defect. \textbf{(B)} Average loss at each step, across 300 runs, with shaded deviations. SOS and LOLA outperform all others.} \label{fig1}
\end{figure}

\begin{figure}[t]
\centering
        \includegraphics[width=0.48\textwidth]{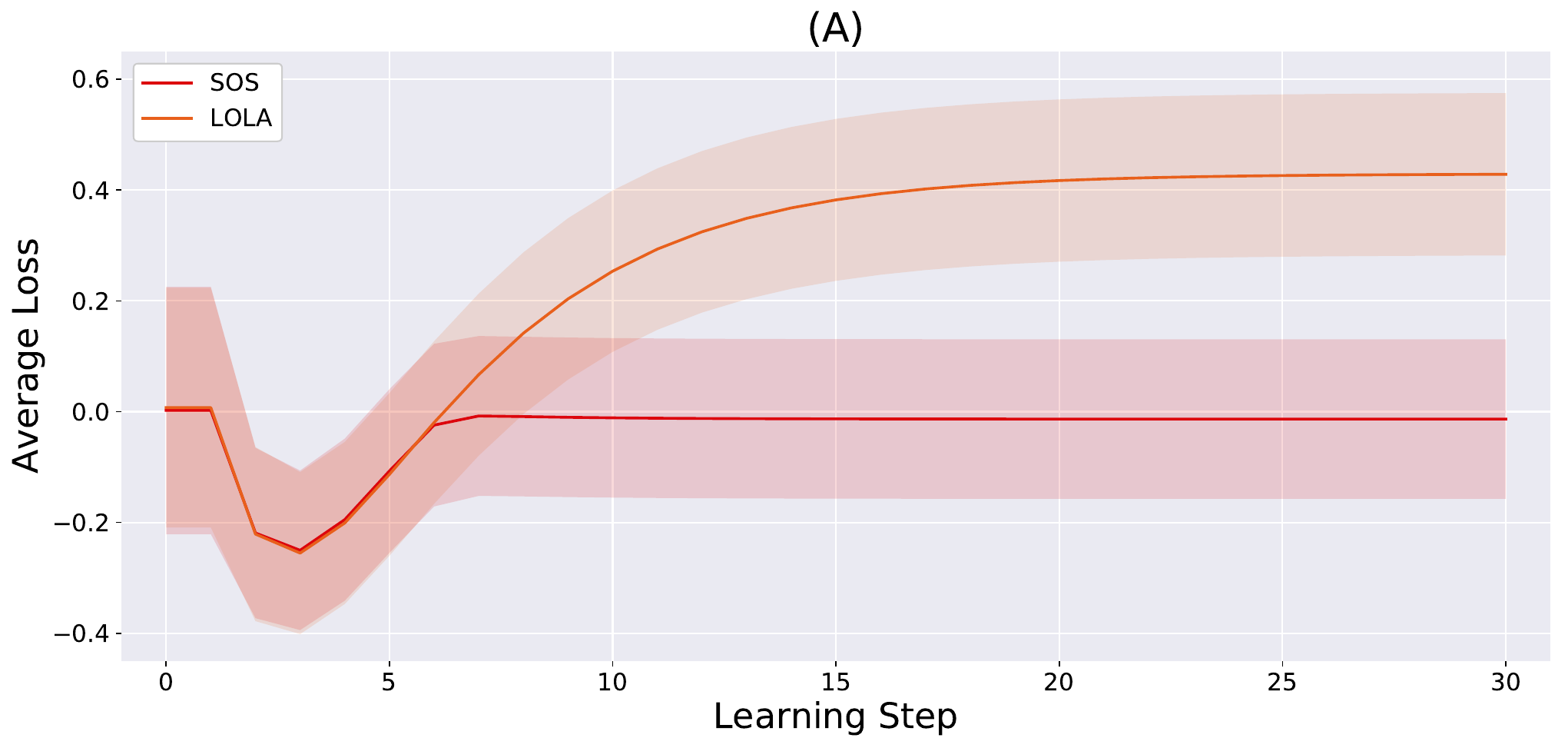}
        \includegraphics[width=0.48\textwidth]{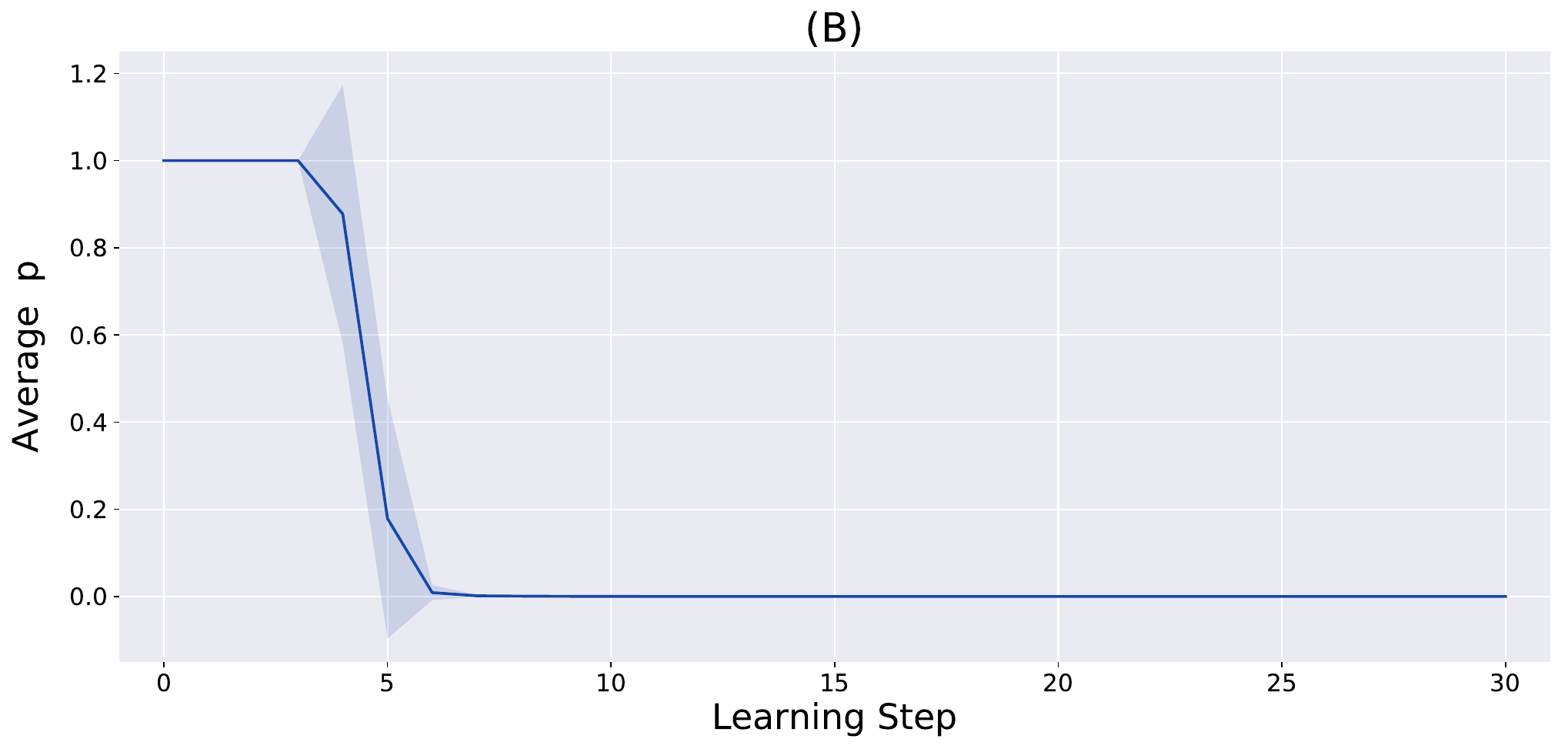}
\caption{Results in the tandem game. \textbf{(A)} Average loss and \textbf{(B)} average $p$ at each learning step, across 300 runs, with shaded deviations. SOS decays $p$ to avoid arrogance and outperforms LOLA. \label{fig2}}
\end{figure}

\paragraph{Tandem:} Results are given in Figure \ref{fig2}. SOS always succeeds in decreasing $p$ to reach the correct equilibria, with losses averaging at $0$. LOLA fails to preserve fixed points, overshooting with losses averaging at $4/9$. The criterion for SOS is shown in action in part (B), decaying $p$ to avoid overshooting. This illustrates that purely \textit{theoretical} guarantees descend into \textit{practical} outperformance. Note that SOS even gets away from the LOLA fixed points if initialised there (not shown), converging to improved losses using the alignment criterion with LookAhead.

\begin{figure}[t]
\centering
\begin{minipage}{\textwidth}
\centering
\rotatebox[origin=c]{90}{ } \hfill
	\begin{minipage}{0.225\textwidth}
		\hspace{2pt}
		\centering
        Iteration = 2k
	\end{minipage}\hfill
	\begin{minipage}{0.225\textwidth}
        \centering
        4k
	\end{minipage}\hfill
	\begin{minipage}{0.225\textwidth}
        \hspace{-6pt}
        \centering
        6k
	\end{minipage}\hfill
	\begin{minipage}{0.225\textwidth}
		\hspace{-12pt}
        \centering
        8k
	\end{minipage}\hfill
		\centering
		\rotatebox[origin=c]{270}{ } \\[2pt]
        \centering
		\rotatebox[origin=c]{90}{NL}
	\begin{minipage}{0.235\textwidth}
        \centering
        \includegraphics[height = 3.1cm]{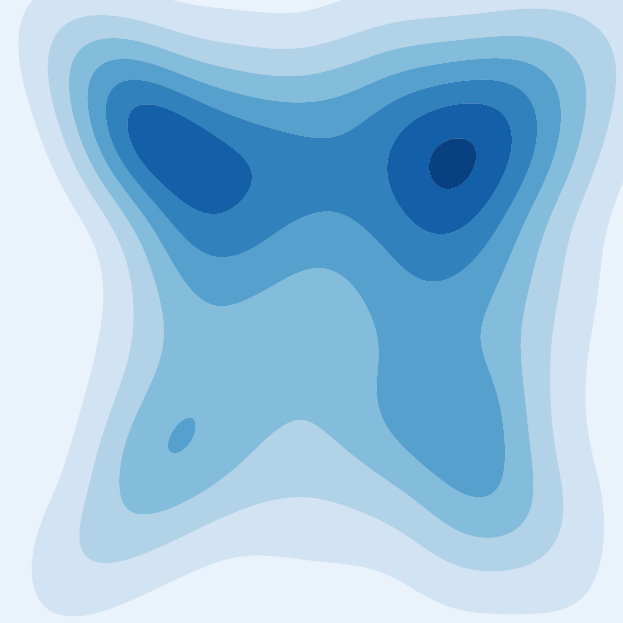}
	\end{minipage}\hfill
	\begin{minipage}{0.235\textwidth}
        \centering
        \includegraphics[height = 3.1cm]{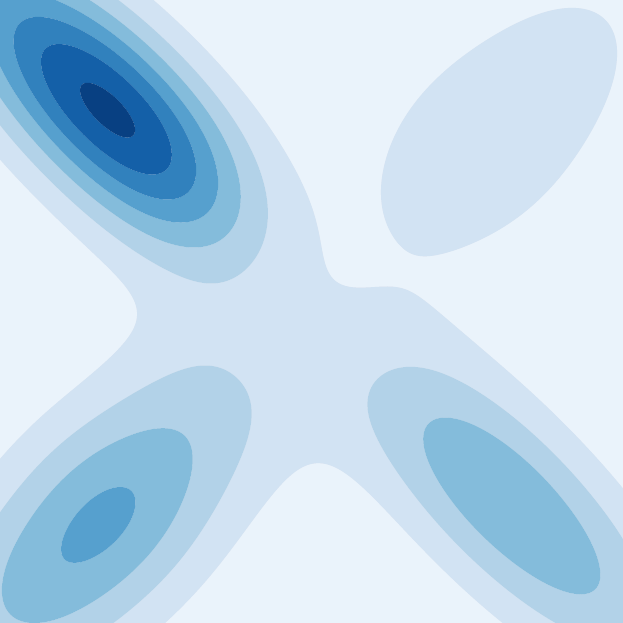}
	\end{minipage}\hfill
	\begin{minipage}{0.235\textwidth}
        \centering
        \includegraphics[height = 3.1cm]{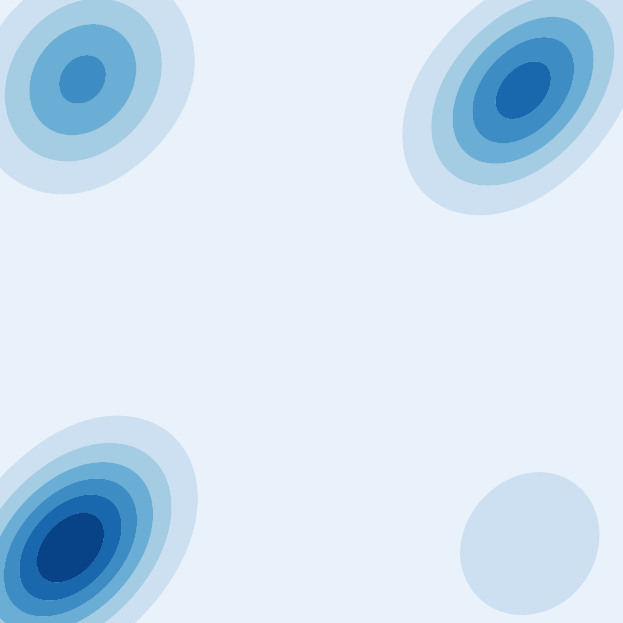}
	\end{minipage}\hfill
	\begin{minipage}{0.235\textwidth}
        \centering
        \includegraphics[height = 3.1cm]{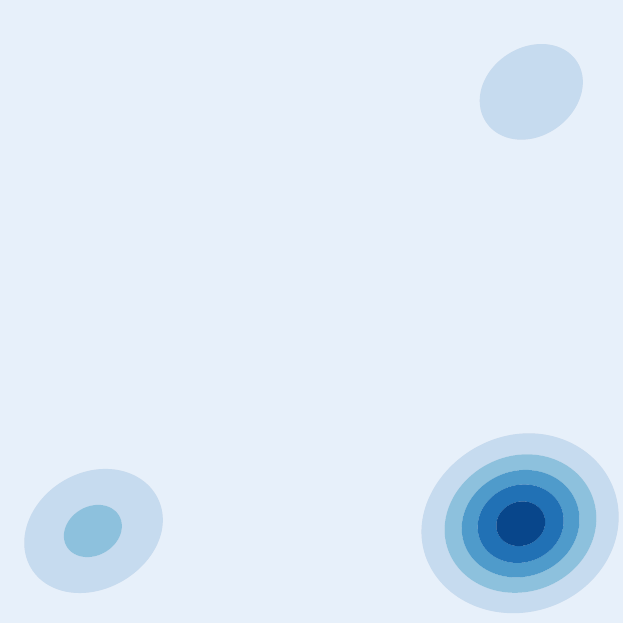}
	\end{minipage}\hfill
    	\centering
		\rotatebox[origin=c]{270}{$\al = 1e-4$} \\[2pt]
		\rotatebox[origin=c]{90}{ } \hfill
			\begin{minipage}{0.225\textwidth}
				\hspace{2pt}
				\centering
		        $D_{KL} = 0.77$
			\end{minipage}\hfill
			\begin{minipage}{0.225\textwidth}
		        \centering
		        $1.12$
			\end{minipage}\hfill
			\begin{minipage}{0.225\textwidth}
		        \hspace{-6pt}
		        \centering
		        $1.61$
			\end{minipage}\hfill
			\begin{minipage}{0.225\textwidth}
				\hspace{-12pt}
		        \centering
		        $2.51$
			\end{minipage}\hfill
				\centering
				\rotatebox[origin=c]{270}{ } \\[3pt]
        \centering
		\rotatebox[origin=c]{90}{CO}
	\begin{minipage}{0.235\textwidth}
        \centering
        \includegraphics[height = 3.1cm]{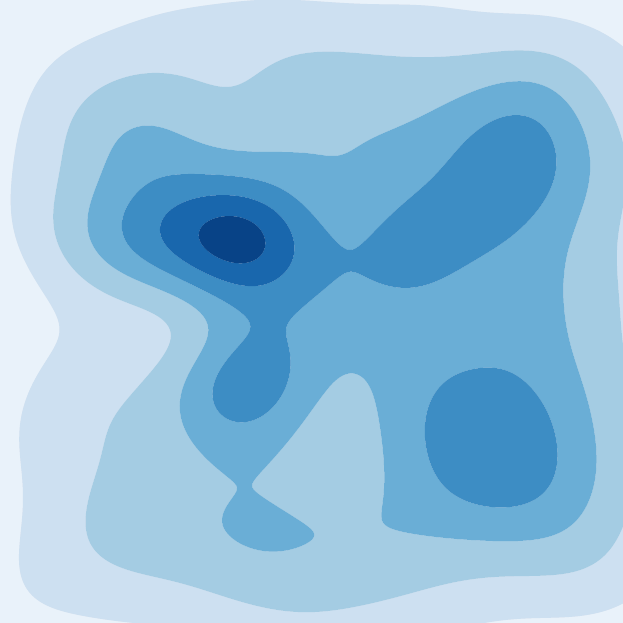}
	\end{minipage}\hfill
	\begin{minipage}{0.235\textwidth}
        \centering
        \includegraphics[height = 3.1cm]{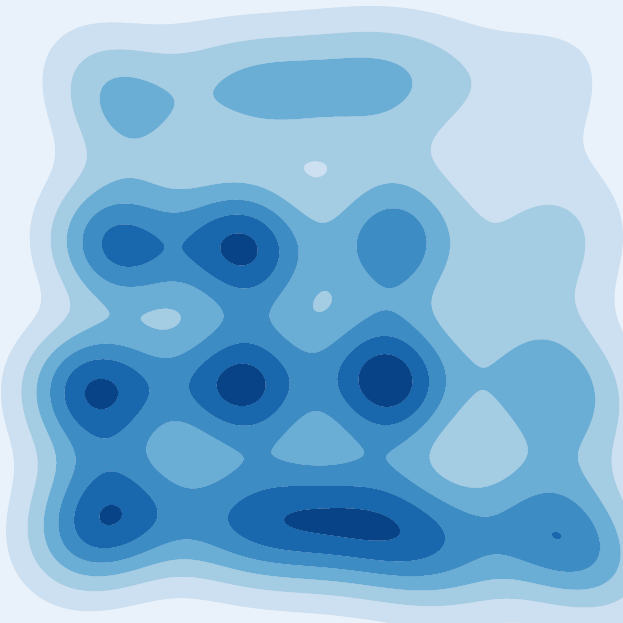}
	\end{minipage}\hfill
	\begin{minipage}{0.235\textwidth}
        \centering
        \includegraphics[height = 3.1cm]{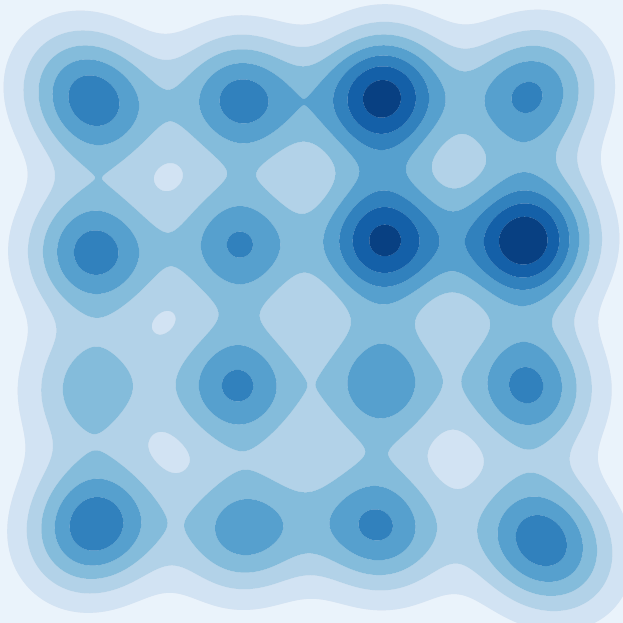}
	\end{minipage}\hfill
	\begin{minipage}{0.235\textwidth}
        \centering
        \includegraphics[height = 3.1cm]{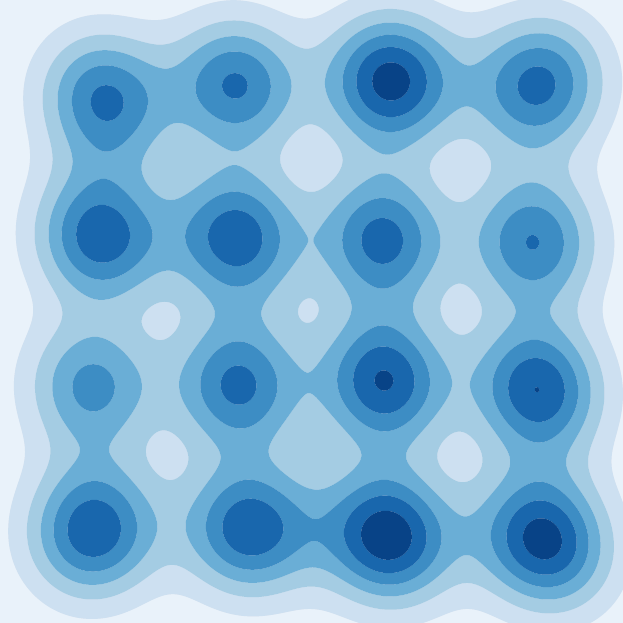}
	\end{minipage}\hfill
    	\centering
		\rotatebox[origin=c]{270}{$\al = 2e-4$} \\[2pt]
		\rotatebox[origin=c]{90}{ } \hfill
			\begin{minipage}{0.225\textwidth}
				\hspace{2pt}
				\centering
		        $D_{KL} = 0.65$
			\end{minipage}\hfill
			\begin{minipage}{0.225\textwidth}
		        \centering
		        $0.31$
			\end{minipage}\hfill
			\begin{minipage}{0.225\textwidth}
		        \hspace{-6pt}
		        \centering
		        $\mathbf{0.07}$
			\end{minipage}\hfill
			\begin{minipage}{0.225\textwidth}
				\hspace{-12pt}
		        \centering
		        $\mathbf{0.03}$
			\end{minipage}\hfill
				\centering
				\rotatebox[origin=c]{270}{ } \\[3pt]
        \centering
		\rotatebox[origin=c]{90}{SOS} 
	\begin{minipage}{0.235\textwidth}
        \centering
        \includegraphics[height = 3.1cm]{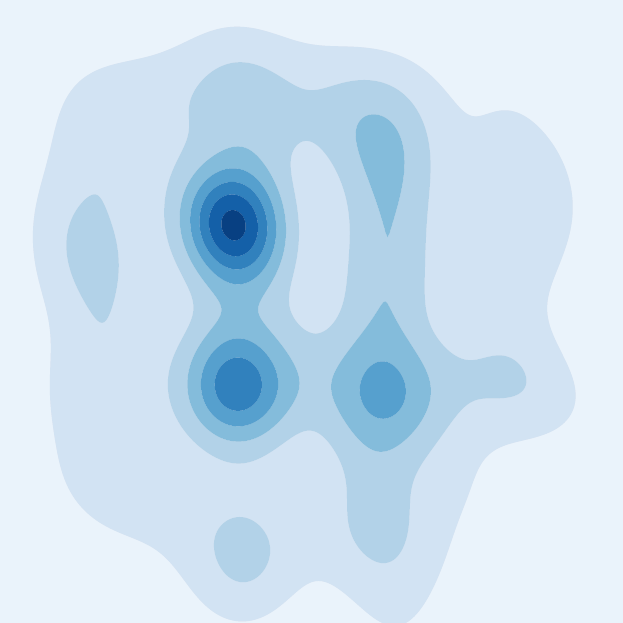}
	\end{minipage}\hfill
	\begin{minipage}{0.235\textwidth}
        \centering
        \includegraphics[height = 3.1cm]{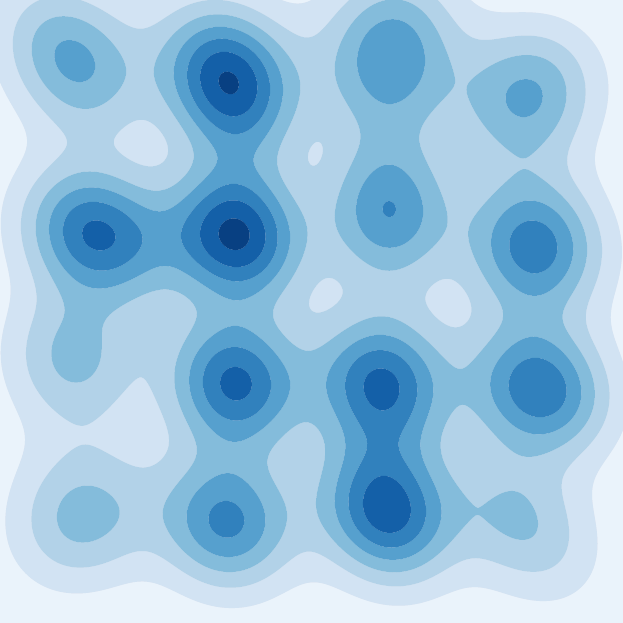}
	\end{minipage}\hfill
	\begin{minipage}{0.235\textwidth}
        \centering
        \includegraphics[height = 3.1cm]{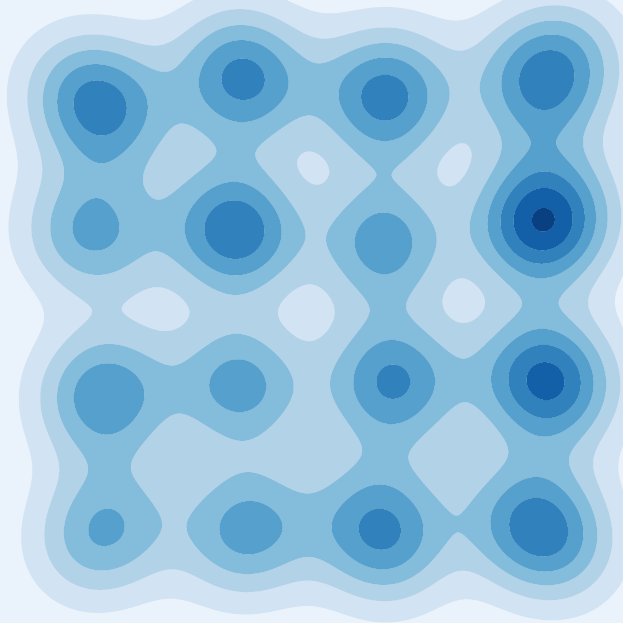}
	\end{minipage}\hfill
	\begin{minipage}{0.235\textwidth}
        \centering
        \includegraphics[height = 3.1cm]{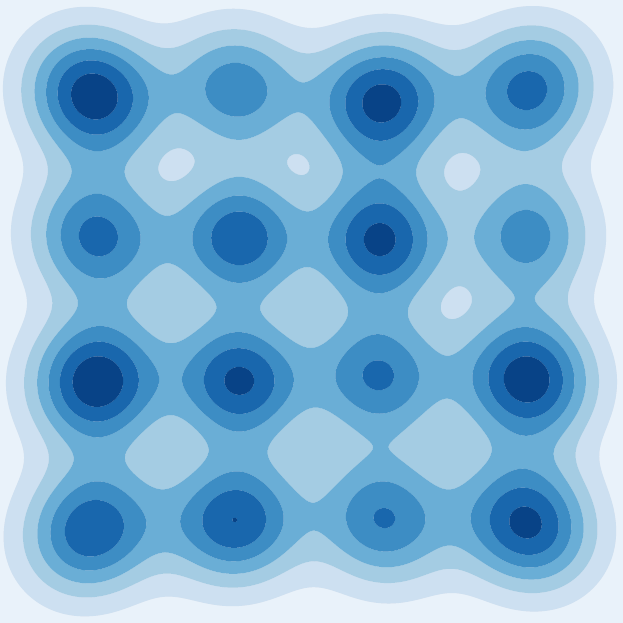}
	\end{minipage}\hfill
    	\centering
		\rotatebox[origin=c]{270}{$\al = 2e-4$} \\[2pt]
		\rotatebox[origin=c]{90}{ } \hfill
			\begin{minipage}{0.225\textwidth}
				\hspace{2pt}
				\centering
		        $D_{KL} = \mathbf{0.45}$
			\end{minipage}\hfill
			\begin{minipage}{0.225\textwidth}
		        \centering
		        $\mathbf{0.15}$
			\end{minipage}\hfill
			\begin{minipage}{0.225\textwidth}
		        \hspace{-6pt}
		        \centering
		        $0.09$
			\end{minipage}\hfill
			\begin{minipage}{0.225\textwidth}
				\hspace{-12pt}
		        \centering
		        $\mathbf{0.03}$
			\end{minipage}\hfill
				\centering
				\rotatebox[origin=c]{270}{ }
\end{minipage}
\caption{Generator distribution at sampled iterations. NL suffers from mode collapse and hopping, while CO and SOS learn the correct mixture of Gaussians. Below each plot: KL divergence $D_{KL}(P \mid \mid Q)$ from generator $P$ to ground truth $Q$, estimated from 25600 samples. To the RHS of each row: learning rate $\al$. Best result at each iteration shown in bold.} \label{fig3}
\end{figure}

\paragraph{Gaussian mixtures:} The generator distribution and KL divergence are given at $\{$2k, 4k, 6k, 8k$\}$ iterations for NL, CO and SOS in Figure \ref{fig3}. Results for SGA, LOLA and LA are in \cref{appf}. SOS achieves convincing results by spreading mass across all Gaussians, as do CO/SGA/LOLA. LookAhead is significantly slower, while NL fails through mode collapse and hopping. Only visual inspection was used for comparison by \cite{Bal}, while KL divergence gives stronger numerical evidence here. SOS and CO are slightly superior to others with reference to this metric. However CO is aimed specifically toward two-player zero-sum GAN optimisation, while SOS is widely applicable with strong theoretical guarantees in all differentiable games.


\section{Conclusion}

Theoretical results in machine learning have significantly helped understand the causes of success and failure in applications, from optimisation to architecture. While gradient descent on single losses has been studied extensively, algorithms dealing with interacting goals are proliferating, with little grasp of the underlying dynamics. The analysis behind CO and SGA has been helpful in this respect, though lacking either in generality or convergence guarantees. The first contribution of this paper is to provide a unified framework and fill this theoretical gap with robust convergence results for LookAhead in all differentiable games. Capturing stable fixed points as the correct solution concept was essential for these techniques to apply.

Furthermore, we showed that opponent shaping is both a powerful approach leading to experimental success and cooperative behaviour -- while at the same time preventing LOLA from preserving fixed points in general. This conundrum is solved through a robust interpolation between LookAhead and LOLA, giving birth to SOS through a robust criterion. This was partially enabled by choosing to preserve the `middle' term in LOLA, and using it to inherit stability from LookAhead. This results in convergence guarantees stronger than all previous algorithms, but also in practical superiority over LOLA in the tandem game. Moreover, SOS fully preserves opponent shaping and outperforms SGA, CO, LA and NL in the IPD by encouraging tit-for-tat policy instead of defecting. Finally, SOS convincingly learns Gaussian mixtures on par with the dedicated CO algorithm.

\section{Acknowledgements}

This project has received funding from the European Research Council under the European Union’s Horizon 2020 research and innovation programme (grant agreement number 637713). It was also supported by the Oxford-Google DeepMind Graduate Scholarship.

\bibliography{include/iclr2019_conference}
\bibliographystyle{include/iclr2019_conference}

\onecolumn
\clearpage
\appendix
\appendixpage

\section{Stable Fixed Points}\label{appa}

In the main text we showed that Nash equilibria are inadequate in multi-agent learning, exemplified by the simple game given by $L^1 = L^2 = xy$, where the origin is a global Nash equilibrium but a saddle point of the losses. It is not however obvious that SFP are a better solution concept. We begin by pointing out that for single losses, invertibility and symmetry of the Hessian imply positive \textit{definiteness} at SFP. These are exactly local minima of $L$ detected by the second partial derivative test, namely those points provably attainable by gradient descent.

To emphasise this, note that gradient descent does \textit{not} converge locally to \textit{all} local minima. This can be seen by considering the example $L(x,y) = y^2$ and the local (global) minimum $(0,0)$. There is no neighbourhood for which gradient descent converges to $(0,0)$, since initialising at $(x_0, y_0)$ will always converge to $(x_0, 0)$ for appropriate learning rates, with $x_0 \neq 0$ almost surely. This occurs precisely because the Hessian is singular at $(0,0)$. Though a degenerate example, this suggests an important difference to make between the ideal solution concept (local minima) and that for which local convergence claims are possible to attain (local minima with invertible $H \succeq 0$).

Accordingly, the definition of SFP is the immediate generalisation of `fixed points with positive semi-definite Hessian', or in other words, `second-order-tractable local minima'. It is important to impose only positive \textit{semi-}definiteness to keep the class as large as possible, despite strict positive definiteness holding for single losses due to symmetry. Imposing strict positivity would for instance exclude the origin in the cyclic game $L^1 = xy = -L^2$, a point certainly worthy of convergence.

Note also that imposing a weaker condition than $H \succeq 0$ would be incorrect. Invertibility aside, local convergence of gradient descent on single functions cannot be guaranteed if $H \nsucceq 0$, since such points are strict saddles. These are almost always avoided by gradient descent, as proven by \cite{Lee} and \cite{Pan}. It is thus necessary to impose $H \succeq 0$ as a minimal requirement in optimisation methods attempting to generalise gradient descent.

\begin{appRemark}
A matrix $H$ is positive semi-definite iff the same holds for its symmetric part $S = (H+H^\tr)/2$, so SFP could equivalently be defined as $S(\bT) \succeq 0$. This is the original formulation given by part of the authors \citep{Bal}, who also imposed the extra requirement $S(\T) \succeq 0$ in a \textit{neighbourhood} of $\bT$. After discussion we decided to drop this assumption, pointing out that it is 1) more restrictive, 2) superficial to all theoretical results and 3) weakens the analogy with tractable local minima. The only thing gained by imposing semi-positivity in a neighbourhood is that SFP become a subset of Nash equilibria.
\end{appRemark}

Regarding unstable fixed points and strict saddles, note that $H(\bT) \succ 0$ implies $H(\T) \succ 0$ in a neighbourhood, hence being equivalent to the definition in \cite{Bal}. It follows also that unstable points are a subset of strict saddles: if $H(\bT) \prec 0$ then all eigenvalues are negative since any eigenpair $(v,\la)$ satisfies
\[ 0 > \Rea(v^\tr H v) = \Rea(\la v^\tr v) = \Rea(\la) \,. \]
We introduced strict saddles in this paper as a generalisation of unstable FP, which are more difficult to handle but nonetheless tractable using dynamical systems. The name is chosen by analogy to the definition in \cite{Lee} for single losses.


\section{Lola Vectorial Form}\label{appb}

\begin{appProposition}
The LOLA gradient adjustment is
\[ \text{LOLA} = (I- \al H_o)\xi - \al \diag(H_o^\tr \nabla L) \,. \]
in the usual assumption of equal learning rates.
\end{appProposition}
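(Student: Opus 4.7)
The plan is to compute the LOLA update blockwise and then match each piece to the claimed vectorial form. Start from the LOLA objective for agent $i$,
\[ \tilde L^i(\T) = L^i\bigl(\T^i,\, \{\T^j + \Del \T^j(\T)\}_{j \neq i}\bigr),\qquad \Del \T^j = -\al \nabla_j L^j(\T), \]
and apply a first-order Taylor expansion in the $\{\Del \T^j\}$ to obtain
\[ \tilde L^i \approx L^i + \sum_{j\neq i} (\nabla_j L^i)^\tr \Del \T^j = L^i - \al \sum_{j\neq i} (\nabla_j L^i)^\tr \nabla_j L^j. \]
This isolates the two sources of $\T^i$-dependence that must be differentiated.

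Next, I would take $\nabla_i$ of this expression by the product rule applied to each scalar $(\nabla_j L^i)^\tr \nabla_j L^j$. Using the paper's convention that $\nabla_{ij} L^k$ is the $d_i \times d_j$ matrix with $(a,b)$ entry $\partial^2 L^k / \partial \T^i_a \partial \T^j_b$ (so that $(\nabla_{ij} L^k)^\tr = \nabla_{ji} L^k$ by Schwarz), the computation yields
\[ \nabla_i \tilde L^i = \nabla_i L^i - \al \sum_{j\neq i} \nabla_{ij} L^i \cdot \nabla_j L^j - \al \sum_{j\neq i} \nabla_{ij} L^j \cdot \nabla_j L^i. \]
The first correction sum comes from the ``middle'' term $(\nabla_{ji}L^i)^\tr \Del\T^j$, and the second from the shaping term $(\nabla_i \Del\T^j)^\tr \nabla_j L^i$ that the original LOLA paper dropped but that we keep.

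I would then match these two correction sums to the two pieces of the claimed form. For the first, the $i$-th block of $H_o \xi$ is
\[ (H_o \xi)_i = \sum_{j\neq i} H_{ij}\, \xi_j = \sum_{j\neq i} \nabla_{ij} L^i \cdot \nabla_j L^j, \]
so $\nabla_i L^i - \al (H_o \xi)_i$ reproduces the first sum, giving the $i$-th block of $(I-\al H_o)\xi$. For the shaping sum, unpacking $M = H_o^\tr \nabla L \in \R^{d \times n}$ blockwise shows that its $(i,k)$ block is $\sum_{j\neq i}(\nabla_{ji} L^j)^\tr \nabla_j L^k = \sum_{j\neq i} \nabla_{ij}L^j \cdot \nabla_j L^k$, so the diagonal entry is exactly $\diag(H_o^\tr \nabla L)_i = \sum_{j\neq i} \nabla_{ij} L^j \cdot \nabla_j L^i$, matching the remaining sum. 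Stacking over $i$ yields $\textsc{lola} = (I-\al H_o)\xi - \al \diag(H_o^\tr \nabla L) = (I - \al H_o)\xi - \al \lchi$.

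\paragraph{Main obstacle.} There are no deep ideas; the only real difficulty is bookkeeping. One must be careful with (i) which argument of $(\nabla_j L^i)^\tr \nabla_j L^j$ contributes to which correction term (the ``middle'' vs.\ the shaping term), (ii) transpositions of mixed Hessian blocks via Schwarz, and (iii) recognising that taking $\diag$ of a $d \times n$ matrix with block-vector entries picks out precisely the $i$-th block of the $i$-th column, which corresponds exactly to the self-shaping term for agent $i$. Getting these conventions aligned once is enough; everything else is a direct substitution.
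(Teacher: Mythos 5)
Your proposal is correct and follows essentially the same route as the paper's proof: first-order Taylor expansion of the LOLA objective, blockwise differentiation via the product rule, and identification of the two correction sums with the $i$-th blocks of $H_o\xi$ and $\diag(H_o^\tr \nabla L)$ respectively. The transposition bookkeeping via Schwarz and the reading of $\diag$ on the $d\times n$ matrix $H_o^\tr\nabla L$ match the paper's conventions exactly.
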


\begin{proof}
Recall the modified objective
\[ L^1(\T^1, \T^2 - \al \nabla_2 L^2, \ldots, \T^n - \al \nabla_n L^n) \]
for agent $1$, and so on for each agent. First-order Taylor expansion yields
\[ L^1 - \al \sum_{j \neq 1} (\nabla_j L^1)^\tr \nabla_j L^j \]
and similarly for each agent. Differentiating with respect to $\T^i$, the adjustment for player $i$ is
\begin{align*}
\LOLA_i &= \nabla_i \left[ L^i - \al \sum_{j \neq i} (\nabla_{j} L^i) ^\tr \nabla_j L^j \right] \\
&= \nabla_i L^i - \al \sum_{j \neq i} (\nabla_{ji} L^i)^\tr \nabla_j L^j + (\nabla_{ji} L^j)^\tr \nabla_j L^i \\
&= \nabla_i L^i - \al \sum_{j \neq i} \nabla_{ij} L^i \nabla_j L^j - \al \sum_{j \neq i} (\nabla_{ji} L^j)^\tr \nabla_j L^i \\
&= \bxi_i - \al \sum_{j} (H_o)_{ij}\bxi_j - \al \sum_j (H_o^\tr)_{ij} (\nabla L)_{ji} \\
&= \bxi_i - \al (H_o \bxi)_i - \al (H_o^\tr \nabla L)_{ii} \\
&= \left[ \bxi - \al H_o \bxi - \al \diag(H_o^\tr \nabla L) \right]_i
\end{align*}
and thus
\[ \LOLA = (I- \al H_o)\xi - \al \diag(H_o^\tr \nabla L) \]
as required.
\end{proof}


\section{Tandem Game}\label{appc}

We provide a more detailed exposition of the tandem game in this section, including computation of fixed points for NL/LOLA and corresponding losses. Recall that the game is given by
\[ L^1(x,y) = (x+y)^2-2x \qquad \text{and} \qquad L^2(x,y) = (x+y)^2-2y \,. \]
Intuitively, agents wants to have $x \approx -y$ since $(x+y)^2$ is the leading loss, but would also prefer to have positive $x$ and $y$. These are incompatible, so the agents must not be `arrogant' and instead make concessions. The fixed points are given by
\[ \bxi = 2(x+y-1)\begin{pmatrix}
1 \\ 1
\end{pmatrix} = 0 \,, \]
namely any pair $(x,1-x)$. The corresponding losses are $L^1 = 1-2x = -L^2$, summing to $0$ for any $x$. We have
\[ H = 2\begin{pmatrix}
1 & 1 \\ 1 & 1
\end{pmatrix} \succeq 0 \]
everywhere, so all fixed points are SFP. LOLA fails to preserve these, since
\[ \rchi = \diag (H_o^\tr \nabla L) = 4\diag \begin{pmatrix}
0 & 1 \\ 1 & 0
\end{pmatrix}  \begin{pmatrix}
x+y-1 & x+y \\ x+y & x+y-1
\end{pmatrix} = 4(x+y) \begin{pmatrix}
1 \\ 1
\end{pmatrix} \]
which is non-zero for any SFP $(x, 1-x)$. Instead, LOLA can only converge to points such that
\[ \LOLA = \bxi - \al H_o \bxi - \al \lchi = 0 \,. \]
We solve this explicitly as follows:
\begin{align*}
\LOLA &= 2(x+y-1)\begin{pmatrix}
1 \\ 1
\end{pmatrix} - 4\al (x+y-1) \begin{pmatrix}
0 & 1 \\ 1 & 0
\end{pmatrix}\begin{pmatrix}
1 \\ 1
\end{pmatrix} - 4\al (x+y) \begin{pmatrix}
1 \\ 1
\end{pmatrix} \\
&= 2\left[ (1-4\al)(x+y) - (1-2\al) \right] \begin{pmatrix}
1 \\ 1
\end{pmatrix} \,.
\end{align*}
The fixed points for LOLA are thus pairs $(x,y)$ such that
\[ x+y = \frac{1-2\al}{1-4\al} \,, \]
noting that $(1-2\al)/(1-4\al) > 1$ for all $\al > 0$. This leads to worse losses
\[ L^1 = \left(\frac{1-2\al}{1-4\al}\right)^2-2x > 1-2x = L^1(x,1-x) \]
for agent 1 and similarly for agent 2. In particular, losses always sum to something greater than $0$. This becomes negligible as the learning rate becomes smaller, but is always positive nonetheless Taking $\al$ arbitrarily small is not a viable solution since convergence will in turn be arbitrarily slow. LOLA is thus not a strong algorithm candidate for all differentiable games.

\section{Convergence Proofs}\label{appd}

We use Ostrowski's theorem as a unified framework for proving local convergence of gradient-based methods. This is a standard result on fixed-point iterations, adapted from \cite[10.1.3]{Ort}. We also invoke and prove a topological result of our own, \cref{app1.4}, at the end of this section. This is useful in deducing local convergence, though not central to intuition.

\begin{appTheorem}[Ostrowski]\label{th_c1}
Let $F : \Omega \rightarrow \R^d$ be continuously differentiable on an open subset $\Omega \subseteq \R^d$, and assume $\bar{x} \in \Omega$ is a fixed point. If all eigenvalues of $\nabla F(\bar{x})$ are strictly in the unit circle of $\mathbb{C}$, then there is an open neighbourhood $U$ of $\bar{x}$ such that for all $x_0 \in U$, the sequence $F^{(k)}(x_0)$ converges to $\bar{x}$. Moreover, the rate of convergence is at least linear in $k$.
\end{appTheorem}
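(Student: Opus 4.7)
The plan is to reduce Ostrowski's theorem to an application of the Banach fixed-point theorem, but in a carefully chosen norm. The essential subtlety is that the Euclidean operator norm of $A := \nabla F(\bar{x})$ can exceed $1$ even when the spectral radius does not, so the hypothesis $\rho(A) < 1$ alone is not enough: I first need a norm in which $A$ becomes strictly contractive.

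The key technical ingredient is the standard fact that for any $A \in \R^{d \times d}$ and any $\epsilon > 0$ there exists a vector norm $\|\cdot\|_*$ on $\R^d$ whose induced operator norm satisfies $\|A\|_* \leq \rho(A) + \epsilon$. I would either cite this or sketch the classical construction via the Schur decomposition $A = U T U^*$ together with a diagonal rescaling $D_\delta = \diag(1, \delta, \delta^2, \ldots, \delta^{d-1})$: the off-diagonal entries of $D_\delta^{-1} T D_\delta$ are damped by powers of $\delta$, so its Euclidean operator norm converges to $\rho(A)$ as $\delta \to 0$. Setting $\|v\|_* := \|D_\delta^{-1} U^* v\|_2$ for a sufficiently small $\delta$ then yields the desired norm.

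Choosing $\epsilon > 0$ with $\tau := \rho(A) + \epsilon < 1$ and fixing the associated norm $\|\cdot\|_*$, continuity of $\nabla F$ (and of the induced operator norm as a function of matrix entries) produces $r > 0$ and $\tau' \in (\tau, 1)$ such that $\|\nabla F(x)\|_* \leq \tau'$ on the closed ball $B := \{ x : \|x - \bar{x}\|_* \leq r \} \subseteq \Omega$. On the convex set $B$ the fundamental theorem of calculus along the segment from $\bar{x}$ to $x$ then gives
\[ \|F(x) - F(\bar{x})\|_* \leq \sup_{y \in B} \|\nabla F(y)\|_* \cdot \|x - \bar{x}\|_* \leq \tau' \|x - \bar{x}\|_* \,. \]
Since $F(\bar{x}) = \bar{x}$, this simultaneously shows that $F$ maps $B$ into itself and is a strict contraction there.

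The Banach fixed-point theorem then supplies a unique fixed point in $B$ (necessarily $\bar{x}$) together with the iteration bound $\|F^{(k)}(x_0) - \bar{x}\|_* \leq (\tau')^k \|x_0 - \bar{x}\|_*$ for every $x_0 \in B$; equivalence of norms on $\R^d$ transports this to linear convergence in the Euclidean metric, and the $\|\cdot\|_*$-ball $B$ is open in the Euclidean topology, yielding the desired neighbourhood $U$. The only genuinely delicate step is the construction of the adapted norm in which $A$ is contractive; once that is in hand, the rest is a routine continuity-plus-contraction-mapping argument.
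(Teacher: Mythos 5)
Your argument is correct and is essentially the standard proof of Ostrowski's theorem; the paper itself does not prove this statement but cites it from Ortega and Rheinboldt (10.1.3), whose proof rests on exactly the same device you use, namely an adapted norm with $\lVert A \rVert_* \leq \rho(A) + \epsilon$ making the linearisation (and hence, by continuity of $\nabla F$ and the mean value inequality on a small convex ball, the map itself) a strict contraction. The only cosmetic point is that you write the Lipschitz estimate only between $x$ and $\bar{x}$; that already suffices to iterate $\lVert F^{(k)}(x_0) - \bar{x} \rVert_* \leq (\tau')^k \lVert x_0 - \bar{x} \rVert_*$ without invoking the full Banach fixed-point theorem, and the general two-point contraction follows from the same integral bound on the convex ball if you do want to invoke it.
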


\begin{appDefinition}
A matrix $M$ is called \textit{positive stable} if all its eigenvalues have positive real part.
\end{appDefinition}

Recall the simultaneous gradient $\bxi$ and the Hessian $H$ defined for differentiable games. Let $X$ be any matrix with continuously differentiable entries.

\begin{appCorollary}\label{prop2.0}
Assume $\bar{x}$ is a fixed point of a differentiable game such that $XH(\bar{x})$ is positive stable. Then the iterative procedure
\[ F(x) = x - \al X\bxi(x) \]
converges locally to $\bar{x}$ for $\al > 0$ sufficiently small.
\end{appCorollary}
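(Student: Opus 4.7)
The plan is to apply Ostrowski's theorem (\cref{th_c1}) directly to $F(x) = x - \al X(x) \xi(x)$. This map is $C^1$ since $X$ has $C^1$ entries by assumption and $\xi$ is $C^1$ (indeed $C^2$, from the game's $C^3$ losses). The only substantive work is to show that for $\al > 0$ sufficiently small, every eigenvalue of $\nabla F(\bar x)$ lies strictly inside the complex unit disc; Ostrowski's theorem then supplies a neighbourhood of $\bar x$ on which the iterates $F^{(k)}(x_0)$ converge to $\bar x$ at least linearly.

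First I would compute $\nabla F(\bar x)$. A priori the product rule gives two contributions, one involving $\nabla X$ contracted against $\xi$ and one involving $X$ times $H = \nabla \xi$. Crucially, evaluation at the fixed point kills the first term because $\xi(\bar x) = 0$, so
\[ \nabla F(\bar x) = I - \al X(\bar x) H(\bar x) \,. \]
This is precisely why the hypothesis only needs $XH$ to be positive stable at the single point $\bar x$, regardless of how $X$ varies elsewhere: the local linearisation is blind to $\nabla X$. Setting $M = X(\bar x) H(\bar x)$, the eigenvalues of $\nabla F(\bar x)$ are then exactly $\{ 1 - \al \mu : \mu \in \sigma(M) \}$.

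For the eigenvalue bound, write an eigenvalue of $M$ as $\mu = a + ib$ with $a = \Rea(\mu) > 0$ by positive stability. A direct expansion gives
\[ |1 - \al \mu|^2 = 1 - 2\al a + \al^2 |\mu|^2 \,, \]
which is strictly less than $1$ precisely when $0 < \al < 2a/|\mu|^2$. Since $M$ has only finitely many eigenvalues, taking $\al$ smaller than the minimum of these finitely many thresholds forces the entire spectrum of $\nabla F(\bar x)$ strictly inside the unit circle, and Ostrowski's theorem concludes the proof. There is no real obstacle here; the argument is essentially a template that will be reused in later sections for specific choices of $X$ (the identity for naive learning, $I - \al H_o$ for LookAhead, and an $H_o$-based combination for SOS), with the heavy lifting delegated to verifying positive stability of $XH(\bar x)$ rather than to this convergence deduction.
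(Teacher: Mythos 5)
Your proposal is correct and follows exactly the paper's argument: evaluating $\nabla F(\bar x)$ at the fixed point kills the $\nabla X \cdot \xi$ term since $\xi(\bar x)=0$, leaving $I - \al X H(\bar x)$, whose eigenvalues $1-\al\mu$ lie strictly in the unit disc for $0 < \al < \min 2\Rea(\mu)/|\mu|^2$, after which Ostrowski's theorem applies. No differences worth noting.
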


\begin{proof}
By definition of fixed points, $\xi(\bar{x}) = 0$ and so
\[ \nabla [X\bxi](\bar{x}) = \nabla X(\bar{x}) \bxi(\bar{x}) + X(\bar{x})\nabla\bxi(\bar{x}) = XH(\bar{x}) \]
is positive stable by assumption, namely has eigenvalues $a_k+ib_k$ with $a_k > 0$. It follows that
\[ \nabla F(\bar{x}) = I - \al \nabla [X\bxi](\bar{x}) \]
has eigenvalues $1-\al a_k - i\al b_k$, which are in the unit circle for small $\al$. More precisely,
\begin{align*}
& \left| 1-\al a_k - i\al b_k \right|^2 < 1 \\
\iff \quad & \, 1-2\al a_k + \al^2 a_k^2 + \al^2 b_k^2 < 1 \\
\iff \quad & \, 0 < \al < \frac{2a_k}{a_k^2 + b_k^2}
\end{align*}
which is always possible for $a_k > 0$. Hence $\nabla F(\bar{x})$ has eigenvalues in the unit circle for $0 < \al < \min_k 2a_k/(a_k^2 + b_k^2)$, and we are done by Ostrowski's Theorem since $\bar{x}$ is a fixed point of $F$.
\end{proof}

We apply this corollary to LookAhead, which is given by
\[ F(\T) = \T - \al X\bxi(\T) \]
where  $X = (I-\al H_o)$. It is thus sufficient to prove the following result.

\begin{appTheorem}\label{th2.3}
Let $H \succeq 0$ invertible with symmetric diagonal blocks. Then there exists $\ep > 0$ such that $(I-\al H_o)H$ is positive stable for all $0 < \al < \ep$.
\end{appTheorem}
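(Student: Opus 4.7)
My goal is to show that $M := (I - \alpha H_o) H$ is positive stable for small $\alpha > 0$. By Lyapunov's theorem, this is equivalent to finding a symmetric positive definite $Q$ (possibly $\alpha$-dependent) with $Q M + M^T Q \succ 0$. Setting $Q = P^T P$, this is in turn equivalent to finding a similarity $P$ under which $P M P^{-1}$ has strictly positive definite symmetric part in the standard inner product --- the paper's ``analysis of positive definiteness with respect to a new inner product''. Positive stability of $M$ then follows by invariance of spectrum under similarity.

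\textbf{Natural attempt and obstacle.} First try $Q = I$. With $S = (H + H^T)/2 \succeq 0$, $A = (H - H^T)/2$ (which coincides with the antisymmetric part of $H_o$ since $H_d$ is symmetric), and $S_o$ the symmetric part of $H_o$, one obtains $M + M^T = 2S - \alpha(H_o H + H^T H_o^T)$. On $(\ker S)^\perp$ the matrix $S$ is strictly positive definite and dominates the $O(\alpha)$ correction for $\alpha$ small. The obstruction is $\ker S$, which is nonzero precisely when $H$ has purely imaginary eigenvalues (compatible with $H \succeq 0$ and invertibility, as the paper flags). For real $v \in \ker S$, invertibility of $H$ forces $A|_{\ker S}$ to be injective, so $Hv = Av$ with $Av \neq 0$, and a short calculation using $v^T A v = 0$ and $v^T A^2 v = -\|Av\|^2$ gives
\[
v^T(M + M^T)v \;=\; 2\alpha\,\|Av\|^2 \;-\; 2\alpha\,v^T S_o A v\,.
\]
The positive term $2\alpha\,\|Av\|^2$ is exactly the perturbative push of imaginary eigenvalues of $H$ into the right half-plane (it matches $\mathrm{Re}(d\lambda/d\alpha)|_{\alpha=0} = b^2$ for a simple eigenvalue $\lambda = ib$), but the indefinite cross term $v^T S_o A v$ need not be dominated by it, so $Q = I$ alone is insufficient.

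\textbf{Refinement and main obstacle.} Replace $Q = I$ by $Q = I + \alpha Y$ for a symmetric $Y$ chosen so that $YH + H^T Y$ cancels or dominates the indefinite contribution on $\ker S$. Expanding,
\[
QM + M^T Q \;=\; 2S \;+\; \alpha\bigl(YH + H^T Y - H_o H - H^T H_o^T\bigr) \;+\; O(\alpha^2)\,,
\]
a two-term dominance / Schur-complement style estimate combines the $O(1)$ positive gap from $2S$ on $(\ker S)^\perp$ with the $O(\alpha)$ positive gap from the bracket on $\ker S$ into a strictly positive lower bound on all of $\mathbb{R}^d$ for $\alpha$ small, after which Lyapunov finishes the proof. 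The main obstacle is the explicit construction of $Y$: since $H_o$ is neither symmetric nor related to $H$ via any clean eigen-relation (a point the authors emphasise), $Y$ cannot be a simple polynomial in $H$. It must be built from the orthogonal projection onto $\ker S$ together with the antisymmetric action $A$, and one must verify that the resulting coupling between $\ker S$ and $(\ker S)^\perp$ does not destroy positivity there --- in other words, that the ``change of inner product'' $Q$ is compatible with the block decomposition separating $H_d$ from $H_o$.
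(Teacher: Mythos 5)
Your high-level strategy coincides with the paper's: exhibit a similarity transformation (equivalently a Lyapunov certificate $Q$) under which $(I-\al H_o)H$ has positive definite symmetric part, and conclude positive stability by invariance of the spectrum under change of basis. You also correctly isolate the obstruction (the kernel of $S=(H+H^\tr)/2$, where invertibility forces $Hv=Av\neq 0$) and the correct rescuing term $\al\lVert Av\rVert^2$. However, there is a genuine gap: the whole proof hinges on actually constructing $Y$, and you leave this as an acknowledged ``main obstacle'' while steering toward the wrong object --- you predict that $Y$ must be built from the orthogonal projection onto $\ker S$ and the antisymmetric action $A$, and you flag unresolved coupling between $\ker S$ and its complement. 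The resolution is far simpler and purely algebraic: write
\[ (I-\al H_o)H \;=\; (I+\al H_d)H \;-\; \al H^2 \]
and conjugate by the principal square root $M=(I+\al H_d)^{1/2}$, i.e.\ take $Q=(I+\al H_d)^{-1}$, so $Y=-H_d$ to first order. Then $M^{-1}(I+\al H_d)HM=M^\tr HM\succeq 0$ \emph{exactly}, with no error term to control, while the remainder satisfies $-\al u^\tr H^2u=\al\lVert Au\rVert^2$ whenever $Su=0$ --- strictly positive precisely on the problematic subspace. Indeed, with $Y=-H_d$ your bracket $YH+H^\tr Y-H_oH-H^\tr H_o^\tr$ collapses to $-(H^2+(H^\tr)^2)$, whose quadratic form on $\ker S$ equals $2\lVert Av\rVert^2$ with the $S_o$ cross term cancelled automatically. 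The cancellation you hoped to engineer is there, but you never find the matrix that produces it, and your proposed ansatz points away from it.

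A second, smaller gap: the ``two-term dominance / Schur-complement style estimate'' combining the $O(1)$ gap on $(\ker S)^\perp$ with the $O(\al)$ gap on $\ker S$ is asserted rather than proved, and as stated the admissible range of $\al$ depends on the test direction $u$. The paper makes this uniform by a compactness argument over the unit sphere (continuity of $(\al,u)\mapsto u^\tr M^{-1}GMu$ plus compactness of $S^{d-1}$ yields a single $\ep>0$); some such step, or an explicit quantitative bound on the cross terms, is needed to finish.
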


\begin{appRemark}
Note that $(I-\al H_o)H$ may fail to be positive \textit{definite}, though true in the case of $2 \times 2$ matrices. This no longer holds in higher dimensions, exemplified by the Hessian
\[ H =  \begin{pmatrix}
9 & -4 & -3 & -3 \\
-2 & 1 & 2 & 1 \\
-3 & 0 & 1 & 0 \\
-3 & 1 & 2 & 1
\end{pmatrix} . \]
By direct computation (symbolic in $\al$), one can show that $G = (I-\al H_o)H$ always has positive eigenvalues for small $\al > 0$, whereas its symmetric part $S$ always has a negative eigenvalue with magnitude in the order of $\al$. This implies that $S$ and in turn $G$ is not positive definite. As such, an analytical proof of the theorem involving bounds on the corresponding bilinear form will fail.

This makes the result all the more interesting, but more involved. Central to the proof is a similarity transformation proving positive definiteness \textit{with respect to a different inner product}, a novel technique we have not found in the multi-agent learning literature.
\end{appRemark}

\begin{proof}
We cannot study the eigenvalues of $G$ directly, since there is no necessary relationship between eigenpairs of $H$ and $H_o$. In the aim of using analytical tools, the trick is to find a positive definite matrix which is similar to $G$, thus sharing the same positive eigenvalues. First define
\[ G_1 = (I+\al H_d)H \qquad \text{ and } \qquad G_2 = - \al H^2 \,, \]
where $H_d$ is the sub-matrix of diagonal blocks,and rewrite
\[ G = (I-\al H_o)H = (I-\al (H - H_d))H = (I+\al H_d)H - \al H^2 = G_1 + G_2 \,. \]
Note that $H_d$ is block diagonal with symmetric blocks $\nabla_{ii} L^i \succeq 0$, so $(I+\al H_d)$ is symmetric and positive definite for all $\al \geq 0$. In particular its principal square root
\[ M = (I+\al H_d)^{1/2} \]
is unique and invertible. Now note that
\[ M^{-1}G_1M = M^{-1}M^2HM = M^\tr H M \,, \]
which is positive semi-definite since
\[ u^\tr M^\tr H M u = (Mu)^\tr H (Mu) \geq 0 \]
for all non-zero $u$. In particular $M$ provides a similarity transformation which eliminates $H_d$ from $G_1$ while simultaneously delivering positive semi-definiteness. We can now prove that
\[ M^{-1}GM = M^{-1}G_1M + M^{-1}G_2M \]
is positive definite, establishing positive stability of $G$ by similarity. Let $m = d-1$ where $d$ is the vector space dimension, namely $H \in \R^{d \times d}$. Recall that the $m$-sphere $S^m \subset \R^d$ is the space of unit vectors in $\R^d$. Take any $u \in S^m$ and consider the quantity
\[ u^\tr M^{-1}GM u \,. \]
First note that a Taylor expansion of $M$ in $\al$ yields
\[ M = (I+\al H_d)^{1/2} = I + O(\al) \]
and 
\[ M^{-1} = (I+\al H_d)^{-1/2} = I + O(\al) \,. \]
This implies in turn that
\[ u^\tr M^{-1}GM u = u^\tr G u + O(\al) \,. \]
There are two cases to distinguish. If $u^\tr H u > 0$ then
\begin{align*}
u^\tr M^{-1}GM u &= u^\tr G u + O(\al) = u^\tr G_1 u + O(\al) = u^\tr H u + O(\al) > 0
\end{align*}
for $\al$ sufficiently small. Otherwise, $u^\tr H u = 0$ and consider decomposing $H$ into symmetric and antisymmetric parts $S = (H+H^\tr)/2$ and $A = (H-H^\tr)/2$, so that $H = S+A$. By antisymmetry of $A$ we have $u^\tr A u = 0$ and hence $u^\tr H u = 0 = u^\tr S u$. Now $H \succeq 0$ implies $S \succeq 0$, so by Cholesky decomposition of $S$ there exists a matrix $T$ such that $S = T^\tr T$. In particular $0 = u^\tr S u = \lVert Tu \rVert^2$ implies $Tu = 0$, and in turn $Su = 0$. Since $H$ is invertible and $u \neq 0$, we have $0 \neq Hu = Au$ and so $\lVert Au \rVert^2 > 0$. It follows in particular that
\[ -\al u^\tr H^2 u = -\al u^\tr (S^\tr  - A^\tr)(S+A) u = \al u^\tr A^\tr A u = \al \lVert Au \rVert^2 > 0 \,. \]
Using positive semi-definiteness of $M^{-1}G_1M$,
\begin{align*}
u^\tr M^{-1}GM u &= u^\tr M^{-1}G_1M u + u^\tr M^{-1}G_2M u \\
&\geq -\al u^\tr M^{-1}H^2M u \\
&= -\al u^\tr H^2 u + O(\al^2) \\
&= \al \lVert Au \rVert^2 + O(\al^2) > 0
\end{align*}
for $\al > 0$ small enough. We conclude that for any $u \in S^m$ there is $\ep_u > 0$ such that
\[ u^\tr M^{-1}GM u > 0 \]
for all $0 < \al < \ep_u$, where $g(\al, u) = u^\tr M^{-1}GM u$ is a function $g : \R^+ \times S^m \rightarrow \R$ with $S^m$ compact. By \cref{app1.4}, this can be extended uniformly with some $\ep > 0$ such that
\[ u^\tr M^{-1}GM u > 0 \]
for all $u \in S^m$ and $0 < \al < \ep$. It follows that $M^{-1}GM$ is positive definite for all $0 < \al < \ep$ and thus $G$ is positive stable for $\al$ in the same range, by similarity.
\end{proof}

\begin{appCorollary}
LookAhead converges locally to stable fixed points for $\al > 0$ sufficiently small.
\end{appCorollary}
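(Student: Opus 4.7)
The plan is to derive this as an essentially immediate consequence of \cref{th2.3} and the Ostrowski-based argument of \cref{prop2.0}. First I would observe that the LookAhead iteration can be written in the form
\[ F(\T) = \T - \al X(\T)\xi(\T), \qquad X(\T) = I - \al H_o(\T), \]
so that $F$ is a smooth map of $\R^d$ into itself (thrice differentiability of the losses is being assumed throughout), and the fixed points of $F$ are precisely the fixed points $\xi(\bT) = 0$ of the game.

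Next, I would verify the hypotheses of \cref{th2.3} at a stable fixed point $\bT$. By definition, $H(\bT) \succeq 0$ and $H(\bT)$ is invertible. The diagonal blocks of $H(\bT)$ are the single-player Hessians $\nabla_{ii} L^i(\bT)$, which are symmetric by equality of mixed partials for the $C^2$ function $L^i$. Hence \cref{th2.3} applies and yields $\ep > 0$ such that $(I - \al H_o(\bT))H(\bT)$ is positive stable for every $0 < \al < \ep$.

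The final step is to translate this into a statement about the spectrum of $\nabla F(\bT)$ and invoke Ostrowski. Since $\xi(\bT) = 0$, the product rule gives
\[ \nabla F(\bT) = I - \al\,\nabla[X\xi](\bT) = I - \al\,X(\bT)H(\bT) = I - \al(I - \al H_o(\bT))H(\bT), \]
with the $\nabla X(\bT)\xi(\bT)$ contribution vanishing. Writing the eigenvalues of the positive stable matrix $(I - \al H_o(\bT))H(\bT)$ as $a_k + i b_k$ with $a_k > 0$ (both depending on $\al$), the eigenvalues of $\nabla F(\bT)$ are $1 - \al a_k - i\al b_k$, which lie strictly inside the complex unit disc exactly when $\al < 2a_k/(a_k^2 + b_k^2)$. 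Shrinking $\ep$ if necessary to satisfy this bound for all $k$ simultaneously, \cref{th_c1} yields a neighbourhood $U$ of $\bT$ on which the iterates $F^{(k)}(\T_0)$ converge at least linearly to $\bT$, which is the desired conclusion.

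The only non-routine point is the last one: because $X$ itself depends on $\al$, the eigenvalues $a_k(\al) + i b_k(\al)$ are not fixed, and I cannot appeal to \cref{prop2.0} as a black box. The main technical care needed is therefore to track how these eigenvalues behave as $\al \to 0$, using continuity of the spectrum of $(I - \al H_o(\bT))H(\bT)$ in $\al$ together with the strict positivity of the real parts granted by \cref{th2.3}, to choose $\ep$ small enough that $\al(a_k^2 + b_k^2) < 2a_k$ holds uniformly. This is the only step with any substance; everything else is a direct unpacking of definitions.
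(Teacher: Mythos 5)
Your route is the same as the paper's: verify the hypotheses of \cref{th2.3} at an SFP (using $\xi(\bT)=0$, $H(\bT)\succeq 0$ invertible, and symmetry of the diagonal blocks $\nabla_{ii}L^i$ from equality of mixed partials), then run the Ostrowski computation of \cref{prop2.0} on $\nabla F(\bT) = I - \al(I-\al H_o)H(\bT)$. The paper's own proof is exactly this, stated in two lines with \cref{prop2.0} invoked as a black box. You correctly notice that this black-box invocation is not licit, because the preconditioner $X = I - \al H_o$ itself depends on $\al$: the eigenvalues $a_k(\al) + i b_k(\al)$ of $XH$ move as $\al$ does, so the condition $\al < 2a_k/(a_k^2+b_k^2)$ is self-referential. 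That observation is to your credit and identifies a real gap that the paper's proof silently steps over.

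However, your proposed repair --- continuity of the spectrum in $\al$ plus the strict positivity $a_k(\al)>0$ from \cref{th2.3} --- does not close the gap in precisely the case that matters. If $H$ has a purely imaginary eigenvalue $ib$ with $b\neq 0$ (e.g.\ the cyclic game $L^1 = xy = -L^2$, which is the motivating example for LookAhead), then $a_k(\al)\to 0$ as $\al\to 0$ while $|\la_k(\al)|^2 \to b^2 > 0$, so the required inequality $\al(a_k^2+b_k^2) < 2a_k$ pits two quantities against each other that both vanish. Positivity of $a_k(\al)$ for each $\al$ together with continuity is consistent with, say, $a_k(\al) = \al^2$, in which case $2a_k - \al|\la_k|^2 = -\al b^2 + O(\al^2) < 0$ and convergence fails. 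What is actually needed is a \emph{quantitative} lower bound $a_k(\al) \geq c\,\al$ with $2c > |\la_k(0)|^2$; this is exactly what the structure $G = H - \al H_o H = (I+\al H_d)H - \al H^2$ provides (the $-\al H^2$ term contributes real part $\al\lVert Au\rVert^2 = \al\lVert Hu\rVert^2$ along the rotational directions, which beats $\tfrac{1}{2}\al|\la_k|^2$), and it must be extracted from the similarity-transformation argument inside the proof of \cref{th2.3} rather than from soft continuity. So the step you flag as ``the only one with any substance'' is indeed the substantive one, but the tools you name for it are insufficient; you would need to carry the $O(\al)$ estimates of \cref{th2.3} through to the eigenvalue bound rather than discarding them after concluding positive stability.
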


\begin{proof}
For any SFP $\bT$ we have $\bxi(\bT) = 0$ and $H(\bT) \succeq 0$ invertible by definition, with diagonal blocks $\nabla_{ii} L^i$ symmetric by twice continuous differentiability. We are done by the result above and \cref{prop2.0}.
\end{proof}

We now prove that local convergence results descend to SOS. The following lemma establishes the crucial claim that our criterion for $p$ is $C^1$ in neighbourhoods of fixed points. This is necessary to invoke analytical arguments including Ostrowski's Theorem, and would be untrue globally.

\begin{appLemma}\label{lem}
If $\bT$ is a fixed point and $\al$ is sufficiently small then $p = \lVert \xi \rVert^2$ in a neighbourhood of $\bT$.
\end{appLemma}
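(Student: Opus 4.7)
The plan is to find a neighbourhood $U$ of $\bT$ on which simultaneously $p_2 = \|\xi\|^2$ and $p_1 \geq \|\xi\|^2$, so that $p = \min\{p_1,p_2\} = \|\xi\|^2$ throughout $U$. The first half is immediate: by continuity of $\xi$ and $\xi(\bT)=0$, I can shrink $U$ so that $\|\xi\| < b$ on it, which forces $p_2 = \|\xi\|^2$ from the algorithm definition and also gives $\|\xi\|^2 < b^2 < 1$.

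The second half is the main work. If $\langle -\al\lchi, \xi_0\rangle > 0$ then $p_1 = 1 \geq \|\xi\|^2$ by the observation just made. Otherwise $p_1 = \min\{1, R\}$ with
\[ R = \frac{a\|\xi_0\|^2}{|\langle -\al\lchi, \xi_0\rangle|} \geq 0, \]
and the only non-trivial subcase is $R<1$, for which I need $R \geq \|\xi\|^2$. The main inequality is obtained as follows. By Cauchy--Schwarz, $|\langle -\al\lchi, \xi_0\rangle| \leq \al\|\lchi\|\,\|\xi_0\|$, giving $R \geq a\|\xi_0\|/(\al\|\lchi\|)$. Since $\xi_0 = (I-\al H_o)\xi$, we have $\|\xi_0\| \geq (1-\al\|H_o\|)\|\xi\| \geq \tfrac{1}{2}\|\xi\|$ provided $\al$ is small relative to the uniform bound on $\|H_o\|$ over a compact neighbourhood of $\bT$ (which exists by continuity of $H$, itself guaranteed by the $C^3$ assumption on losses). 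Combined with a continuous upper bound $\|\lchi\| \leq C$ on the same neighbourhood, this yields $R \geq a\|\xi\|/(2\al C)$, which dominates $\|\xi\|^2$ as soon as $\|\xi\| \leq a/(2\al C)$. This last condition is achieved by a further shrinking of $U$ by continuity of $\xi$.

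The only real obstacle is juggling the case analysis cleanly and confirming that no bound degenerates as we approach $\bT$. Both issues dissolve: continuity of $H$ and $\nabla L$ delivers uniform control of $\|H_o\|$ and $\|\lchi\|$ near $\bT$, while smallness of $\al$ keeps $I-\al H_o$ well-conditioned, so $\|\xi_0\|$ remains comparable to $\|\xi\|$. At $\bT$ itself the ratio $R$ is formally $0/0$, but this is inconsequential: the estimates apply on $U\setminus\{\bT\}$, and the desired identity $p = \|\xi\|^2 = 0$ holds trivially at $\bT$ under any convention for the undefined case.
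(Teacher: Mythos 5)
Your proposal is correct and follows essentially the same route as the paper's proof: establish $p_2 = \lVert\xi\rVert^2$ by continuity of $\xi$, then show $p_1 \geq \lVert\xi\rVert^2$ via Cauchy--Schwarz on the ratio, a lower bound $\lVert\xi_0\rVert \gtrsim \lVert\xi\rVert$ from well-conditioning of $I-\al H_o$, a uniform bound on $\lVert\lchi\rVert$, and a final shrinking of the neighbourhood so that $\lVert\xi\rVert$ is small enough. Your explicit handling of the $\langle -\al\lchi,\xi_0\rangle > 0$ branch and of the degenerate ratio at $\bT$ itself is slightly more careful than the paper's write-up, but the substance is identical.
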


\begin{proof}
First note that $\bxi(\bT) = 0$, so there is a (bounded) neighbourhood $V$ of $\bT$ such that $\lVert \bxi(\T) \rVert < b$ for all $\T \in V$, for any choice of hyperparameter $b \in (0,1)$. In particular $p_2(\T) = \lVert \bxi(\T) \rVert^2$ by definition of the second criterion. We want to show that $p(\T) = p_2(\T)$ near $\bT$, or equivalently $p_1(\T) \geq p_2(\T)$. Since $p_2(\T) = \lVert \bxi(\T) \rVert^2 < b^2 < 1$ in $V$, it remains only to show that
\[ \frac{-a\lVert \bxi_0 \rVert^2}{\langle \LO, \bxi_0 \rangle} \geq \lVert \bxi(\T) \rVert^2 \]
in some neighbourhood $U \subseteq V$ of $\bT$, for any choice of hyperparameter $a \in (0,1)$. Now by boundedness of $V$ and continuity of $\rchi$, there exists $c > 0$ such that $\lVert -\al \lchi(\T) \rVert = \al^2\lVert \lchi(\T) \rVert < c$ for all $\T \in V$ and bounded $\al$. It follows by Cauchy-Schwartz that
\[ \frac{-a\lVert \bxi_0 \rVert^2}{\langle \LO, \bxi_0 \rangle} \geq \frac{a\lVert \bxi_0 \rVert}{\lVert \LO \rVert} > a\lVert \bxi_0 \rVert/c \]
in $V$. Now note that
\[ \lVert \xi_0 \rVert = \lVert (I - \al H_o)\xi \rVert \geq d\lVert \xi \rVert \]
in $V$, for some $d > 0$ and $\al$ sufficiently small, by boundedness of $V$ and continuity of $H_o$. Finally there is a sub-neighbourhood $U \subset V$ such that $\lVert \bxi(\T) \rVert < ad/c$ for all $\T \in U$, so that $ad\lVert \bxi \rVert/c > \lVert \bxi(\T) \rVert^2$ and hence
\[ \frac{-a\lVert \bxi_0 \rVert^2}{\langle \LO, \bxi_0 \rangle} > \lVert \bxi \rVert^2 = p_2 \]
in $U$. Hence $p(\T) = \min\{p_1(\T), p_2(\T)\} = p_2(\T) = \lVert \bxi(\T) \rVert^2$ for all $\T \in U$, as required.
\end{proof}

\begin{appTheorem}
SOS converges locally to stable fixed points for $\al > 0$ sufficiently small.
\end{appTheorem}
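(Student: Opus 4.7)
The plan is to apply Ostrowski's theorem to the SOS iteration $F(\T) = \T - \al \xi_p(\T)$, which reduces local convergence to three items: (i) $C^1$ regularity of $F$ on a neighbourhood of the stable fixed point $\bT$; (ii) $F(\bT) = \bT$; and (iii) $\nabla F(\bT)$ has spectrum strictly inside the unit disk of $\mathbb{C}$. Item (i) is immediate from \cref{lem}: on a neighbourhood $U$ of $\bT$ we have $p(\T) = \lVert \xi(\T) \rVert^2$, which is smooth, so $\xi_p = (I - \al H_o)\xi - \al p \lchi$ inherits the $C^2$ regularity of the losses on $U$. For (ii), $\xi(\bT) = 0$ forces $p(\bT) = 0$, hence $\xi_p(\bT) = 0$ and $F(\bT) = \bT$.

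The heart of the argument is the Jacobian in (iii). Using $p = \xi^\tr \xi$ on $U$, the chain rule yields $\nabla p = 2 H^\tr \xi$, which vanishes at $\bT$ because $\xi(\bT) = 0$. Differentiating $\xi_p$ term by term and evaluating at $\bT$, the product rule gives $\nabla(H_o \xi)(\bT) = H_o(\bT) H(\bT)$, since the factor $\xi(\bT) = 0$ kills the term involving the derivative of $H_o$. The shaping contribution $-\al [\lchi (\nabla p)^\tr + p \nabla \lchi](\bT)$ vanishes entirely, because both $p(\bT) = 0$ and $\nabla p(\bT) = 0$. I therefore arrive at the clean identity
\[ \nabla \xi_p(\bT) = (I - \al H_o(\bT)) H(\bT), \]
so the SOS Jacobian at a stable fixed point coincides exactly with the LookAhead one.

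From here the proof closes quickly. By \cref{prop5.0}, $(I - \al H_o) H$ is positive stable for $\al$ sufficiently small, and the eigenvalue arithmetic already carried out in \cref{prop2.0} shows that $\nabla F(\bT) = I - \al (I - \al H_o) H$ has spectrum strictly inside the unit circle for (possibly smaller) $\al > 0$. Ostrowski's theorem then yields $F^{(k)}(\T) \to \bT$ for every $\T$ in a neighbourhood of $\bT$.

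The main obstacle is that the SOS scaling $p(\T) = \min\{p_1, p_2\}$ is not globally $C^1$, so $\xi_p$ cannot be differentiated naively at $\bT$; \cref{lem} resolves this by showing that the non-smooth $p_1$ branch is inactive near $\bT$ and $p = \lVert \xi \rVert^2$ locally. A related subtlety is that the shaping contribution $-\al p \lchi$ could, a priori, add a $-\al \lchi(\bT) (\nabla p(\bT))^\tr$ rank-one perturbation to the Jacobian and destroy positive stability; the quadratic scaling $p = \lVert \xi \rVert^2$ makes not only $p$ but also $\nabla p$ vanish at $\bT$, killing this perturbation precisely. This is the structural reason the second criterion was designed to use $\lVert \xi \rVert^2$ rather than, for instance, $\lVert \xi \rVert$.
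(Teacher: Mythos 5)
Your proof is correct and follows essentially the same route as the paper's: invoke \cref{lem} to get $p = \lVert \xi \rVert^2$ (hence $C^1$ regularity of $\xi_p$) near $\bT$, observe that $p(\bT) = 0$ and $\nabla p(\bT) = 2H^\tr\xi(\bT) = 0$ so that $\nabla \xi_p(\bT) = (I - \al H_o)H(\bT)$ coincides with the LookAhead Jacobian, and then conclude via \cref{prop5.0}, \cref{prop2.0} and Ostrowski's theorem. Your closing remark that the quadratic scaling $\lVert\xi\rVert^2$ is needed precisely so that $\nabla p$ (and not just $p$) vanishes at $\bT$ correctly identifies the design rationale behind the second criterion.
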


\begin{proof}
Though the criterion for $p$ is dual, we will only use the second part. More precisely,
\[ p = \min\{p_1,p_2\} \leq p_2 = \lVert \bxi \rVert \]
if $\lVert \bxi \rVert < b$. The aim is to show that if $\bT$ is an SFP then $\nabla \bxi_p(\bT)$ is positive stable for small $\al$, using Ostrowski to conclude as usual. The first problem we face is that $\nabla \bxi_p$ does not exist everywhere, since $p(\T)$ is not a continuous function. However we know by \cref{lem} that $p = \lVert \xi \rVert^2$ in a neighbourhood $U$ of $\bT$, so $\bxi_p$ is continuously differentiable in $U$. Moreover, $p(\bT) = \lVert \bxi(\bT) \rVert^2 = 0$ with gradient
\[ \nabla p(\bT) = 2H^\tr \bxi(\bT) = 0 \]
by definition of fixed points. It follows that
\[ \nabla \bxi_p(\bT) = (I-\al H_o)H(\bT) - \al \nabla p(\bT) \rchi(\bT) - \al p(\bT) \nabla\rchi(\bT) = (I-\al H_o)H(\bT) \]
which is identical to LookAhead. This is positive stable for all $0 < \al < \ep$, and $\bT$ is a fixed point of the iteration since
\[ \bxi_p(\bT) = (I-\al H_o)\bxi(\bT) - \al p(\bT)\rchi(\bT) = 0 \,. \]
We conclude by \cref{prop2.0} that SOS converges locally to SFP for any $a,b \in (0,1)$ and $\al$ sufficiently small.
\end{proof}

\begin{appLemma}\label{app1.4}
Let $g : \R^+ \times Y \rightarrow Z$ continuous with $Y$ compact and $Z \subseteq \R$. Assume that for any $u \in Y$ there is $\ep_u > 0$ such that $g(\al, u) > 0$ for all $0 < \al < \ep_u$. Then there exists $\ep > 0$ such that $g(\al, u) > 0$ for all $0 < \al < \ep$ and $u \in Y$.
\end{appLemma}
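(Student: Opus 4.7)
The natural plan is to proceed by contradiction using sequential compactness of $Y$ together with the joint continuity of $g$. Assume no uniform $\ep > 0$ satisfies the conclusion. Then for each $n \in \N$ we can choose $\al_n \in (0, 1/n)$ and $u_n \in Y$ with $g(\al_n, u_n) \leq 0$. Sequential compactness of $Y$ allows us to pass to a subsequence (relabelled) such that $u_n \to u^* \in Y$, so $(\al_n, u_n)$ accumulates at the boundary point $(0, u^*)$ of the domain $\R^+ \times Y$.

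Applying the hypothesis at $u^*$ provides some $\ep^* > 0$ such that $g(\al, u^*) > 0$ for every $\al \in (0, \ep^*)$. The goal is then to transfer this positivity from the fibre over $u^*$ into a uniform product neighbourhood. For any compact subinterval $[\tau, \ep^*/2] \subset (0, \ep^*)$, continuity of $g$ at each $(\al, u^*)$ with $\al$ in this interval yields an open product neighbourhood inside the open set $\{g > 0\}$; by compactness of $[\tau, \ep^*/2]$ a finite subcover produces an open $W \ni u^*$ in $Y$ with $g > 0$ on $[\tau, \ep^*/2] \times W$. For $n$ sufficiently large one then has $u_n \in W$ and $\al_n < \tau$.

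The main obstacle, and the step I expect to be the hardest, is to extend this uniform control over the non-compact interval $(0, \tau)$, since the cover argument above cannot be applied directly near $\al = 0$. The plan is to bootstrap from the pointwise hypothesis at each $u_n \in W$: by assumption $g(\cdot, u_n) > 0$ on $(0, \ep_{u_n})$ for some $\ep_{u_n} > 0$, and combining this initial-segment positivity with positivity on $[\tau, \ep^*/2] \times W$, together with continuity of $\al \mapsto g(\al, u_n)$, one seeks a lower bound $\ep_{u_n} \geq \tau$ for all large $n$, which combined with $\al_n < \tau$ would contradict $g(\al_n, u_n) \leq 0$. This last bootstrap is delicate and hinges on the pointwise $\ep_{u_n}$ not collapsing faster than the continuity neighbourhoods can accommodate; in the intended application to $g(\al, u) = u^\tr M^{-1} G M u$ the explicit analytic dependence on $\al$ through the Taylor expansion established in the proof of \cref{th2.3} provides the extra structure needed to make this precise.
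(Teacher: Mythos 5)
Your proposal is incomplete, and you have correctly identified where: the finite-subcover argument controls $g$ on $[\tau,\ep^*/2]\times W$ but says nothing about the non-compact initial segment $(0,\tau)$, and the ``bootstrap'' you sketch to rule out $\ep_{u_n}$ collapsing below $\tau$ is never actually carried out. This is not a presentational gap but the entire content of the statement. In fact your suspicion that continuity plus compactness alone cannot close it is correct: the lemma is \emph{false} as stated. Take $Y=[0,1]$, $Z=\R$ and $g(\al,u)=(\al-u)(\al-u/2)$. For $u=0$ we have $g(\al,0)=\al^2>0$ for all $\al>0$, and for $u>0$ we may take $\ep_u=u/2$, since both factors are negative there; so the hypothesis holds at every $u$. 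Yet $g(3u/4,u)=-u^2/16<0$ with $3u/4\to 0$ as $u\to 0$, so no uniform $\ep$ exists. Any purely topological completion of your argument is therefore doomed, and your fallback --- importing the quantitative $\al$-dependence of $g(\al,u)=u^\tr M^{-1}GMu$ from the proof of \cref{th2.3} --- is not an optional refinement but the only viable route.

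It is worth noting that the paper's own proof of \cref{app1.4} fails at precisely the point you flagged. It writes the open set $X_u\supseteq(0,\ep_u)\times\{u\}$ as a union of open products and concludes that one may take $X_u=(0,\ep_u)\times V_u$ for a single neighbourhood $V_u$ of $u$. This is the tube lemma, which requires the fibre $(0,\ep_u)$ to be compact; it is an open interval, and the counterexample above shows the conclusion genuinely fails (there is no $\delta$ with $g>0$ on $(0,\ep_0)\times[0,\delta)$). So neither your argument nor the paper's establishes the lemma in the generality claimed. The repair, for the intended application, is to prove a uniform two-term lower bound directly on the sphere --- e.g.\ $u^\tr M^{-1}GMu\geq u^\tr Su+\al\lVert Au\rVert^2-C\al\lVert Tu\rVert-C\al^2$ with constants uniform over $u\in S^m$, followed by an elementary optimisation in the two regimes $\lVert Tu\rVert$ large and small --- rather than to patch the pointwise thresholds $\ep_u$ together by compactness. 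Your closing paragraph points in exactly this direction; the proof should be restructured around that uniform estimate from the start.
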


\begin{proof}
For any $u \in Y$ there is $\ep_u > 0$ such that
\[ (0, \ep_u) \times \{u\} \subseteq g^{-1}(0, \infty) \,. \]
We would like to extend this uniformly in $u$, namely prove that
\[ (0, \ep) \times Y \subseteq g^{-1}(0, \infty) \,. \]
for some $\ep > 0$. Now $g^{-1}(0, \infty)$ is open by continuity of $g$, so each $(0, \ep_u) \times \{u\}$ has a neighbourhood $X_u$ contained in $g^{-1}(0, \infty)$. Open sets in a product topology are unions of open products, so
\[ X_u = \bigcup_x U_x \times V_x \,. \]
In particular $(0, \ep_u) \subseteq \bigcup_x U_x$ and at least one $V_x$ contains $u$, so we can take the open neighbourhood to be
\[ X_u = (0, \ep_u) \times V_u \subseteq g^{-1}(0, \infty) \]
for some neighbourhood $V_u$ of $u$. In particular $Y \subseteq \bigcup_{u \in Y} V_u$, and by compactness there is a finite cover $Y \subseteq  \bigcup_{i=1}^k V_{u_i}\,$. Letting $\ep = \min \{ \ep_i \}_{i=1}^k > 0$, we obtain the required inclusion
\[ (0, \ep) \times Y \subseteq (0, \ep) \times \bigcup_{i=1}^k V_{u_i} = \bigcup_{i=1}^k (0, \ep) \times V_{u_i} \subseteq \bigcup_{i=1}^k (0, \ep_i) \times V_{u_i} \subseteq g^{-1}(0, \infty) \,. \qedhere  \]
\end{proof}

\section{Non-convergence Proofs}\label{appe}

\begin{appLemma}\label{app1.5}
Let $a_k$ and $b_k$ be sequences of real numbers, and define $c_k = \min\{a_k, b_k\}$. If
\[ L = \lim_{k \to \infty} c_k \qquad \text{and} \qquad L' = \lim_{k \to \infty} a_k \]
both exist then $L \leq L'$.
\end{appLemma}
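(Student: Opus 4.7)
The plan is to derive the inequality from the pointwise bound $c_k \le a_k$ by invoking the standard order-preservation property of limits. This is essentially a one-line argument once the right observation is made.

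First I would note that by definition of the minimum, $c_k = \min\{a_k, b_k\} \le a_k$ for every index $k$. Both sequences $(c_k)$ and $(a_k)$ are assumed convergent, to $L$ and $L'$ respectively. By the standard theorem that if $x_k \le y_k$ for all $k$ and both sequences converge, then $\lim x_k \le \lim y_k$, we conclude $L \le L'$.

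If a self-contained $\varepsilon$-argument is preferred over citing the order-limit theorem, I would argue by contradiction: suppose $L > L'$, and set $\varepsilon = (L - L')/3 > 0$. By convergence, there exists $N$ such that for all $k \ge N$ we have $|c_k - L| < \varepsilon$ and $|a_k - L'| < \varepsilon$, hence $c_k > L - \varepsilon$ and $a_k < L' + \varepsilon$. But then $c_k > L - \varepsilon = L' + 2\varepsilon > a_k$, contradicting $c_k \le a_k$.

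There is no real obstacle here; the lemma is a direct consequence of limit monotonicity. The only subtlety worth flagging is that $(b_k)$ itself need not converge for the statement to hold, and indeed the proof never uses any property of $(b_k)$ beyond the inequality $c_k \le a_k$, which holds regardless of the behaviour of $(b_k)$.
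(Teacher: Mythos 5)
Your proof is correct and follows essentially the same route as the paper: both derive the result from the pointwise bound $c_k \le a_k$ together with a contradiction argument using the definition of the two limits (the paper uses $\delta/2$ where you use $(L-L')/3$, but this is immaterial). Your remark that no property of $(b_k)$ is needed is accurate and consistent with the paper's statement and proof.
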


\begin{proof}
Assume for contradiction that $L > L'$, then there exists $\delta > 0$ such that $L > L'+\delta$. By definition of limits, there exist $M, N \in \mathbb{N}$ such that
\[ \left| c_k-L \right| < \delta/2 \]
and
\[ \left| a_{k'}-L' \right| < \delta/2 \]
for all $k \geq M$, $k' \geq N$. Expanding the absolute value, this implies
\[ L-\delta/2 < c_k < L+\delta/2 \qquad \text{and} \qquad L'-\delta/2 < a_k < L'+\delta/2 \]
for all $k \geq \max\{M,N\}$. Now $c_k \leq a_k$ for all $k$, hence
\[ L-\delta/2 < c_k \leq a_k < L'+\delta/2 \]
which implies the contradiction
\[ L < L'+\delta \,. \qedhere \]
\end{proof}

\begin{appProposition}\label{prop}
If SOS converges to $\bT$ and $\al > 0$ is small then $\bT$ is a fixed point of the game.
\end{appProposition}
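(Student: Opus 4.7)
The plan is to exploit the first criterion on $p$, which was designed precisely to keep $\xi_p$ aligned with $\xi_0$. If the SOS iterates $\theta_k$ converge to $\bar\theta$, then $\theta_{k+1}-\theta_k = -\alpha\,\xi_p(\theta_k) \to 0$, so $\xi_p(\theta_k) \to 0$. The goal is to conclude $\xi_0(\bar\theta)=0$, after which $\xi(\bar\theta)=0$ follows by invertibility of $I-\alpha H_o(\bar\theta)$ for $\alpha$ small.

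The key step is a pointwise lower bound
\[ \langle \xi_p(\theta), \xi_0(\theta) \rangle \geq (1-a)\,\|\xi_0(\theta)\|^2 \]
valid for every $\theta$. Since $\xi_p = \xi_0 - p\alpha\chi$, this reduces to showing $p\,\langle -\alpha\chi, \xi_0\rangle \geq -a\|\xi_0\|^2$, which I would prove by a short case analysis on the sign of $\langle -\alpha\chi, \xi_0\rangle$ and on whether $p_1$ saturates at $1$. When $\langle -\alpha\chi,\xi_0\rangle \geq 0$ the bound is immediate since $p\geq 0$. When $\langle -\alpha\chi,\xi_0\rangle < 0$, the map $p\mapsto p\,\langle -\alpha\chi,\xi_0\rangle$ is monotonically decreasing, so using $p = \min\{p_1,p_2\} \leq p_1$ together with the explicit formula for $p_1$ yields the required inequality (handling separately the subcases $p_1 = 1$ and $p_1 = -a\|\xi_0\|^2/\langle -\alpha\chi,\xi_0\rangle$).

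Given the pointwise bound, Cauchy--Schwartz gives $\|\xi_p(\theta)\| \geq (1-a)\,\|\xi_0(\theta)\|$ everywhere. Applying this along the iterates, $(1-a)\|\xi_0(\theta_k)\| \leq \|\xi_p(\theta_k)\| \to 0$, hence $\|\xi_0(\theta_k)\|\to 0$; by continuity of $\xi_0 = (I-\alpha H_o)\xi$, this gives $\xi_0(\bar\theta)=0$. For $\alpha>0$ small enough (e.g.\ $\alpha < 1/\|H_o(\bar\theta)\|_2$), the matrix $I-\alpha H_o(\bar\theta)$ is invertible, so $\xi(\bar\theta)=0$ and $\bar\theta$ is a fixed point. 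The main obstacle is the case analysis establishing the pointwise bound: $p$ is not directly $p_1$ but the minimum $\min\{p_1,p_2\}$, so one must carefully invoke monotonicity in $p$ of $p\langle -\alpha\chi,\xi_0\rangle$ to transfer the saturation bound satisfied by $p_1$ to $p$ itself. Beyond that, the argument is a straightforward continuity-plus-invertibility argument, and the discontinuities of $p(\theta)$ never enter the limit step since we only exploit that $\xi_p(\theta_k) \to 0$ along the converged sequence.
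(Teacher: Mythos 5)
Your proposal is correct, and it reaches the same conclusion via a cleaner route than the paper's own proof. The paper argues by contradiction at the limit point: it passes to the limit in $\xi_p(\theta_k)=\xi_0(\theta_k)-p(\theta_k)\,\alpha\chi(\theta_k)\to 0$, splits into cases according to the sign of $\langle-\alpha\chi,\xi_0\rangle(\bar\theta)$, and in the negative case needs an auxiliary lemma on limits of minima of sequences (to bound $\lim_k p(\theta_k)$ by the limit of the saturation formula), together with an implicit assumption that $\lim_k p(\theta_k)$ exists -- a point the paper handles only informally, since $p(\theta)$ is not globally continuous. You instead establish the \emph{pointwise} inequality $\langle\xi_p(\theta),\xi_0(\theta)\rangle\geq(1-a)\lVert\xi_0(\theta)\rVert^2$ for every $\theta$ (which is exactly the alignment property the first criterion was designed to enforce, and which the main text already verifies for $p_1$; your monotonicity observation correctly transfers it to $p=\min\{p_1,p_2\}\leq p_1$ since $p\mapsto p\langle-\alpha\chi,\xi_0\rangle$ is decreasing when the inner product is negative, and the bound is trivial when it is nonnegative because $p\geq 0$). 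Cauchy--Schwartz then yields $\lVert\xi_p(\theta)\rVert\geq(1-a)\lVert\xi_0(\theta)\rVert$ everywhere (trivially when $\xi_0(\theta)=0$), so $\xi_0(\theta_k)\to 0$ follows directly from $\xi_p(\theta_k)\to 0$ without any contradiction argument, any case analysis at $\bar\theta$, or any statement about the limiting behaviour of $p(\theta_k)$. The final invertibility step is identical in both proofs. The net gain of your approach is that it eliminates the paper's Lemma on $\lim\min\{a_k,b_k\}$ entirely and is robust to the discontinuities of $p$; the only thing the paper's version ``buys'' is that it works directly with the quantities appearing in Algorithm~\ref{alg1} without first isolating the uniform bound.
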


\begin{proof}
The iterative procedure is given by
\[ \T_{k+1} = F(\T_k) = \T_k - \al\bxi_p(\T_k) \,. \]
If $\T_{k} \to \bT$ as $k \to \infty$ then taking limits on both sides of the iteration yields
\[ \bT = \bT - \al \lim_{k \to \infty} \bxi_p(\T_k) \]
and so $\lim\limits_k \bxi_p(\T_k) = 0$, omitting $k\to \infty$ for convenience. It follows by continuity that
\[ \bxi_0(\bT) + \lim_{k} p(\T_k) \LO(\bT) = 0 \,, \]
noting that $p(\T)$ is not a \textit{globally} continuous function. Assume for contradiction that $\bxi_0(\bT) \neq 0$. There are two cases to distinguish for clarity.
\begin{enumerate}[leftmargin=*, label=(\roman*)]
\item First assume $\langle \LO, \bxi_0 \rangle(\bT) \geq 0$. Note that $\lim_{k} p(\T_k) \geq 0$ since $p(\T) \geq 0$ for all $\T$, and so
\[ \langle \lim_{k} \bxi_p(\T_k), \bxi_0(\bT) \rangle = \lim_k p(\T_k) \langle -\al \rchi, \bxi_0 \rangle(\bT) + \lVert \bxi_0(\bT) \rVert^2 > 0 \,. \]
This is a contradiction since $\lim_k \bxi_p(\T_k) = 0$.
\item Otherwise, $\langle \LO, \bxi_0 \rangle(\bT) < 0$ and hence $\langle \LO, \bxi_0 \rangle(\T) < 0$ in a neighbourhood. In particular there exists $N \in \mathbb{N}$ such that
\[ \langle \LO, \bxi_0 \rangle(\T_k) < 0 \]
for all $k \geq N$. In particular
\[ p_1(\T_k) = \min\left\{1, \frac{-a\lVert \bxi_0(\T_k) \rVert^2}{\langle \LO, \bxi_0 \rangle(\T_k) }\right\} \]
for all $k \geq N$. Now notice that
\[ \lim_k p(\T_k) = \lim_k \min\left\{ 1, \frac{-a\lVert \bxi_0(\T_k) \rVert^2}{\langle \LO, \bxi_0 \rangle(\T_k)}, p_2(\T_k) \right\} \,, \]
which implies
\[ \lim_k p(\T_k) \leq \lim_k \frac{-a\lVert \bxi_0(\T_k) \rVert^2}{\langle \LO, \bxi_0 \rangle(\T_k) } = \frac{-a\lVert \bxi_0(\bT) \rVert^2}{\langle \LO, \bxi_0 \rangle(\bT)} \]
by continuity and \cref{app1.5}. Finally we conclude
\[ \langle \lim_{k} \bxi_p, \bxi_0 \rangle(\T_k) =  \lim_k p(\T_k) \langle \LO, \bxi_0 \rangle(\bT) + \lVert \bxi_0(\bT) \rVert^2 \geq -a\lVert \bxi_0(\bT) \rVert^2 + \lVert \bxi_0(\bT) \rVert^2 > 0 \]
for any $a \in (0,1)$, a contradiction.
\end{enumerate}
In both cases a contradiction is obtained, hence $\bxi_0(\bT) = 0 = (I-\al H_o)\bxi(\bT)$. Now note that $(I-\al H_o)(\bT)$ is singular iff $H_o(\bT)$ has an eigenvalue $1/\al$, which is impossible for $\al$ sufficiently small. Hence $(I-\al H_o)\bxi(\bT) = 0$ implies $\bxi(\bT) = 0$, as required.
\end{proof}

Now assume that $\T$ is initialised randomly (or with arbitrarily small noise around a point), as is standard in ML. We prove that SOS locally avoids strict saddles using the Stable Manifold Theorem, inspired from \cite{Lee2}.

\begin{appTheorem}[Stable Manifold Theorem]
Let $\bx$ be a fixed point for the $C^1$ local diffeomorphism $F : U \rightarrow \R^d$, where $U$ is a neighbourhood of $\bx$ in $\R^d$. Let $E^s \oplus E^u$ be the generalised eigenspaces of $\nabla F(\bx)$ corresponding to eigenvalues with $\left| \la \right| \leq 1$ and $\left| \la \right| > 1$ respectively. Then there exists a \textit{local stable center manifold} $W$ with tangent space $E^s$ at $\bx$ and a neighbourhood $B$ of $\bx$ such that $F(W) \cap B \subset W$ and $\cap_{n=0}^{\infty} F^{-n}(B) \subset W$.
\end{appTheorem}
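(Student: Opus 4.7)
The plan is to reduce the theorem to a standard fixed-point-in-a-Banach-space argument (the Lyapunov--Perron method). After translating so that $\bx = 0$, I would write $F(x) = Ax + \phi(x)$ with $A = \nabla F(0)$, $\phi(0) = 0$, and $\nabla\phi(0) = 0$, so $\phi$ has arbitrarily small Lipschitz constant on a small ball. Using the generalised-eigenspace decomposition of $A$, I would choose an adapted norm on $\R^d = E^s \oplus E^u$ in which $A$ is block-diagonal with $\lVert A_s \rVert \leq 1+\eta$ and $\lVert A_u^{-1} \rVert \leq 1/\la$ for some $\la > 1+\eta$; this is possible because the spectra of $A_s$ and $A_u$ sit in $\{|z|\leq 1\}$ and $\{|z| > 1\}$ respectively, and for any $\delta > 0$ one can exhibit a norm whose operator norm exceeds the spectral radius by at most $\delta$. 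A smooth bump function then extends this to a global modification $\tilde F = A + \tilde\phi$ that agrees with $F$ on a ball $B$ around $0$, with $\tilde\phi$ globally Lipschitz of constant $\ep > 0$ as small as desired by shrinking $B$.

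Next I would parametrise orbits inside the prospective manifold by their initial stable component. For $\bxi \in E^s$ small, a forward orbit $(x_n)_{n \geq 0}$ of $\tilde F$ that remains bounded must satisfy the coupled summation identities
\[ x_n^s = A_s^n \bxi + \sum_{k=0}^{n-1} A_s^{n-1-k}\tilde\phi_s(x_k), \qquad x_n^u = -\sum_{k=n}^{\infty} A_u^{-(k-n+1)} \tilde\phi_u(x_k), \]
where the second identity comes from solving the unstable recursion $x_{n+1}^u = A_u x_n^u + \tilde\phi_u(x_n)$ backward in time using invertibility of $A_u$, the series converging thanks to $\lVert A_u^{-1} \rVert < 1$. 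Interpreting the right-hand side as an operator $T_{\bxi}$ on the Banach space of bounded sequences with sup-norm (or an appropriate weighted variant), routine estimates show that $T_{\bxi}$ maps a closed ball into itself and is a strict contraction when $\ep$ is small relative to $(\la - 1)/\bigl(\la(1+\eta)\bigr)$. The Banach fixed-point theorem then produces a unique bounded orbit $(x_n(\bxi))$, depending continuously on $\bxi$.

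Define $h(\bxi) \coloneqq x_0^u(\bxi)$ and set $W = \{(\bxi, h(\bxi)) : \bxi \in E^s,\ \lVert \bxi \rVert < r\}$, shrunk inside a neighbourhood $B$ of $0$. Using $C^1$ regularity of $\tilde\phi$ together with the uniform contraction principle applied to the fixed-point identity differentiated in $\bxi$, one obtains $h \in C^1$ with $\nabla h(0) = 0$, so $W$ has tangent space $E^s$ at $0$. Invariance $F(W)\cap B \subset W$ is built in: applying $\tilde F$ to an orbit simply shifts the index, and on $B$ we have $F = \tilde F$. Conversely, if $y \in \bigcap_n F^{-n}(B)$ then the forward orbit of $y$ stays in $B$, hence is bounded under $\tilde F$, hence coincides with the unique fixed-point orbit corresponding to $y^s$; therefore $y^u = h(y^s)$ and $y \in W$.

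The principal obstacle is the admission of centre directions $|\la| = 1$ inside $E^s$. In the hyperbolic case these are absent, orbits on $W$ converge exponentially to $\bx$, and the contraction estimates have ample slack. Here the best one can assert is uniform boundedness of forward orbits, and the inequality $\ep(1+\eta)\la/(\la - 1) < 1$ needed to make $T_{\bxi}$ a contraction becomes tight. Carefully balancing the expansion budget $1+\eta$ of $A_s$ in the adapted norm against the contraction rate $\la^{-1}$ of $A_u^{-1}$ and the Lipschitz constant $\ep$ of the truncated nonlinearity, and then lifting Lipschitz regularity of $h$ to $C^1$ regularity via a second pass of the uniform contraction principle on the derivative equation, is the technical heart of the argument. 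Consequently the resulting centre-stable manifold is not unique as a global object: only the germ at $\bx$, its tangent space $E^s$, and the local invariance statements asserted in the theorem survive.
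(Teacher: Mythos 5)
The paper offers no proof of this statement: it is imported as a classical result (the local centre--stable manifold theorem, Theorem III.7 of Shub's \emph{Global Stability of Dynamical Systems}, quoted via the reference the paper points to) and used as a black box in the strict-saddle avoidance argument that follows. Your proposal therefore does strictly more than the paper, and the route you take -- translate to the origin, cut off the nonlinearity, choose an adapted norm splitting $A=\nabla F(0)$ along $E^s\oplus E^u$, run a Lyapunov--Perron contraction on a sequence space to build the graph map $h$, and make a second fiber-contraction pass for $C^1$ regularity and $\nabla h(0)=0$ -- is the standard proof of exactly this statement, including the non-hyperbolic case $|\la|=1$ that the hypothesis $|\la|\leq 1$ permits. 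Your backward summation for the unstable component is correct, and your identification of the delicate points (necessity of truncation, non-uniqueness of the global object, tightness of the estimates in the presence of centre directions) is accurate.

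One place where your parenthetical ``or'' must be resolved: the plain sup-norm on bounded sequences does not work here. Since $E^s$ may carry eigenvalues on the unit circle, the best adapted bound is $\lVert A_s \rVert \leq 1+\eta$ with $\eta>0$, and then both the free term $A_s^n\bxi$ and the forward sum $\sum_{k=0}^{n-1}A_s^{\,n-1-k}\tilde\phi_s(x_k)$ evaluated on a merely bounded sequence are only $O\bigl((1+\eta)^n\bigr)$, so $T_{\bxi}$ does not map the space of bounded sequences into itself and the contraction argument collapses. You must work in the weighted space $\lVert x\rVert_\gamma=\sup_n\gamma^{-n}\lVert x_n\rVert$ for some $\gamma\in(1+\eta,\la)$; there both sums close, the contraction constant is of order $\ep\bigl((\gamma-1-\eta)^{-1}+(\la-\gamma)^{-1}\bigr)$, and any orbit remaining in $B$ is bounded, hence lies in this weighted space, which is precisely what the characterisation $\cap_{n=0}^{\infty} F^{-n}(B)\subset W$ requires via uniqueness of the fixed point. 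With that substitution made mandatory rather than optional, your outline is a correct proof strategy for the quoted theorem.
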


In particular, if $\nabla F(\bx)$ has at least one eigenvalue $\left| \la \right| > 1$ then $E^u$ has dimension at least $1$. Since $W$ has tangent space $E^s$ at $\bx$, with codimension at least one, it follows that $W$ has measure zero in $\R^d$.  This is central in proving that the set of initial points in a neighbourhood which converge through SOS to a given strict saddle $\bT$ has measure zero.

\begin{appTheorem}
SOS locally avoids strict saddles almost surely, for $\al > 0$ sufficiently small.
\end{appTheorem}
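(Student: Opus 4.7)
The plan is to apply the Stable Manifold Theorem to the SOS iteration $F(\T) = \T - \al \xi_p(\T)$ at a strict saddle $\bT$. First I would invoke \cref{lem}, which applies to any fixed point of the game (strict saddles are fixed points by definition), to secure a neighbourhood $U$ of $\bT$ on which $p(\T) = \lVert \xi(\T) \rVert^2$. On $U$ this makes $\xi_p$ as smooth as the losses (so $C^2$ under the standing $C^3$ assumption), and since $p(\bT) = 0$ together with $\nabla p(\bT) = 2 H(\bT)^\tr \xi(\bT) = 0$, the product rule collapses the Jacobian to
\[ \nabla F(\bT) = I - \al (I - \al H_o(\bT)) H(\bT), \]
the same expression that arose for LookAhead. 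For $\al$ small enough $\nabla F(\bT)$ is invertible (its eigenvalues have the form $1 - \al \mu$ with $\mu$ bounded), so $F$ restricts to a $C^1$ local diffeomorphism on a (possibly shrunk) neighbourhood of $\bT$.

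The heart of the argument is to exhibit an eigenvalue of $\nabla F(\bT)$ strictly outside the unit disk. Since $(I - \al H_o(\bT))H(\bT) = H(\bT) + O(\al)$ and the roots of the characteristic polynomial depend continuously on the matrix entries, for $\al$ sufficiently small there is an eigenvalue $\mu_\al$ of $(I - \al H_o(\bT))H(\bT)$ arbitrarily close to a strict-saddle eigenvalue $\mu$ of $H(\bT)$ with $\Rea(\mu) < 0$, so $\Rea(\mu_\al) < 0$ as well. The associated eigenvalue of $\nabla F(\bT)$ is $1 - \al \mu_\al$, and the elementary computation
\[ |1 - \al \mu_\al|^2 = 1 - 2\al \Rea(\mu_\al) + \al^2 |\mu_\al|^2 > 1 \]
places it strictly outside the unit circle. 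In the Stable Manifold Theorem's notation the unstable generalised eigenspace $E^u$ is therefore non-trivial, and the complementary $E^s$ has codimension at least one.

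With these ingredients in place the Stable Manifold Theorem produces a local stable centre manifold $W \ni \bT$ with tangent space $E^s$ and a neighbourhood $B \subseteq U$ such that every forward orbit staying in $B$ and converging to $\bT$ belongs to $\bigcap_{n \geq 0} F^{-n}(B) \subseteq W$. Because $W$ has codimension $\geq 1$ in $\R^d$, it is Lebesgue-null, so the set of initial conditions in $B$ whose SOS trajectories converge to $\bT$ is Lebesgue-null. Random initialisation draws $\T_0$ from a distribution absolutely continuous with respect to Lebesgue measure, so this event has probability zero, yielding the local almost-sure avoidance statement.

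The main obstacle is that $p(\T)$ is not globally $C^1$: the defining minimum has kinks where the two criteria cross and where $\lVert \xi \rVert = b$, so one cannot simply treat $F$ as a smooth map on all of $\R^d$. This is precisely why \cref{lem} is indispensable — it forces $p$ onto its smooth $\lVert \xi \rVert^2$ branch throughout a neighbourhood of every fixed point, restoring the local $C^1$-diffeomorphism structure the dynamical-systems machinery requires. A secondary worry, that the shaping term might contribute an $O(1)$ piece to $\nabla F(\bT)$ and perturb the strict-saddle eigenvalue, is neutralised by the simultaneous vanishing of $p$ and $\nabla p$ at fixed points, which is what reduces the SOS Jacobian to the LookAhead form in the first place.
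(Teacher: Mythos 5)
Your proposal follows essentially the same route as the paper: invoke \cref{lem} to place $p$ on its smooth $\lVert \xi \rVert^2$ branch near the strict saddle, use $p(\bT) = 0$ and $\nabla p(\bT) = 0$ to collapse $\nabla F(\bT)$ to the LookAhead Jacobian $I - \al(I-\al H_o)H(\bT)$, establish the local diffeomorphism via the inverse function theorem, perturb the strict-saddle eigenvalue of $H(\bT)$ by continuity to get an eigenvalue of $\nabla F(\bT)$ outside the unit circle, and conclude from the Stable Manifold Theorem that $W$ is null. All of this is correct and matches the paper's argument step for step.

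There is one genuine gap in your final measure-theoretic step. The Stable Manifold Theorem gives you $\bigcap_{n\geq 0} F^{-n}(B) \subseteq W$, i.e.\ it controls only those orbits that remain in $B$ for \emph{all} forward time. You then assert that ``the set of initial conditions in $B$ whose SOS trajectories converge to $\bT$ is Lebesgue-null,'' but a trajectory starting in $B$ and converging to $\bT$ may leave $B$ for finitely many steps before returning, so it need not lie in $\bigcap_n F^{-n}(B)$. The paper closes this by observing that convergence to $\bT$ implies $F^{N}(\T) \in \bigcap_{n\geq 0}F^{-n}(B) \subseteq W$ for some finite $N$, hence the converging set satisfies $Z \subseteq \bigcup_{N \in \N} F^{-N}(W)$; since $F^{-1}$ is $C^1$, hence locally Lipschitz, each $F^{-N}(W)$ is null, and a countable union of null sets is null. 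You have already established the diffeomorphism structure needed for this, so the fix is immediate, but as written the conclusion does not follow from what you proved.
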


\begin{proof}
Let $\bT$ a strict saddle and recall that SOS is given by
\[ F(\T) = \T - \al(I-\al H_o)\xi(\T) + \al^2 p(\T) \rchi(\T) \,. \]
Recall by \cref{lem} that $p(\T) = \lVert \bxi(\T) \rVert^2$ for all $\T$ in a neighbourhood $U$ of $\bT$. Restricting $F$ to $U$, all terms involved are continuously differentiable and
\[ \nabla F(\bT) = I-\al(I-\al H_o)H(\bT) \]
by assumption that $\bxi(\bT) = 0$. Since all terms except $I$ are of order at least $\al$, $\nabla F(\bT)$ is invertible for all $\al$ sufficiently small. By the inverse function theorem, there exists a neighbourhood $V$ of $\bT$ such that $F$ is has a continuously differentiable inverse on $V$. Hence $F$ restricted to $U \cap V$ is a $C^1$ diffeomorphism with fixed point $\bT$.

By definition of strict saddles, $H(\bT)$ has a negative eigenvalue. It follows by continuity that $(I-\al H_o)H(\bT)$ also has a negative eigenvalue $a+ib$ with $a < 0$ for $\al$ sufficiently small. Finally,
\[ \nabla F(\bT) = I-\al(I-\al H_o)H(\bT) \]
has an eigenvalue $\la = 1-\al a - i \al b$ with
\[ \left| \la \right| = 1-2\al a + \al^2(a^2+b^2) \geq 1-2\al a > 1 \,. \]
It follows that $E^s$ has codimension at least one, implying in turn that the local stable set $W$ has measure zero. We can now prove that
\[ Z = \{ \T \in U \cap V \mid \lim_{n \to \infty} F^n(\T) = \bT \} \]
has measure zero, or in other words, that local convergence to $\bT$ occurs with zero probability. Let $B$ the neighbourhood guaranteed by the Stable Manifold Theorem, and take any $\T \in Z$. By definition of convergence there exists $N \in \N$ such that $F^{N+n}(\T) \in B$ for all $n \geq 0$, so that
\[ F^N(\T) \in \cap_{n=0}^{\infty} F^{-n}(B) \subset W\]
by the Stable Manifold Theorem. This implies that $\T \in F^{-N}(W)$, and finally $\T \in \cup_{n \in \N} F^{-n}(W)$. Since $\T$ was arbitrary, we obtain the inclusion
\[ Z \subseteq \cup_{n \in \N} F^{-n}(W) \,. \]
Now $F^{-1}$ is $C^1$, hence locally Lipschitz and thus preserves sets of measure zero, so that $F^{-n}(W)$ has measure zero for each $n$. Countable unions of measure zero sets are still measure zero, so we conclude that $Z$ also has measure zero. In other words, SOS converges to $\bT$ with zero probability upon random initialisation of $\T$ in $U$.
\end{proof}


\section{Further Gaussian Mixture Experiments}\label{appf}

In the Gaussian mixture experiment, data is sampled from a highly multimodal distribution designed to probe the tendency to collapse onto a subset of modes during training, given in Figure \ref{fig4}.

\begin{figure}[t]
\centering
\includegraphics[height = 4.2cm]{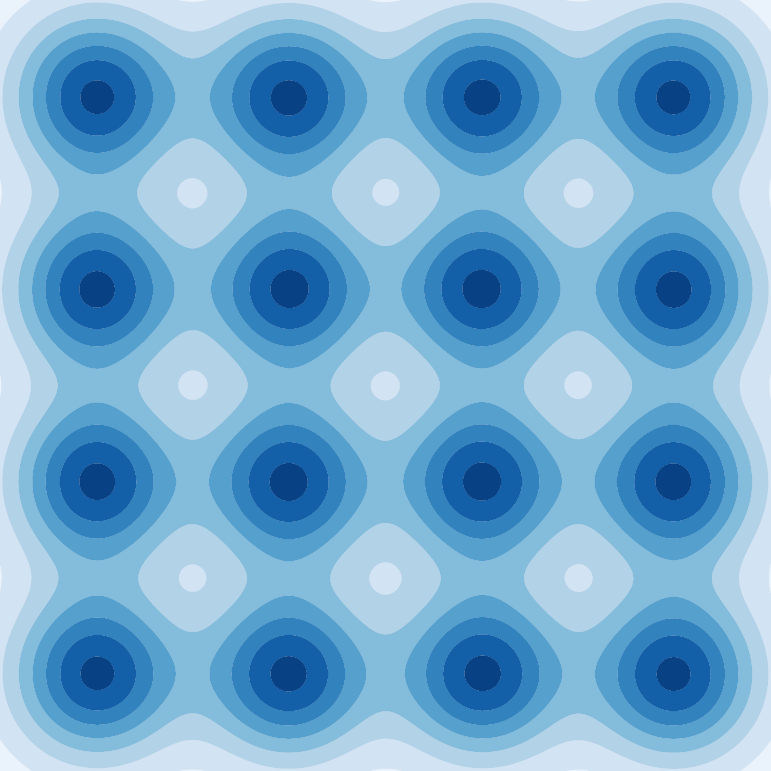}
\caption{Ground truth in the Gaussian mixture experiment.} \label{fig4}
\end{figure}

The generator distribution and KL divergence are given at $\{$2k, 4k, 6k, 8k$\}$ iterations for LA, LOLA and SGA in Figure \ref{fig5}. LOLA and SGA successfully spread mass across all Gaussians. LookAhead displays mode collapse and hopping in early stages, but begins to incorporate further mixtures near $8k$ iterations. We ran further iterations and discovered that LookAhead eventually spreads mass across all mixtures, though very slowly. Comparing with results for NL/CO/SOS in the main text, we see that CO/SOS/LOLA/SGA are equally successful in qualitative terms.

\begin{figure}[t]
\centering
\begin{minipage}{\textwidth}
\centering
\rotatebox[origin=c]{90}{ } \hfill
	\begin{minipage}{0.225\textwidth}
		\hspace{2pt}
		\centering
        Iteration = 2k
	\end{minipage}\hfill
	\begin{minipage}{0.225\textwidth}
        \centering
        4k
	\end{minipage}\hfill
	\begin{minipage}{0.225\textwidth}
        \hspace{-6pt}
        \centering
        6k
	\end{minipage}\hfill
	\begin{minipage}{0.225\textwidth}
		\hspace{-12pt}
        \centering
        8k
	\end{minipage}\hfill
		\centering
		\rotatebox[origin=c]{270}{ } \\[2pt]
        \centering
		\rotatebox[origin=c]{90}{LA}
	\begin{minipage}{0.235\textwidth}
        \centering
        \includegraphics[height = 3.1cm]{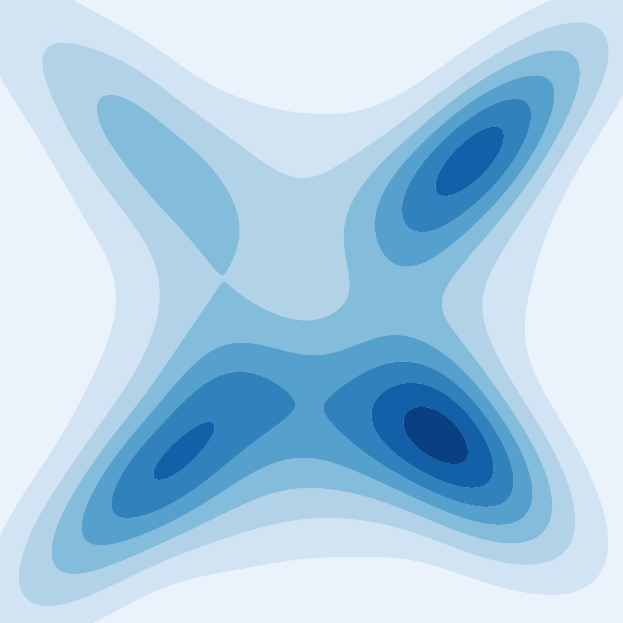}
	\end{minipage}\hfill
	\begin{minipage}{0.235\textwidth}
        \centering
        \includegraphics[height = 3.1cm]{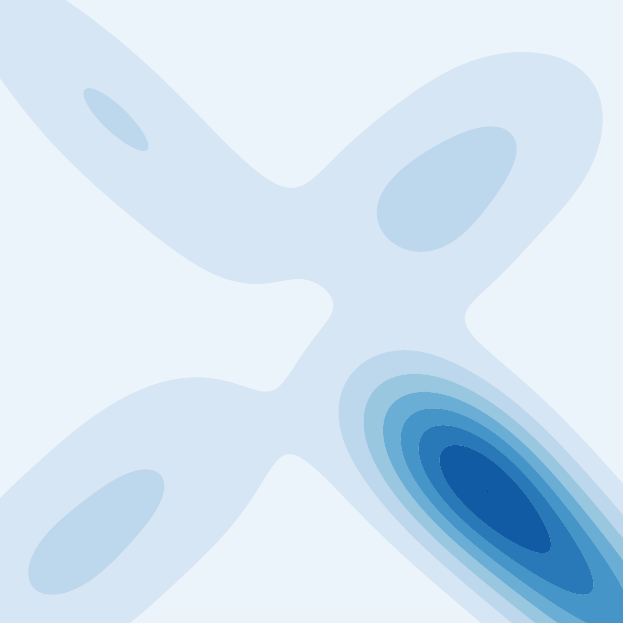}
	\end{minipage}\hfill
	\begin{minipage}{0.235\textwidth}
        \centering
        \includegraphics[height = 3.1cm]{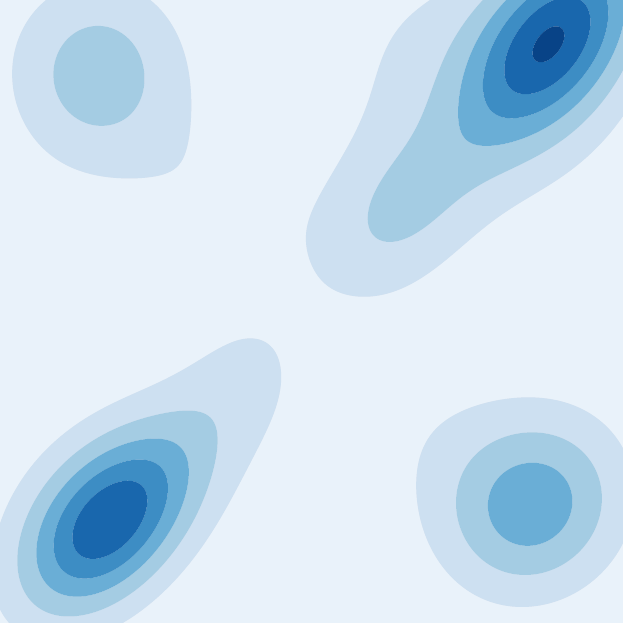}
	\end{minipage}\hfill
	\begin{minipage}{0.235\textwidth}
        \centering
        \includegraphics[height = 3.1cm]{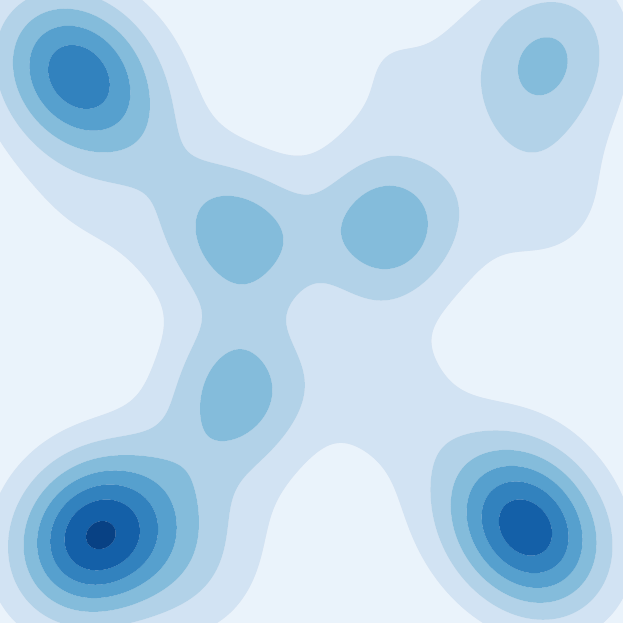}
	\end{minipage}\hfill
    	\centering
		\rotatebox[origin=c]{270}{$\al = 9e-5$} \\[2pt]
		\rotatebox[origin=c]{90}{ } \hfill
			\begin{minipage}{0.225\textwidth}
				\hspace{2pt}
				\centering
		        $D_{KL} = 0.84$
			\end{minipage}\hfill
			\begin{minipage}{0.225\textwidth}
		        \centering
		        $1.19$
			\end{minipage}\hfill
			\begin{minipage}{0.225\textwidth}
		        \hspace{-6pt}
		        \centering
		        $1.01$
			\end{minipage}\hfill
			\begin{minipage}{0.225\textwidth}
				\hspace{-12pt}
		        \centering
		        $0.78$
			\end{minipage}\hfill
				\centering
				\rotatebox[origin=c]{270}{ } \\[3pt]
        \centering
		\rotatebox[origin=c]{90}{LOLA}
	\begin{minipage}{0.235\textwidth}
        \centering
        \includegraphics[height = 3.1cm]{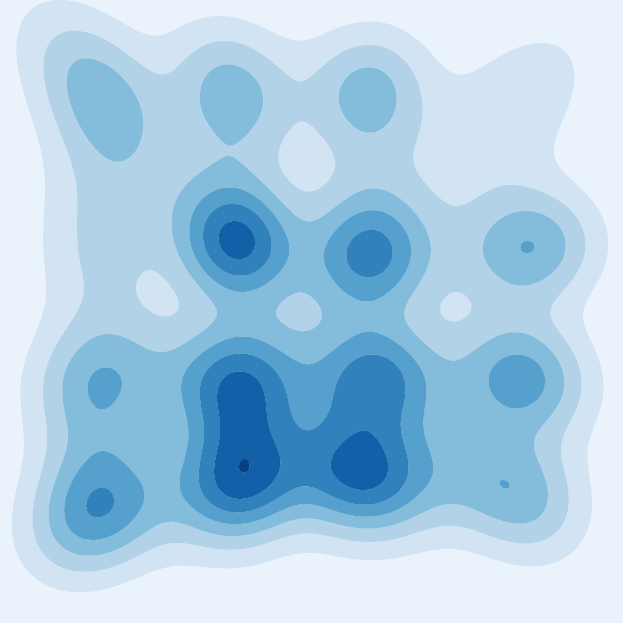}
	\end{minipage}\hfill
	\begin{minipage}{0.235\textwidth}
        \centering
        \includegraphics[height = 3.1cm]{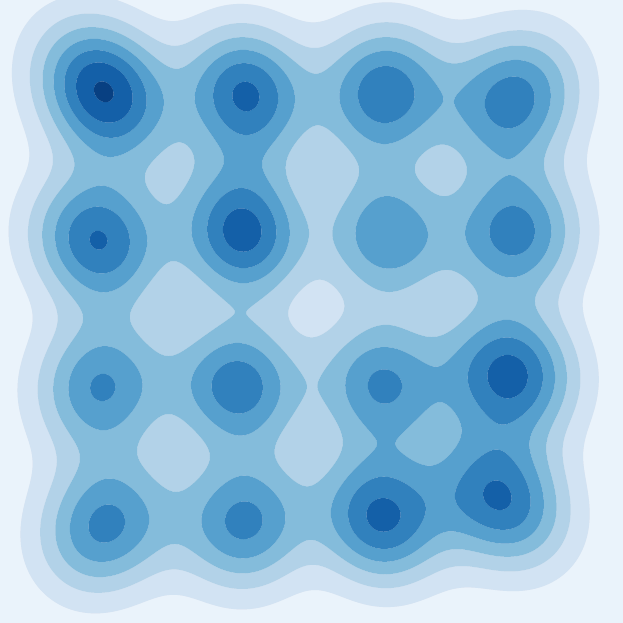}
	\end{minipage}\hfill
	\begin{minipage}{0.235\textwidth}
        \centering
        \includegraphics[height = 3.1cm]{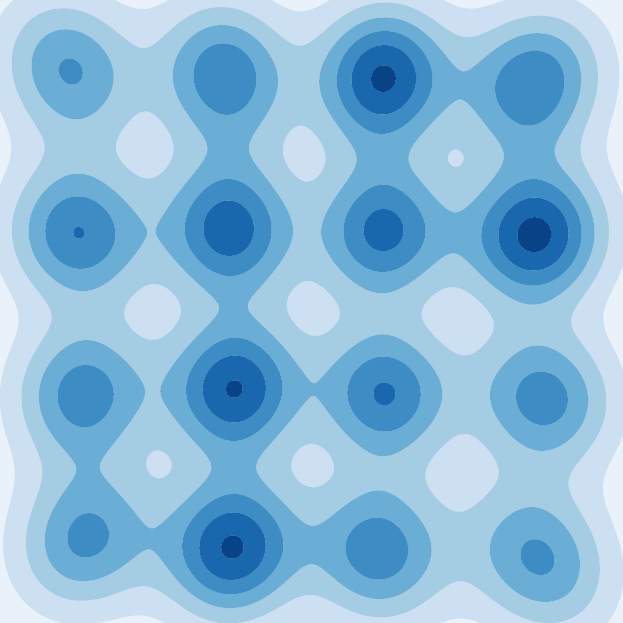}
	\end{minipage}\hfill
	\begin{minipage}{0.235\textwidth}
        \centering
        \includegraphics[height = 3.1cm]{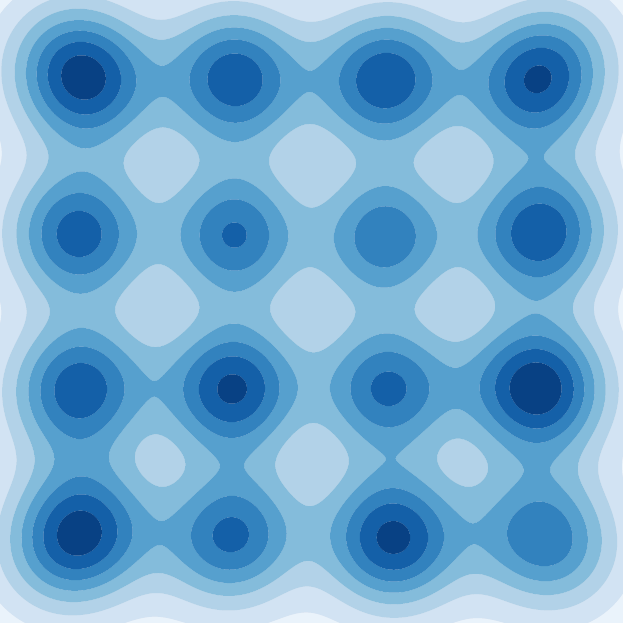}
	\end{minipage}\hfill
    	\centering
		\rotatebox[origin=c]{270}{$\al = 2e-4$} \\[2pt]
		\rotatebox[origin=c]{90}{ } \hfill
			\begin{minipage}{0.225\textwidth}
				\hspace{2pt}
				\centering
		        $D_{KL} = \mathbf{0.30}$
			\end{minipage}\hfill
			\begin{minipage}{0.225\textwidth}
		        \centering
		        $\mathbf{0.11}$
			\end{minipage}\hfill
			\begin{minipage}{0.225\textwidth}
		        \hspace{-6pt}
		        \centering
		        $\mathbf{0.09}$
			\end{minipage}\hfill
			\begin{minipage}{0.225\textwidth}
				\hspace{-12pt}
		        \centering
		        $\mathbf{0.07}$
			\end{minipage}\hfill
				\centering
				\rotatebox[origin=c]{270}{ } \\[3pt]
        \centering
		\rotatebox[origin=c]{90}{SGA} 
	\begin{minipage}{0.235\textwidth}
        \centering
        \includegraphics[height = 3.1cm]{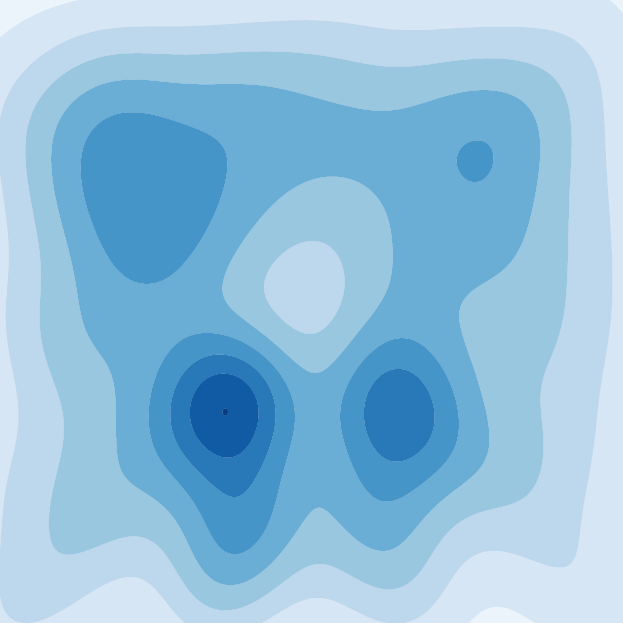}
	\end{minipage}\hfill
	\begin{minipage}{0.235\textwidth}
        \centering
        \includegraphics[height = 3.1cm]{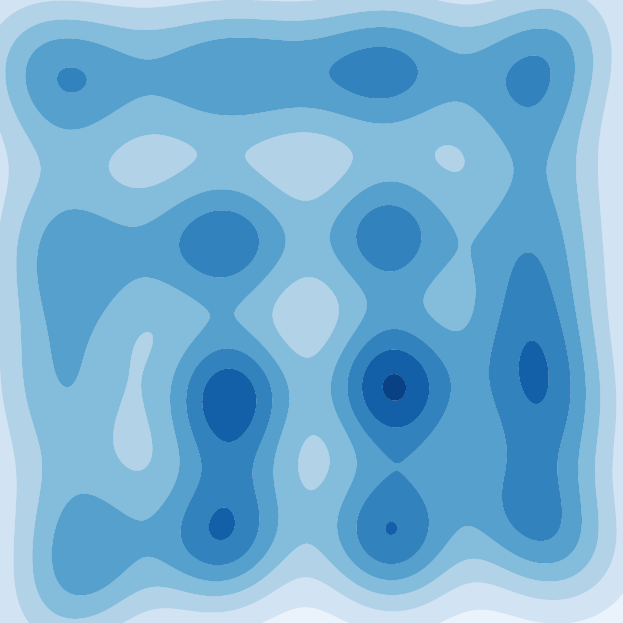}
	\end{minipage}\hfill
	\begin{minipage}{0.235\textwidth}
        \centering
        \includegraphics[height = 3.1cm]{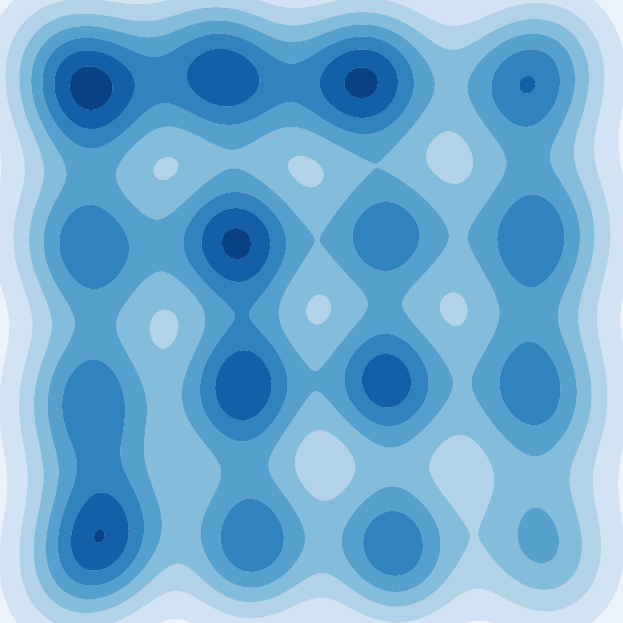}
	\end{minipage}\hfill
	\begin{minipage}{0.235\textwidth}
        \centering
        \includegraphics[height = 3.1cm]{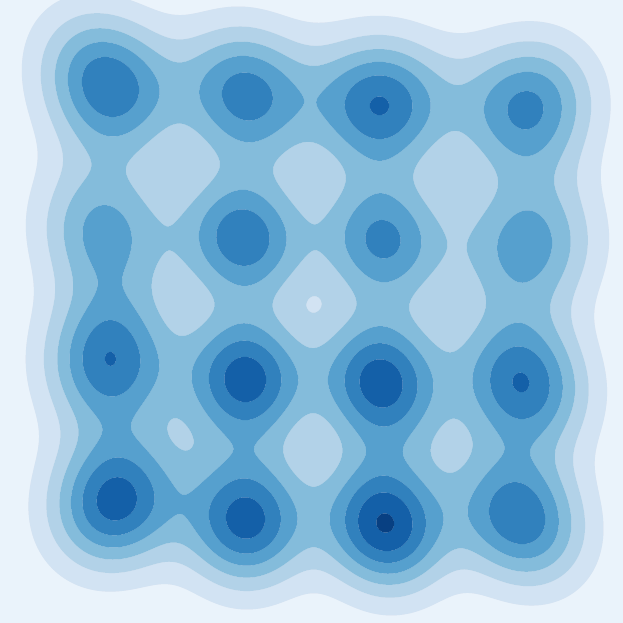}
	\end{minipage}\hfill
    	\centering
		\rotatebox[origin=c]{270}{$\al = 1e-4$} \\[2pt]
		\rotatebox[origin=c]{90}{ } \hfill
			\begin{minipage}{0.225\textwidth}
				\hspace{2pt}
				\centering
		        $D_{KL} = 0.63$
			\end{minipage}\hfill
			\begin{minipage}{0.225\textwidth}
		        \centering
		        $0.32$
			\end{minipage}\hfill
			\begin{minipage}{0.225\textwidth}
		        \hspace{-6pt}
		        \centering
		        $0.19$
			\end{minipage}\hfill
			\begin{minipage}{0.225\textwidth}
				\hspace{-12pt}
		        \centering
		        $0.12$
			\end{minipage}\hfill
				\centering
				\rotatebox[origin=c]{270}{ }
\end{minipage}
\caption{Generator distribution at sampled iterations for LA/LOLA/SGA. LA suffers in the early stages from mode collapse and hopping, but incorporates more mixtures later on. LOLA and SGA learn the correct mixture of Gaussians. Below each plot: KL divergence $D_{KL}(P \mid \mid Q)$ from generator $P$ to ground truth $Q$, estimated from 25600 samples. Best result at each iteration in bold.} \label{fig5}
\end{figure}

Note that SOS/CO are slightly superior with respect to KL divergence after 6-8k iterations, though LOLA is initially faster. This may be due only to random sampling. We also noticed experimentally that LOLA often moves \textit{away} from the correct distribution after 8-10k iterations (not shown), while SOS stays stable in the long run. This may occur thanks to the two-part criterion encouraging convergence, while LOLA continually attempts to exploit opponent learning.

Finally we plot $\lVert \xi \rVert$ at all iterations up to $12k$ for SOS, LA and NL in Figure \ref{fig6} (other algorithms are omitted for visibility). This gives further evidence of SOS converging quite rapidly to the correct distribution, while NL perpetually suffers from mode hopping and LA lags behind significantly.

\begin{figure}[t]
\centering
\includegraphics[width=\textwidth]{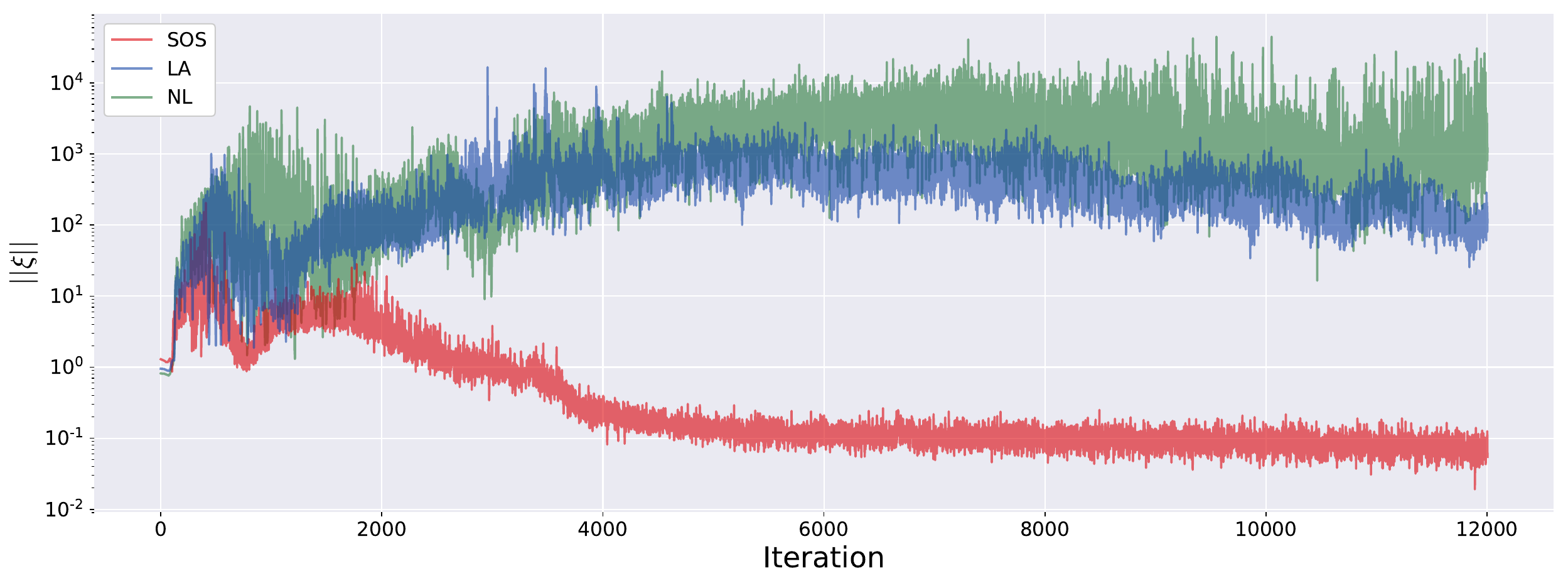}
\caption{Semilog plot of $\lVert \xi \rVert$ at each iteration for SOS, LA and NL.} \label{fig6}
\end{figure}

\end{document}